\crefname{hypothesis}{Hypothesis}{Hypotheses}
\DeclareMathOperator{\diag}{diag}
\setlist[enumerate]{leftmargin=.5in}
\setlist[itemize]{leftmargin=.5in}
\title{Improved recovery guarantees and sampling strategies for TV minimization in compressive imaging \thanks{Submitted to the editors DATE.
\funding{ND acknowledges the support of the PIMS Postdoctoral Fellowship program. This works was supported by the PIMS CRG ``High-dimensional Data
Analysis'', SFU's Big Data Initiative ``Next Big Question" Fund and NSERC
through grant R611675. }}}
\author{
Ben Adcock, Nick Dexter and Qinghong Xu\thanks{Simon Fraser University, 8888 University Drive
Burnaby, BC V5A 1S6, Canada (\email{ben\_adcock@sfu.ca}, \email{nicholas\_dexter@sfu.ca}, \email{qinghong\_xu@sfu.ca})}
}
\begin{document}

\maketitle

\begin{abstract}
In this paper, we consider the use of Total Variation (TV) minimization for compressive imaging; that is, image reconstruction from subsampled measurements. Focusing on two important imaging modalities -- namely, Fourier imaging and structured binary imaging via the Walsh--Hadamard transform -- we derive uniform recovery guarantees asserting stable and robust recovery for arbitrary random sampling strategies. Using this, we then derive a class of theoretically-optimal sampling strategies. For Fourier sampling, we show recovery of an image with approximately $s$-sparse gradient from $m \gtrsim_d  s \cdot \log^2(s) \cdot \log^4(N)$ measurements, in $d \geq 1$ dimensions. When $d = 2$, this improves the current state-of-the-art result by a factor of $\log(s) \cdot \log(N)$. It also extends it to arbitrary dimensions $d \geq 2$. For Walsh sampling, we prove that $m \gtrsim_d s \cdot \log^2(s) \cdot \log^2(N/s) \cdot \log^3(N) $ measurements suffice in $d \geq 2$ dimensions. To the best of our knowledge, this is the first recovery guarantee for structured binary sampling with TV minimization.
\end{abstract}

\begin{keywords}
compressive imaging, TV minimization, Fourier imaging, binary imaging, sampling strategies 
\end{keywords}

\begin{AMS}
94A08, 94A20, 68U10, 68Q25
\end{AMS}

\section{Introduction}

Total Variation (TV) minimization is an important technique in modern image processing \cite{ChambolleEtAlTVImage,ChamPockAN}, with a wide range of applications including denoising, deblurring and reconstruction. In this paper, we consider the latter problem. Specifically, given noisy, linear measurements $y = A x + e \in \bbC^m$ of an unknown $d$-dimensional image $x \in \bbC^{N^d}$, we study its reconstruction via the constrained TV minimization problem
\be{
\label{TVminintro}
\min_{z \in \bbC^{N^d}} \nm{z}_{\mathrm{TV}}\ \mbox{subject to $\nm{A z - y}_{\ell^2} \leq \eta$},
}
where $\nm{\cdot}_{\mathrm{TV}}$ is the TV semi-norm. Natural images have approximately sparse gradients. As is now well known, minimizing the TV semi-norm promotes this structure, often leading to high-quality reconstructions from a relatively small number of measurements. TV minimization has proved an extremely effective tool for image reconstruction, with many applications in medical, scientific and industrial modalities.

A fundamental issue in image reconstruction is choosing a measurement matrix $A$. The main goal of so-called \textit{compressive imaging} is to choose $A$ so as to deliver high-quality reconstructions from as few measurements $m$ as possible. Generally speaking, the possible choices are dictated by the physical sensing apparatus. In this paper, we consider two important image acquisition protocols: namely, Fourier sampling with the discrete Fourier transform and binary sampling via the Walsh--Hadamard transform. Arguably, these are two out of the three most important types of sampling encountered in imaging -- the other being the Radon transform. Fourier sampling arises in numerous applications, including Magnetic Resonance Imaging (MRI), Nuclear Magnetic Resonance (NMR) and radio interferometry, while binary sampling arises in numerous optical imaging modalities, such as lensless imaging, infrared imaging holography, fluorescence microscopy and so forth.

Once the acquisition protocol has been fixed, the task of selecting measurements reduces to that of designing a \textit{sampling strategy}, i.e.\ a specific choice of $m$ Fourier or Walsh frequencies to sample. The main objective of this paper is to develop sampling strategies for TV minimization in these scenarios. In tandem, we also derive sufficient conditions on the number of measurements $m$ under which the underlying image is accurately recovered via \eqref{TVminintro}. We do this by leveraging the theory of compressed sensing \cite{FoucartRauhutCSbook} to prove new recovery guarantees for TV minimization which relate the number of measurements $m$ to the approximate gradient sparsity $s$ of the underlying image.

\subsection{Previous work}

TV minimization was studied in some of the first papers on compressed sensing. In \cite{CandesRombergTao}, Cand\`es, Romberg \& Tao considered the recovery of a one-dimensional image $x \in \bbC^N$ with exactly $s$-sparse gradient from $m$ noiseless Fourier measurements taken uniformly and randomly. They showed that $x$ could be recovered exactly by solving \eqref{TVminintro} with $\eta =0$ with high probability, provided the number of measurements
$
m \gtrsim s \cdot  \log(N).
$
The first results asserting recovery for approximately sparse images from noisy measurements were shown by Needell \& Ward for the two-dimensional case in \cite{NeedellWardTV1}, and later for the $d$-dimensional case in \cite{NeedellWardTV2}. In particular, these works were the first to exploit (in the compressed sensing context) the important connection between the TV semi-norm and Haar wavelet coefficients. Neither of these works pertained directly to Fourier sampling. The first results on Fourier sampling were shown by Krahmer \& Ward \cite{KrahmerWardCSImaging} and Poon \cite{PoonTV}.  In the former, uniform recovery guarantees\footnote{In compressed sensing, a \textit{uniform} recovery guarantee states that a single random draw of a given measurement matrix suffices for recovery of all (approximately) sparse vectors. This is stronger than a nonuniform recovery guarantee, which asserts that a single random draw is sufficient for recovery of a fixed vector.} were shown for two-dimensional images from noisy Fourier measurements, with frequencies chosen randomly according to an \textit{inverse square law} density. Specifically, if
\be{
\label{KWmeascondintro}
m \gtrsim s \cdot \log^3(s) \cdot \log^5(N),
}
then with high probability, the recovered vector $\hat{x}$ satisfies
\be{
\label{KWrecintro}
\nm{x - \hat{x}}_{\ell^2} \lesssim \frac{\sigma_s(\nabla x)_{\ell^1}}{\sqrt{s}} + \eta,\qquad \sigma_{s}(\nabla x)_{\ell^1} = \min \{ \nm{\nabla x  - z }_{\ell^1} : \mbox{$z \in \bbC^{N^2}$ is $s$-sparse} \},
}
where $\eta$ is an upper bound on a certain weighted $\ell^2$-norm of the noise term $e$. Conversely, \cite{PoonTV} established nonuniform recovery guarantees in the one- and two-dimensional cases for both uniform random sampling and variable density sampling. Amongst other features,  \cite{PoonTV} was the first to prove results demonstrating the benefits of variable density sampling: namely, while both uniform random and variable density sampling recover the image gradient accurately, the latter leads to better recovery of the image itself. In comparison with \eqref{KWmeascondintro}--\eqref{KWrecintro}, in the two-dimensional case \cite{PoonTV} showed that if
\be{
\label{Poonmeascondintro}
m \gtrsim s \cdot \log(N),
}
Fourier samples were drawn using a combination of uniform random and inverse square law sampling, then, with high probability,
\be{
\label{Poonrecintro}
\nm{x - \hat{x}}_{\ell^2} \lesssim \log(s) \cdot \log(N^2/s) \log^{1/2}(N) \log^{1/2}(m) \left ( \log^{1/2}(m) \log(s) \frac{\sigma_{s}(\nabla x)_{\ell^{2,1}} }{\sqrt{s}} + \eta \right ),
}
where $\eta$ is an upper bound for the (unweighted) $\ell^2$-norm of the noise (the appearance of $\sigma(\cdot)_{\ell^{2,1}}$ here indicates that \cite{PoonTV} considered the isotropic TV norm, whereas \cite{KrahmerWardCSImaging} considered the anisotropic TV norm -- see later). In particular, this approach leads to a better measurement condition \eqref{Poonmeascondintro} than the measurement condition \eqref{KWmeascondintro}, but a correspondingly worse error bound \eqref{Poonrecintro} over \eqref{KWrecintro}.

\subsection{Contributions}

The above results of \cite{KrahmerWardCSImaging} and \cite{PoonTV} represent the state-of-the-art recovery guarantees for TV minimization in compressed sensing with Fourier sampling. In this paper we improve and generalize these results in the following ways:

{\bf 1.}\ We derive recovery guarantees in $d \geq 1$ dimensions, as opposed to $d=2$ in \cite{KrahmerWardCSImaging} and $d=1,2$ in \cite{PoonTV}. We consider both the isotropic (like in \cite{PoonTV}) and anisotropic (like in \cite{KrahmerWardCSImaging}) TV semi-norms. Also as in \cite{PoonTV} we examine both uniform random and variable density sampling.

{\bf 2.}\ As in \cite{KrahmerWardCSImaging}, our recovery guarantees are uniform, and when $d \geq 2$ they take the form
\be{
\label{introerrbd1}
\nm{x - \hat{x}}_{\ell^2} \lesssim_d \frac{\sigma_s(\nabla x)_{\ell^1}}{\sqrt{s}} + \sqrt{\log(N)} \eta,
}
for variable density sampling. Unlike \cite{KrahmerWardCSImaging}, we do not impose a weighted norm on the noise vector. This gives rise to the $\sqrt{\log(N)}$ factor in \eqref{introerrbd1}. As in \cite{PoonTV}, we also derive error bounds for the recovery of the image gradient $\nabla x$.

{\bf 3.}\ Unlike \cite{KrahmerWardCSImaging,PoonTV} we derive a recovery guarantee for arbitrary variable density sampling schemes in order to examine the effect of the sampling scheme on the measurement condition. 

{\bf 4.}\ We derive theoretically-optimal variable density sampling schemes in $d \geq 1$ dimensions. For such schemes, our measurement condition is
\bes{
m \gtrsim_{d} s \cdot \log^2(s) \cdot \log^4(N).
}
In particular, for the $d = 2$ case, we improve the current state-of-the-art measurement condition \eqref{KWmeascondintro} for uniform recovery by a factor of $\log(s) \cdot \log(N)$. When $d = 2$ we show that the inverse square law scheme of \cite{KrahmerWardCSImaging,PoonTV} is an instance of a theoretically-optimal scheme.

{\bf 5.}\ Interestingly, we show that the theoretically-optimal Fourier sampling scheme ceases to be radially-symmetric in $d \geq 3$ dimensions. We also derive a near-optimal sampling scheme based on so-called \textit{hyperbolic cross} sampling densities.

{\bf 6.}\ Finally, unlike \cite{KrahmerWardCSImaging,PoonTV} we also consider binary sampling with the Walsh--Hadamard transform. In this case, we prove a recovery guarantee of the form
\bes{
\nm{x - \hat{x}}_{\ell^2} \lesssim_d \frac{\sigma_s(\nabla x)_{\ell^1}}{\sqrt{s \log(N)}} + \sqrt{\log(N)} \eta,
}
and derive theoretically-optimal variable density sampling strategies for which the measurement condition reads
\bes{
m \gtrsim_{d} s\cdot \log^2(s) \cdot \log^2(N/s) \cdot \log^3(N) .
}
Unlike in the Fourier case, we show that certain radially-symmetric sampling schemes are theoretically optimal in any dimension for Walsh sampling. To the best of our knowledge, this is the first recovery guarantee for TV minimization with structured binary sampling. For results on binary sampling with wavelet sparsifying transforms, see \cite{AAHWalshWavelet,MoshtThesis,MoshtEtAlHadamardHaar}.

Note that our focus in this paper is on Fourier and Walsh sampling, since these acquisition protocols arise in many practical imaging settings. Although common in compressed sensing, we do not consider sampling with Gaussian or Bernoulli random matrices. These are generally infeasible for imaging, since they lead to dense, unstructured matrices. Moreover, even if they were, it is well known that they are highly suboptimal for imaging, being significantly outperformed by structured Fourier and Walsh sampling \cite{AHPRBreaking,BASBMKRCSwavelet,AsymptoticCS}. For recovery guarantees for TV minimization from Gaussian or Bernoulli measurements, see \cite{CaiXuTVGauss,KrahmerEtAlTVCS}.

\vspace{-1mm}
\subsection{Structure dependence}\label{s:limitations} 
Similar to \cite{KrahmerWardCSImaging,PoonTV}, the sampling schemes we develop in this paper are independent of the image (or class of images) being recovered. In particular, they exploit only the sparsity of $\nabla x$ and no further local, or geometric, properties of the edges of $x$. As has been well documented \cite{AHPRBreaking,PoonTV,AsymptoticCS}, optimal sampling strategies in practice should also take local properties into account: roughly speaking, an image with well separated edges should be sampled more densely at low frequencies than an image with edges that lie close to each other, even when the two images possess the same gradient sparsity. In the case of sparsity in orthonormal wavelets, it is well understood (from a theoretical and practical perspective) how to design sampling strategies that exploit such local structure \cite{AAHWalshWavelet,BASBMKRCSwavelet,AHPRBreaking,BastounisHansen,LiAdcockRIP}. Yet, this is not well understood for gradient sparsity. We shall not attempt to tackle this problem, although we do discuss it in the context of our numerical examples. We refer to \cite{ChambolleEtAlTVDenoise,PoonTV} for some further discussion on this topic. Nonetheless, as we show in our examples, good all-round performance across a range of images, resolutions and sampling percentages can be achieved with an (image independent) \textit{multilevel random sampling} strategy. This scheme was originally developed for wavelet sparsifying transforms in  \cite{AHPRBreaking}. We show that it also achieves similarly good performance for TV minimization.

\subsection{Outline}
We begin in \S \ref{s:prelims} with preliminaries. We state our main results for Fourier and Walsh sampling in \S \ref{s:Fourmain}--\S \ref{s:Foursamp} and \S \ref{s:BeyondFourierTV} respectively. In \S \ref{s:experiments} we present several numerical experiments. Finally, in \S \ref{s:proofsFourier}--\S\ref{s:proofsII} we give the proofs of the main results. The Supplementary Material contains some supporting material and proofs of several of the minor results.

\section{Preliminaries}\label{s:prelims}

We first introduce some notation and background material.

\subsection{Notation}

We denote the $\ell^p$-norm on $\bbC^N$ by $\nm{\cdot}_{\ell^p}$ and the $\ell^2$-inner product by $\ip{\cdot}{\cdot}$. For $1 \leq p ,q < \infty$, we define the $\ell^{p,q}$-norm on $\bbC^{N \times M}$ as
\bes{
\nm{X}_{\ell^{p,q}} = \left ( \sum^{N}_{i=1} \left ( \sum^{M}_{j=1} |x_{ij} |^p \right )^{q/p} \right )^{1/q},\qquad X = (x_{ij})^{N,M}_{i,j=1}.
}
Note that $\nm{X}_{\ell^{2,2}} = \nm{X}_{F}$ is the Frobenius norm of $F$.
We define the $\ell^0$-norm of a vector $x = (x_i)^{N}_{i=1}$ as $\nm{x}_{\ell^0} = | \mathrm{supp}(x) |$, where $\mathrm{supp}(x) = \{ i : x_i \neq 0\}$ is the support of $x$. For a matrix $X = (x_{ij})^{N,M}_{i,j=1} \in \bbC^{N \times M}$ we define the $\ell^{2,0}$-norm as
\bes{
\nm{X}_{\ell^{2,0}} = | \mathrm{supp}(X) |,\qquad \mathrm{supp}(X) = \left \{ i : \sum^{M}_{j=1} |x_{ij}|^2 \neq 0 \right \} .
}
Given a subset $\Delta \subseteq \{1,\ldots,N\}$, we write $P_{\Delta} \in \bbC^{N \times N}$ for the diagonal matrix corresponding to the orthogonal projection with range $\mathrm{span} \{ e_i : i \in \Delta \}$, where $\{ e_i \}^{N}_{i=1}$ is the canonical basis of $\bbC^N$. Note that for $x \in \bbC^N$, the vector $P_{\Delta} x$ is isometrically isomorphic to a vector in $\bbC^{|\Delta|}$, and similarly for $X \in \bbC^{N \times M}$, $P_{\Delta} X$ is isomorphic to an element of $\bbC^{|\Delta| \times M}$. On occasion we therefore slightly abuse notation and consider  $P_{\Delta} x$ as an element of $\bbC^{|\Delta|}$ or $P_{\Delta} X$ as an element of $\bbC^{|\Delta| \times N}$.

We write $C >0$ for a numerical constant and $C_x > 0$ for a constant depending on a variable $x$. We use the notation $a \lesssim b$ to mean there exists $C>0$ such that $a \leq C b$, and likewise for the symbol $\gtrsim$. We also write $a \lesssim_x b$ when $a \leq C_x b$ for some $C_x > 0$ depending on a variable $x$, and likewise for $\gtrsim_x$. We write $a \asymp b$ or $a \asymp_x b$ if $a \lesssim b \lesssim a$ or  $a \lesssim_x b \lesssim_x a$.

\subsection{Discrete images}

We consider discrete, $d$-dimensional complex images
\bes{
X = (X_{i_1,\ldots,i_d})^{N}_{i_1,\ldots,i_d = 1} \in \bbC^{N \times \cdots \times N}, 
}
where $N$ is its \textit{resolution}. The motivation to consider complex images stems primarily from MRI, where the images are often complex.
We assume throughout this paper that $N = 2^r$ is a power of two, where $r \geq 1$.
We often reshape $X$ into a vector using lexicographical ordering. Let $\varsigma:\{1,\ldots,N^d\}\to\{1,\ldots,N\}^d$ be the bijection corresponding to this ordering, defined via its inverse as
\begin{equation*}
\varsigma^{-1}(i_1,\ldots,i_d)=N^{d-1}i_1+N^{d-2}i_2+\ldots+i_d,\qquad (i_1,\ldots,i_d) \in \{1,\ldots,N\}^d.
\end{equation*}
Given $X$, we let $x = (x_i)^{N^d}_{i=1} \in \bbC^{N^d}$ be such that $x_i = X_{\varsigma(i)}$ and write $x = \mathrm{vec}(X)$.

\subsection{The Discrete Fourier Transform and recovery problem}

We order frequency from lowest to highest in absolute value. Define the bijection
\begin{equation}
\label{fourier1}
\varrho:\{1,\ldots,N\}\to\left\{-N/2+1,\ldots,N/2\right\},\ i \mapsto (-1)^i\left\lfloor i/2\right\rfloor.
\end{equation}
With this order, we define the one-dimensional Discrete Fourier Transform (DFT) matrix $F = F^{(1)} \in \bbC^{N \times N}$ as
\bes{
F_{ij} = \exp(-2 \pi \mathrm{i}\varrho(i)(j-1)/N ), \quad i,j = 1,\ldots,N,
}
(this differs from the usual DFT matrix by a row permutation and diagonal scaling, but is beneficial for our purposes as it orders frequencies from lowest to highest).

The $d$-dimensional DFT $F = F^{(d)} \in \bbC^{N^d \times N^d}$ is given by $F^{(d)} = F^{(1)} \otimes \cdots \otimes F^{(1)}$, where $\otimes$ denotes the Kronecker product. Notice that $N^{-d} F^* F = I$ is the identity matrix. 
The rows of $F^{(d)}$ correspond to the $d$-dimensional frequency space $\{-N/2+1,\ldots,N/2\}^d$. Specifically, let $\varrho = \varrho^{(d)} : \{1,\ldots,N^d\} \rightarrow \{-N/2+1,\ldots,N/2\}^d$ be the bijection defined by
\be{
\label{fourierd}
\varrho^{(d)}(i)  =  (\varrho(\varsigma(i)_1),\ldots,\varrho(\varsigma(i)_d)),\qquad i \in \{1,\ldots,N^d\},
}
where $\varsigma$ is the lexicographical ordering and $\varrho$ is the one-dimensional bijection \eqref{fourier1}. Then the $i^{\rth}$ row of $F^{(d)}$ corresponds to the frequency $\omega = \varrho(i)$.

In the first part of this paper, we consider the problem of recovering a vectorized image $x$ from $m$ of its Fourier frequencies. The choice of frequencies is variously referred to as a \textit{sampling scheme}, \textit{strategy}, \textit{map} or \textit{pattern}. We consider two main types of sampling schemes:

\defn{
[Uniform random sampling]
\label{d:unifsamppatt}
A $d$-dimensional \textit{uniform random sampling scheme of order $m$} is a subset of frequencies $\Omega = \{ \omega_1,\ldots,\omega_m \} \subseteq \{ - N/2+1,\ldots,N/2\}^d$ where the $\omega_i$ are chosen independently and uniformly from $\{ - N/2+1,\ldots,N/2\}^d$.
}

\defn{
[Variable density sampling]
\label{d:VDsamppatt}
Let $p= ( p_{\omega} )$ be a probability distribution on $\{ - N/2+1,\ldots,N/2\}^d$.  A $d$-dimensional \textit{variable density sampling scheme of order $m$} is a subset of frequencies $\Omega = \{ \omega_1,\ldots,\omega_m \} \subseteq \{ - N/2+1,\ldots,N/2\}^d$ where the $\omega_i$ are chosen i.i.d.\ according to $p$.  
}

Let $\Omega$ be given by one of these schemes. With slight abuse of notation, write $P_{\Omega} \in \bbC^{N^d \times N^d}$ for the orthogonal projection onto the indices in $\Omega$ (technically, this should be $P_{\varrho^{-1}(\Omega)}$ with $\varrho$ as in \eqref{fourierd}). Then we write $A = \frac{1}{\sqrt{m}} P_{\Omega} F \in \bbC^{m \times N^d}$ for the corresponding measurement matrix. This is an example of a subsampled DFT matrix: the vector $A x$ consists of the $m$ frequency values of the vectorized image $x$ from the set $\Omega$. We assume these values are also corrupted by noise, giving the vector of measurements
\bes{
y = A x + e \in \bbC^m,
}
where $e \in \bbC^m$ is a noise vector. With this in hand, the recovery problem we aim to solve is the following: \textit{given $y = A x + e$, recover $x$.}

\subsection{The Discrete Walsh--Hadamard Transform and recovery problem}\label{ss:DWT}

We now define the Discrete Walsh--Hadamard Transform. Recall that the one-dimensional (sequency-ordered) Walsh functions on $[0,1)$ are defined by
\bes{
v_n(x) = (-1)^{\sum^{\infty}_{i=1} (n_i + n_{i+1}) x_i},\qquad 0 \leq x < 1,\quad n \in \bbN_0,
}
where $(n_i)_{i \in \bbN} \in \{0,1\}^{\bbN}$ and $(x_i)_{i \in \bbN} \in \{0,1\}^{\bbN}$ are the dyadic expansions of $n$ and $x$ respectively (see \cite{Antun16,GaussWalsh,GolubovEtAlWalsh} and references therein for further information on Walsh functions). The number $n \in \bbN_0$ is the \textit{sequency} (number of sign changes) of the Walsh function; it is therefore analogous to the Fourier frequency.
The functions $v_n$ take values in $\{+1,-1\}$ and form an orthonormal basis of $L^2([0,1))$. When $N = 2^r$, the \textit{Discrete Walsh--Hadamard Transform} (DHT) arises by sampling this basis on an equispaced grid in $[0,1)$:
\bes{
H = H^{(1)} = \left ( v_m(n/N) \right )^{N-1}_{m,n=0} \in \{-1,1\}^{N \times N}.
}
Note that other orderings of the Walsh functions (or equivalently the rows of $H$) could be considered here, e.g.\ the Paley or ordinary orderings. The sequency ordering is convenient due to its connection to frequency; we therefore use it throughout.
When $d \geq 2$, we write $H^{(d)}  = H^{(1)} \otimes \cdots \otimes H^{(1)}$ for the $d$-dimensional DHT matrix. Note that in any dimension, $H$ is a symmetric matrix and is orthogonal up to a constant: specifically, $N^{-d} H^{\top} H = I$.

In $d \geq 1$ dimensions, the transform $x \mapsto H x$ computes the discrete Walsh--Hadamard measurements of a vectorized image $x$ corresponding to the frequencies in $\{ 0 ,\ldots,N-1\}^d$. Specifically, let $\varrho : \{1,\ldots,N^d \} \rightarrow \{0,\ldots,N-1\}^d$ be the bijection defined by
\bes{
\varrho(i) = (\varsigma(i)_1 - 1,\ldots,\varsigma(i)_d - 1),\qquad i \in \{1,\ldots,N^d\},
}
where $\varsigma$ is the lexicographical ordering.
Then the $i^{\rth}$ row of $H^{(d)}$ corresponds to the Walsh frequency $n = \varrho(i)$ with $i^{\rth}$ entry of $H x$ being the Walsh frequency of $x$.

Similar to the Fourier case, we consider sampling schemes $\Omega = \{ \omega_1,\ldots,\omega_m \} \subseteq \{0,\ldots,N-1\}^d$. Uniform random and variable density sampling schemes are all defined in the analogous manner, with the notable difference that in Walsh--Hadamard sampling the frequencies are nonnegative numbers only, as opposed to arbitrary integers. As in the Fourier case, we write $A = \frac{1}{\sqrt{m}} P_{\Omega} H \in \bbR^{m \times N^d}$ for the subsampled DHT matrix. Hence the noisy measurements are given by $y = A x + e \in \bbR^m$ and the recovery problem is to recover $x$ from $y$.

\subsection{Gradient operators and TV semi-norms}

We consider periodic gradient operators. The one-dimensional discrete gradient operator $\nabla:\mathbb{C}^N\to\mathbb{C}^N$ is defined by
\bes{
(\nabla x)_i=x_{i+1}-x_i,\quad i = 1,\ldots,N,
}
where $x=(x_i)_{i=1}^N$ and $x_{N+1}=x_1$. The one-dimensional \textit{Total Variation} semi-norm $\nm{\cdot}_{\mathrm{\mathrm{TV}}}$ is $\nm{x}_{\mathrm{\mathrm{TV}}}=\nm{\nabla x}_{\ell^1}$. Note that $\nabla$ is the circulant matrix generated by the vector $(-1,0\ldots,0,1)$.

In $d$ dimensions, we define the $j^{\rth}$ partial derivative operator $\nabla_j: \mathbb{C}^{N^d}\to \mathbb{C}^{N^d}$ as 
\begin{equation*}
\nabla_j=\underbrace{I\otimes \cdots \otimes I}_{d-j}\otimes\nabla \otimes \underbrace{I\otimes \cdots \otimes I}_{j-1},
\end{equation*}
where $\nabla$ is the one-dimensional discrete gradient operator and $I\in\mathbb{C}^{N\times N}$ is the identity matrix. 
When $d \geq 2$, there is more than one way to define the TV semi-norm. We define the $d$-dimensional \textit{isotropic} discrete gradient operator as
\begin{equation}
\label{idgo}
\nabla:\mathbb{C}^{N^d}\to \mathbb{C}^{N^d\times d},\ x \mapsto \nabla x=\left(\begin{array}{cccc} \nabla_1x & \cdots&\nabla_dx \end{array} \right).
\end{equation}
The $d$-dimensional \textit{isotropic} TV semi-norm is 
$
\nm{x}_{\mathrm{\mathrm{TV}}_i}=\nm{\nabla x}_{\ell^{2,1}},
$
where  $\nabla x\in\mathbb{C}^{N^d\times d}$ is as in \eqref{idgo}.
Alternatively, the $d$-dimensional \textit{anisotropic} discrete gradient operator is
\begin{equation}
\label{adgo}
\nabla:\mathbb{C}^{N^d}\to \mathbb{C}^{d N^d},\ x \mapsto \nabla x=\left( \begin{array}{c} \nabla_1x\\   \vdots \\ \nabla_dx \end{array} \right),
\end{equation}
and $d$-dimensional \textit{anisotropic} TV semi-norm is 
$
\nm{x}_{\mathrm{\mathrm{TV}}_a}= \nm{\nabla x}_{\ell^1},
$
where  $\nabla x\in\mathbb{C}^{d N^d}$ is as in \eqref{adgo}.
Notice that these semi-norms are equivalent up to a constant:
\be{
\label{TVairelate}
\nm{x}_{\mathrm{TV}_i} \leq \nm{x}_{\mathrm{TV}_a} \leq \sqrt{d} \nm{x}_{\mathrm{TV}_i} .
}

\subsection{TV minimization problem}

Let $X \in \bbC^{N \times \cdots \times N}$ be an image, $x \in \bbC^{N^d}$ be its vectorization, $A \in \bbC^{m \times N^d}$ be a measurement matrix, as defined above, and $y = A x +e$ be noisy measurements. To recover $x$ from $y$, we consider the constrained TV minimization problem
\be{
\label{TVminprob}
\min_{z \in \bbC^{N^d}} \nm{z}_{\mathrm{\mathrm{TV}}}\ \mbox{subject to $\nm{A z - y}_{\ell^2} \leq \eta$},
}
where $\eta \geq \nm{e}_{\ell^2}$ is an upper bound on the noise level, and $\nm{\cdot}_{\mathrm{\mathrm{TV}}}$ denotes either the isotropic or anisotropic TV norm.  We write $\hat{x} \in \bbC^{N^d}$ for a minimizer of this problem, which is the reconstruction of $x$, and $\widehat{X} \in \bbC^{N \times \cdots \times N}$ for the corresponding reconstruction of $X$.

\subsection{Gradient sparsity and best $s$-term approximation}

In what follows, we derive conditions on $\Omega$ and $m$ under which the error $\nm{x - \hat{x}}_{\ell^2} = \nmu{X - \widehat{X}}_{\ell^{2,2}}$ satisfies a bound depending on the gradient sparsity of the image. To this end, we define the $\ell^1$-norm best $s$-term approximation error of a vector $x \in \bbC^N$ as
\bes{
\sigma_s(x)_{\ell^{1}}=\min\{\nm{x-z}_{\ell^{1}}: \nm{z}_{\ell^{0}}\leq s\}.
}
Similarly, we define the $\ell^{2,1}$-norm best $s$-term approximation error of a matrix $X \in \bbC^{N \times M}$ as
\begin{equation*}
\sigma_s(X)_{\ell^{2,1}}=\min\{\nm{X-Z}_{\ell^{2,1}}: \nm{Z}_{\ell^{2,0}}\leq s\}.
\end{equation*}

\section{Main results on Fourier sampling}\label{s:Fourmain}

We now present our main results on Fourier sampling. We consider Walsh sampling in \S \ref{s:BeyondFourierTV}.

\subsection{Uniform random Fourier sampling}
Based on \cite{PoonTV}, we first consider uniform random Fourier sampling, as in Definition \ref{d:unifsamppatt}. For reasons that will become clear, we separate our results into the $d = 1$ and $d \geq 2$ cases:

\thm{
[Uniform Fourier sampling, one dimension]
\label{t:TVuniform1D}
Let $d = 1$, $0 < \varepsilon < 1$, $2 \leq s , m \leq N$ and $\Omega = \Omega_1 \cup \Omega_2$, where $\Omega_1 \subseteq \{-N/2+1,\ldots,N/2\}$ is a uniform random sampling scheme of order $m-1$ and $\Omega_2 = \{ 0\}$.  Let $A = \frac{1}{\sqrt{m}}P_{\Omega} F\in \bbC^{m \times N}$ and
\bes{
m \gtrsim s \cdot \log(s) \cdot \left ( \log(s) \cdot \log(N) + \log(\varepsilon^{-1}) \right ).
}
Then the following holds with probability at least $1-\varepsilon$.  For all $x \in \bbC^N$ and $y = A x + e \in \bbC^m$, where $\nm{e}_{\ell^2} \leq \eta$ for some $\eta \geq 0$, every minimizer $\hat{x}$ of \eqref{TVminprob} satisfies
\be{
\label{TVUnif1Dgraderr}
\nm{\nabla x - \nabla \hat{x}}_{\ell^2} \lesssim \frac{\sigma_s \left ( \nabla x \right )_{\ell^1}}{\sqrt{s}}  + \eta,\quad 
\nm{x -\hat{x}}_{\mathrm{TV}} \lesssim \sigma_s \left ( \nabla x \right )_{\ell^1}  + \sqrt{s} \eta,
}
and
\be{
\label{TVUnif1Dsigerr}
\frac{\nm{x - \hat{x}}_{\ell^2}}{\sqrt{N}} \lesssim \sigma_s \left ( \nabla x \right )_{\ell^1} + \sqrt{s} \eta.
}
}

\begin{theorem}
[Uniform Fourier sampling, $d \geq 2$ dimensions]
\label{t:TVuniformdD}
Let, $d \geq 2$, $0<\varepsilon<1$, $2\leq s,m\leq N^d$ and $\Omega=\Omega_1\cup\Omega_2$, where $\Omega_1$ is a $d$-dimensional uniform random sampling map of order $m-1$ and $\Omega_2=\{(0,0,\ldots,0)\}$. Let $A=\frac{1}{\sqrt{m}}P_{\Omega}F\in\mathbb{C}^{m\times N^d}$ and 
\begin{equation}
m\gtrsim s\cdot \log(s) \cdot (d\cdot \log(s)\cdot\log(N)+\log(\varepsilon^{-1})),
\end{equation}
Then the following holds with probability at least $1-\varepsilon$.  For all $x \in \bbC^{N^d}$ and $y = A x + e \in \bbC^m$, where $\nm{e}_{\ell^2} \leq \eta$ for some $\eta \geq 0$, every minimizer $\hat{x}$ of \eqref{TVminprob} satisfies
\begin{equation}
\label{uge2}
\nm{\nabla x - \nabla \hat{x}}_{\ell^2} \lesssim \frac{\sigma_s(\nabla x)_{\ell^1}}{\sqrt{s}}+d\eta,\quad \nm{x - \hat{x}}_{\mathrm{\mathrm{TV}}_a} \lesssim \sigma_s(\nabla x)_{\ell^1} +\sqrt{s} d\eta \quad \mathrm{(anisotropic)},
\end{equation}
\begin{equation}
\label{uge22}
\nmu{\nabla\hat{x}-\nabla x}_{\ell^{2,2}}\lesssim \frac{\sigma_s(\nabla x)_{\ell^{2,1}}}{\sqrt{s}}+\sqrt{d}\eta ,\quad \nm{x - \hat{x}}_{\mathrm{\mathrm{TV}}_i} \lesssim \sigma_s(\nabla x)_{\ell^{2,1}} +\sqrt{s}\sqrt{d} \eta
\quad \mathrm{(isotropic)},
\end{equation}
and
\begin{equation}
\label{unia}
\nmu{\hat{x}-x}_{\ell^2}\lesssim 2^{-d/2}\sigma_s(\nabla x)_{\ell^1}+(1+2^{-d/2}\sqrt{s}d)\eta\quad \mathrm{(anisotropic)},
\end{equation}
\begin{equation}
\label{unii}
\nmu{\hat{x}-x}_{\ell^2}\lesssim2^{-d/2}\sqrt{d}\sigma_s(\nabla x)_{\ell^{2,1}}+(1+2^{-d/2}\sqrt{s}d)\eta\quad \mathrm{(isotropic)}.
\end{equation}
\end{theorem}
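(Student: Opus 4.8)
The plan is to follow the Haar-wavelet strategy of Needell \& Ward, adapted to $d$ dimensions and to the fact that the periodic gradient is diagonalized by the Fourier transform. The backbone is the identity $\nabla_j = F^{-1} D_j F$, where $D_j$ is the diagonal matrix of Fourier multipliers $d_j(\omega) = \mathrm{e}^{2\pi \mathrm{i}\, \omega_j/N} - 1$, so that $\widehat{\nabla_j x}(\omega) = d_j(\omega)\, \hat{x}(\omega)$; thus sampling $x$ in frequency is the same as sampling its gradient in frequency against known weights. The goal is therefore to reduce the TV problem to a sparse-recovery problem for the (approximately $s$-sparse) anisotropic gradient $\nabla x \in \bbC^{dN^d}$, recover $\nabla x$ stably to obtain \eqref{uge2}, deduce the isotropic bound \eqref{uge22} from \eqref{uge2} via \eqref{TVairelate}, and finally pass from $\nabla x$ back to $x$ through a Poincar\'e-type inverse inequality, with the coarse scaling (mean) coefficient supplied deterministically by $\Omega_2 = \{(0,\ldots,0)\}$.

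First I would assemble the deterministic TV--Haar dictionary in $d$ dimensions: with $\mathcal{H}$ the orthonormal $d$-dimensional Haar transform, I need (i) an upper inequality showing that gradient sparsity forces level-by-level approximate sparsity of the Haar coefficients, with total $\ell^1$ mass controlled by $\nm{\nabla x}_{\ell^1}$, and (ii) a stable inverse inequality recovering $\nabla z$ from the Haar coefficients of $z$ and, using the separately captured mean, recovering $z$ itself. These are the multivariate analogues of the Cohen--DeVore--Petrushev--Xu estimates. The factor $2^{-d/2}$ in \eqref{unia}--\eqref{unii}, and the $\sqrt{N}$-type loss in the one-dimensional counterpart, arise precisely from the scaling of the coarsest Haar level in step (ii).

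The crux is the restricted isometry property, and it is here that the distinction between gradient and image recovery is decided. The gradient multiplier satisfies $|d_j(\omega)| \asymp |\sin(\pi \omega_j/N)|$, which suppresses low frequencies; this weighting exactly tames the Fourier--Haar coherence, which is large on coarse scales but $O(1)$ on fine wavelet (detail) scales. Consequently, after the diagonalization, recovering the sparse gradient from uniform Fourier samples of $x$ becomes a weighted sparse-recovery problem whose effective measurement operator is a bounded orthonormal system on the wavelet-detail subspace with an absolute boundedness constant. The standard bounded-orthonormal-system RIP theorem then yields RIP of order $\gtrsim s$ with small constant on an event of probability at least $1-\varepsilon$, under the stated count $m \gtrsim s \log(s)\,(d\log(s)\log(N) + \log(\varepsilon^{-1}))$. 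The image bounds degrade because reconstructing $x$ requires dividing by $d_j(\omega)$, which amplifies the poorly sampled low frequencies; the deterministic inclusion $\Omega_2 = \{(0,\ldots,0)\}$ captures the mean exactly and partially compensates, producing the $(1 + 2^{-d/2}\sqrt{s}\,d)$ constants in \eqref{unia}--\eqref{unii}.

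Finally, on the RIP event I would invoke the $\ell^1$-analysis recovery machinery (a robust null-space property for the analysis operator $\nabla$ transported through the Fourier diagonalization) to obtain the gradient and TV bounds \eqref{uge2}, read off \eqref{uge22} from \eqref{TVairelate}, and combine the stable gradient recovery with the Poincar\'e-type inverse inequality and the exactly recovered mean to obtain \eqref{unia}--\eqref{unii}. I expect the main obstacle to be the RIP step: isolating the detail subspace on which the gradient-weighted uniform sampling is incoherent, showing rigorously that the multiplier $d_j(\omega)$ neutralizes the coarse-scale Fourier--Haar coherence across all $d$ partial derivatives simultaneously, and tracking the explicit $d$-dependence and the logarithmic factors through the bounded-orthonormal-system estimate so that the constants degrade only through the stated $d$ and $2^{-d/2}$ factors.
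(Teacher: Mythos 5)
Your overall skeleton --- diagonalize $\nabla_j$ by $F$, recover the gradient first, then pass to the image via a Poincar\'e-type inequality with the mean pinned down by the zero frequency --- is exactly the paper's architecture (Lemmas \ref{l:commuting}, \ref{l:poincare} and \ref{mpi}), and your attribution of the $2^{-d/2}$ factor to the coarsest Haar scale is consistent with the paper's proof of Lemma \ref{l:poincare}. However, the step you yourself flag as the crux --- the RIP --- is wrong as described, and also unnecessary. For the gradient bounds \eqref{uge2} no Fourier--Haar coherence argument is needed at all: $\nabla x$ is (approximately) sparse in the \emph{canonical} basis, and under uniform sampling the plain subsampled DFT $A_1 = \tfrac{1}{\sqrt{m}}P_{\Omega_1}F$ is a bounded orthonormal system with constant $\Theta = 1$ (since $|F_{jk}| = 1$), which already yields the RIP of order $2s$ and hence the rNSP under the stated measurement count. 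The multipliers enter only through the trivial bound $|d_j(\omega)| \leq 2$, used to convert feasibility into $\bigl\| \tfrac{1}{\sqrt{m}}P_{\Omega_1}F\nabla_j(\hat{x}-x) \bigr\|_{\ell^2} \leq 2 \| A_1 (\hat{x}-x) \|_{\ell^2} \leq 4\eta$; this, plus the block-rNSP (Lemma \ref{l:blockArNSP}), the rNSP distance bounds (Lemma \ref{l:rNSPdistance}) and TV-minimality $\|\nabla\hat{x}\|_{\ell^1} \leq \|\nabla x\|_{\ell^1}$, is the entire argument. Your proposed operator $\tfrac{1}{\sqrt{m}}P_{\Omega}\,\mathrm{diag}(d_j(\omega))\,F W$ is \emph{not} a bounded orthonormal system: writing $D_j = \mathrm{diag}(d_j(\omega))$, one has $\tfrac{1}{N^d}(D_j F W)^*(D_j F W) = W^* \nabla_j^* \nabla_j W \neq I$, since Haar wavelets are not eigenvectors of the discrete Laplacian, so the orthonormality hypothesis of the BOS--RIP theorem fails. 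Worse, in $d \geq 2$ the scale-zero detail wavelet with $e_j = 0$ (e.g.\ $\psi^{(1)} \otimes \psi^{(0)}$ when $d = 2$) is constant along the $j$-th coordinate direction and is annihilated by $\nabla_j$; its coefficient vector is $1$-sparse and lies in the kernel of $D_j F W$, so even an RIP of order $1$ is impossible on the detail subspace. The gradient weighting tames the coherence precisely by destroying injectivity; it cannot be renormalized without reintroducing the coarse-scale coherence. (The Fourier--Haar RIP machinery, with a \emph{sampling-density} weighting instead, is what the paper uses for the variable-density results, Lemma \ref{ripfhm}; it is neither needed nor available under uniform sampling.)

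A second, independent gap: the isotropic bound \eqref{uge22} cannot be ``read off'' from \eqref{uge2} via \eqref{TVairelate}. That equivalence compares the two TV semi-norms, but the error bounds involve different best $s$-term errors: $\sigma_s(\nabla x)_{\ell^{2,1}}$ is a best $s$-\emph{row} approximation, while $\sigma_s(\nabla x)_{\ell^1}$ is a best $s$-\emph{entry} approximation, and no inequality $\sigma_s(\nabla x)_{\ell^1} \lesssim_d \sigma_s(\nabla x)_{\ell^{2,1}}$ holds: if $\nabla x$ has exactly $s$ nonzero rows, all of whose entries have modulus one, then $\sigma_s(\nabla x)_{\ell^{2,1}} = 0$ while $\sigma_s(\nabla x)_{\ell^1} = (d-1)s$. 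The paper therefore proves the isotropic bounds separately, through the $\ell^{2,2}$-rNSP (row-sparse matrix recovery) and Lemma \ref{rnsp21}, combined with the same commuting-property bound on the residual. Your final step --- Poincar\'e inequality plus the exactly captured mean from $\Omega_2$ to get \eqref{unia}--\eqref{unii} --- is sound and matches the paper.
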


These results assert recovery of $x$ from roughly $s  \cdot \log^2(s) \cdot \log(N)$ measurements for fixed $d$, i.e.\ linear in $s$ up to the log factors. The gradient error bound in the $\ell^2$-norm (or $\ell^{2,2}$-norm in the case of the anisotropic TV semi-norm) is the typical stable and robust recovery guarantee found ubiquitously in compressed sensing \cite{FoucartRauhutCSbook}. Specifically, the error depends on a best $s$-term approximation error $\sigma_{s}(\nabla x)_{\ell^1} / \sqrt{s}$ (stability) and the noise level $\eta$ (robustness). 

Conversely, the recovery of the image $x$ is worse by a factor of $\sqrt{s}$ than the recovery of its gradient -- compare, for example, \eqref{uge2} with \eqref{unia}. As observed previously in \cite{PoonTV}, this is due to the choice of a uniform random sampling sampling scheme. In the next section we improve the stability and robustness of the image recovery by adding samples drawn from a variable density. We remark in passing that the one-dimensional signal recovery bound \eqref{TVUnif1Dsigerr} involves a factor of $1/\sqrt{N}$. This factor is natural when considering $x$ as the discretization of a continuous image \cite[Rem.\ 2.1]{PoonTV}.

As noted, nonuniform recovery guarantees for uniform random Fourier sampling were shown in \cite{PoonTV}. In one dimension, \cite[Thm.\ 2.3]{PoonTV} asserts that
\bes{
m \gtrsim s \cdot \log(N) \cdot (1 + \log(\varepsilon^{-1}) ),
}
measurements are sufficient for an error bound of the form
\bes{
\frac{\nm{x - \hat{x}}_{\ell^2}}{\sqrt{N}} \lesssim \log^{1/2}(m) \log(s) \sigma_{s}(\nabla x)_{\ell^1} + \eta \sqrt{s}  .
}
Our uniform recovery guarantee (Theorem \ref{t:TVuniform1D}) imposes a higher sample complexity (by a factor of $\log^2(s)$), but obtains an improved error bound \eqref{TVUnif1Dsigerr}, in which no log factors appear. The same comparison can be made in $d = 2$ dimensions. See Theorem \ref{t:TVuniformdD} and \cite[Thm.\ 2.4]{PoonTV}.

\subsection{Variable density Fourier sampling}

Using an idea of \cite{PoonTV}, we now consider a sampling strategy where the uniform random samples (which are sufficient to recover the gradient stably and robustly) are augmented by a set of variable density Fourier samples to enhance the image recovery. Following Definition \ref{d:VDsamppatt}, let $p = (p_{\omega})$ be a probability distribution on $\{ - N/2+1,\ldots,N/2\}^d$. We also require several additional concepts. First, if $\omega \in \bbR$, we let $\overline{\omega} = \max \{ 1 , | \omega | \}$. Second, if $\omega = (\omega_1,\ldots,\omega_d) \in \bbR^d$, we let $\pi : \{1,\ldots,d\} \rightarrow \{1,\ldots,d\}$ be a bijection such that $\overline{\omega_{\pi(1)}} \geq \overline{\omega_{\pi(2)}} \geq \ldots \geq \overline{\omega_{\pi(d)}}$. Next, we define $q = (q_{\omega} )$ by
\be{
\label{qomdef1}
q_{\omega} =\overline{\omega_{\pi(1)}} \cdots \overline{\omega_{\pi(d/2)}} ,\qquad  \mbox{$d$ even},
}
and
\be{
\label{qomdef2}
q_{\omega} =\overline{\omega_{\pi(1)}} \cdots \overline{\omega_{\pi((d-1)/2)}} \sqrt{\overline{\omega_{\pi((d+1)/2)}}} ,\qquad \mbox{$d$ odd}.
}
Finally, we let $\Gamma(p)$ be the smallest positive constant such that
\be{
\label{Cpdef}
(q_{\omega})^{-2} \leq \Gamma(p) p_{\omega},\qquad \forall \omega \in \{ - N/2+1,\ldots,N/2\}^d.
}
Notice that $\Gamma(p) \geq 1$, since $p$ is a probability distribution and $q_{0} =1$.

\thm{[Variable density Fourier sampling, one dimension]
\label{t:TVVDS1D}
Let $d = 1$, $0 < \varepsilon < 1$, $2 \leq s , m \leq N$ and $\Omega = \Omega_1 \cup \Omega_2 \subseteq \{-N/2+1,\ldots,N/2\}$, where $\Omega_1$ is a uniform random sampling scheme of order $m/2$ and $\Omega_2$ is a variable density sampling scheme of order $m/2$ corresponding to a probability distribution $p = (p_{\omega})$.  Let $A = \frac{1}{\sqrt{m}}P_{\Omega} F\in \bbC^{m \times N}$ and 
\be{
\label{m1DVDS}
m \gtrsim \Gamma(p) \cdot s \cdot \log(\Gamma(p) s) \cdot \left ( \log(\Gamma(p) s) \cdot \log(N) + \log(2\varepsilon^{-1}) \right ).
}
Then the following holds with probability at least $1-\varepsilon$.  For all $x \in \bbC^N$ and $y = A x + e \in \bbC^m$, where $\nm{e}_{\ell^2} \leq \eta$ for some $\eta \geq 0$, every minimizer $\hat{x}$ of \eqref{TVminprob} satisfies
\be{
\label{TVVDS1Dgraderr}
\nm{\nabla x - \nabla \hat{x}}_{\ell^2} \lesssim \frac{\sigma_s \left ( \nabla x \right )_{\ell^1}}{\sqrt{s}} + \eta ,\quad
\nm{ x -  \hat{x}}_{\mathrm{TV}} \lesssim \sigma_s \left ( \nabla x \right )_{\ell^1}  + \sqrt{s} \eta,
}
and
\be{
\label{TVVDS1Dsigerr}
\frac{\nm{x - \hat{x}}_{\ell^2}}{\sqrt{N}} \lesssim \frac{\sigma_{s}(\nabla x)_{\ell^1}}{s} + \left ( \sqrt{\frac{\Gamma(p)}{N}} + \frac{1}{\sqrt{s}} \right ) \eta.
}
}

\begin{theorem}
[Variable density Fourier sampling, $d \geq 2$ dimension]
\label{t:TVVDSdD}
Let, $d \geq 2$, $0<\varepsilon<1$, $2\leq s,m\leq N^d$ and $\Omega=\Omega_1\cup\Omega_2$, where $\Omega_1$ is a $d$-dimensional uniform random sampling pattern of order $m/2$ and $\Omega_2$ is a variable density sampling scheme of order $m/2$ corresponding to a probability distribution $p = (p_{\omega})$. Let $A = \frac{1}{\sqrt{m}}P_{\Omega} F\in \bbC^{m \times N^d}$ and 
\begin{equation}
\label{nomvds}
m \gtrsim_d \Gamma(p) \cdot s \cdot \log^2(N) \cdot \log(\Gamma(p) \log(N) s) \cdot \left ( \log(\Gamma(p) \log(N) s) \cdot \log(N) + \log(2\epsilon^{-1}) \right ),
\end{equation}
where $\Gamma(p)$ is as in \eqref{Cpdef}. Then the following holds with probability at least $1-\varepsilon$.  For all $x \in \bbC^{N^d}$ and $y = A x + e \in \bbC^m$, where $\nm{e}_{\ell^2} \leq \eta$ for some $\eta \geq 0$, every minimizer $\hat{x}$ of \eqref{TVminprob} satisfies
\begin{equation}
\label{grani}
\nm{\nabla x - \nabla \hat{x}}_{\ell^2} \lesssim \frac{\sigma_s(\nabla x)_{\ell^1}}{\sqrt{s}}+d\eta,\quad \nm{x - \hat{x}}_{\mathrm{\mathrm{TV}}_a} \lesssim \sigma_s(\nabla x)_{\ell^1} +\sqrt{s} d\eta \quad \mathrm{(anisotropic)},
\end{equation}
\begin{equation}
\label{gri}
\nmu{\nabla\hat{x}-\nabla x}_{\ell^{2,2}}\lesssim \frac{\sigma_s(\nabla x)_{\ell^{2,1}}}{\sqrt{s}}+\sqrt{d}\eta ,\quad \nm{x - \hat{x}}_{\mathrm{\mathrm{TV}}_i} \lesssim \sigma_s(\nabla x)_{\ell^{2,1}} +\sqrt{s}\sqrt{d} \eta
\quad \mathrm{(isotropic)},
\end{equation}
and
\begin{equation}
\label{srani}
\nmu{\hat{x}-x}_{\ell^2}\lesssim \frac{\sigma_s(\nabla x)_{\ell^1}}{\sqrt{s}} + \left (\sqrt{\Gamma(p)}+d \right )\eta \quad \mathrm{(anisotropic)},
\end{equation}
\begin{equation}
\label{sri}
\nmu{\hat{x}-x}_{\ell^2}\lesssim \sqrt{d} \frac{\sigma_s(\nabla x)_{\ell^{2,1}}}{\sqrt{s}} + \left (\sqrt{\Gamma(p)}+d \right ) \eta \quad \mathrm{(isotropic)}.
\end{equation}
\end{theorem}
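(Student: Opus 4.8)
The plan is to treat the two sampling components separately: the uniform random part $\Omega_1$ supplies \emph{gradient} recovery (the bounds \eqref{grani} and \eqref{gri}), while the variable density part $\Omega_2$ supplies the sharpened \emph{image} recovery (the bounds \eqref{srani} and \eqref{sri}). I would establish that each of the two underlying events holds with probability at least $1-\varepsilon/2$ and then conclude by a union bound; the factor $2$ inside $\log(2\epsilon^{-1})$ in \eqref{nomvds} is accounted for this way. Throughout, write $w = \hat{x} - x$, and note that feasibility of $\hat{x}$ together with $y = Ax + e$, $\nm{e}_{\ell^2} \le \eta$, gives the key deterministic bound $\nm{A w}_{\ell^2} \le 2\eta$.

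For the gradient bounds I would reduce to the argument already behind Theorem \ref{t:TVuniformdD}. Up to the harmless scaling $\sqrt{2}$ arising from the normalization $1/\sqrt{m}$ rather than $1/\sqrt{m/2}$, the $\Omega_1$-block of $A$ is exactly a uniform random subsampled DFT of order $m/2$. Since $\Gamma(p) \ge 1$, the measurement condition \eqref{nomvds} dominates (for fixed $d$) the condition of Theorem \ref{t:TVuniformdD}, carrying the extra factors $\Gamma(p)$ and $\log^2(N)$; hence the composed measurement operator satisfies the restricted isometry property (in the Haar basis) underlying that theorem. Because $\Omega \supseteq \Omega_1$, feasibility for the full problem forces $\nm{A_{\Omega_1} w}_{\ell^2} \le \nm{A w}_{\ell^2} \le 2\eta$, so $\hat{x}$ is feasible for the $\Omega_1$-subproblem and the deterministic, RIP-based consequence yields \eqref{grani} and \eqref{gri}, with the isotropic/anisotropic split following from \eqref{TVairelate}.

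The new content is the image bound, which I would obtain by proving $\nm{w}_{\ell^2} \lesssim \nm{\nabla w}_{\ell^2} + \sqrt{\Gamma(p)}\,\eta$ and then substituting \eqref{grani}/\eqref{gri}; this matches the shape of \eqref{srani}/\eqref{sri} precisely because the gradient bound already contributes both $\sigma_s(\nabla x)_{\ell^1}/\sqrt{s}$ and $d\eta$. Expanding $w$ in the $d$-dimensional Haar basis gives $\nm{w}_{\ell^2}^2 = \sum_\lambda |\ip{w}{h_\lambda}|^2$, which I split into fine and coarse scales. The fine-scale coefficients are controlled by the gradient through the Fourier--Haar estimates of \cite{NeedellWardTV2,KrahmerWardCSImaging}, contributing $\lesssim \nm{\nabla w}_{\ell^2}$ with dimension-dependent but summable constants. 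The coarse-scale coefficients are linear combinations of the \emph{low} Fourier frequencies of $w$, namely those sampled densely by $\Omega_2$; following the one-dimensional template of Theorem \ref{t:TVVDS1D} (and \cite{PoonTV}), I would bound them by $\sqrt{\Gamma(p)}\,\nm{A w}_{\ell^2} \lesssim \sqrt{\Gamma(p)}\,\eta$. The constant $\Gamma(p)$ enters through \eqref{Cpdef}: the weights $q_\omega$ of \eqref{qomdef1}--\eqref{qomdef2} are tuned so that any density dominating $q_\omega^{-2}$ controls the coarse Haar energy in the image $\ell^2$ metric.

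The main obstacle is this coarse-scale step: I must show that the \emph{random} samples $\Omega_2$ control the coarse Haar energy of \emph{every} admissible error $w$ with the sharp constant $\sqrt{\Gamma(p)}$, and that this holds simultaneously with the gradient RIP. This is a bounded-orthonormal-system concentration argument (matrix Bernstein/Chernoff) for the change of basis between coarse Haar functions and low Fourier frequencies, carried out scale-by-scale over the $\asymp \log(N)$ dyadic levels, with the reweighting $q_\omega^{-2} \le \Gamma(p)\,p_\omega$ absorbing the variable density. The union bound over these levels is what produces the additional $\log^2(N)$ factor and the nested logarithms $\log(\Gamma(p)\log(N)s)$ in \eqref{nomvds}, relative to the one-dimensional condition \eqref{m1DVDS}. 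Once both events are in force, combining the fine- and coarse-scale estimates gives $\nm{w}_{\ell^2} \lesssim \nm{\nabla w}_{\ell^2} + \sqrt{\Gamma(p)}\,\eta$, and inserting \eqref{grani} and \eqref{gri} (invoking \eqref{TVairelate} where the isotropic/anisotropic conversion is needed) yields \eqref{srani} and \eqref{sri}.
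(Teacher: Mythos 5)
Your high-level architecture --- gradient recovery from $\Omega_1$, image recovery from $\Omega_2$, union bound with $\varepsilon/2$ for each event --- is exactly the paper's, and your gradient paragraph is acceptable as a reduction to Theorem \ref{t:TVuniformdD}. Two details there are off, however: the RIP underlying that theorem's gradient bounds is in the \emph{canonical} basis (the object being recovered is $\nabla w$, which is sparse in the canonical basis), not the Haar basis; and the step that converts feasibility $\nm{Aw}_{\ell^2}\leq 2\eta$ into control of the ``measurements'' of $\nabla w$ is the commuting property $F^{(d)}\nabla_j=(I\otimes\cdots\otimes\diag(\lambda)\otimes\cdots\otimes I)F^{(d)}$ of Lemma \ref{l:commuting}, which your sketch never invokes and without which RIP of the $\Omega_1$-block plus feasibility says nothing about $\nabla w$.

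The genuine gap is the image step. The intermediate inequality you propose, $\nm{w}_{\ell^2}\lesssim\nm{\nabla w}_{\ell^2}+\sqrt{\Gamma(p)}\,\eta$, cannot be established by your fine/coarse Haar split, because that argument uses only the property $\nm{Aw}_{\ell^2}\leq 2\eta$ of $w$, and for such $w$ the inequality is false. Take $w_k=e^{2\pi\mathrm{i}\,\omega_0\cdot k/N}$, a pure Fourier mode at a frequency $\omega_0\notin\Omega$ with $K:=\nm{\omega_0}_{\ell^\infty}\ll N$; such $\omega_0$ exists whenever $m\ll N^d$, no matter which high-probability events hold. Then $Aw=0$, so the inequality would have to hold with $\eta=0$, yet $\nm{\nabla w}_{\ell^2}\asymp (K/N)\nm{w}_{\ell^2}\ll\nm{w}_{\ell^2}$. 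The failure is at the middle scales: the Haar energy of $w$ at scale $j$ is controlled by $\nm{\nabla w}_{\ell^2}$ only up to a factor of order the box side length $N/2^j$ (this is the ill-conditioning of $\nabla$), so ``fine scales $\lesssim\nm{\nabla w}_{\ell^2}$ with summable constants'' holds only within $O(1)$ scales of the finest; and the middle-scale coefficients are not linear combinations of the densely sampled low frequencies either, since Haar wavelets are not band-limited. This is precisely why the paper's strengthened Poincar\'e inequality (Lemma \ref{pihim}) carries the \emph{TV} norm, $\nm{w}_{\ell^2}\lesssim\nm{Bw}_{\ell^2}+\nm{w}_{\mathrm{TV}_a}\log(N/k)/\sqrt{k}$: the underlying Haar-coefficient decay for BV functions (Lemmas \ref{bvtv} and \ref{cbbv}) is an $\ell^1$ phenomenon, $\nmu{c_{(i)}}_{\ell^2}\lesssim\nm{w}_{\mathrm{TV}_a}/(2^{d/2}i)$, and the largest $k$ Haar coefficients --- at whatever scales they occur --- are absorbed by the RIP of the weighted Fourier--Haar matrix (Lemma \ref{ripfhm} combined with Lemma \ref{l:NWprop}), not by a scale split. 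The paper then closes with the TV error bound from \eqref{grani}, $\nm{w}_{\mathrm{TV}_a}\lesssim\sigma_s(\nabla x)_{\ell^1}+\sqrt{s}\,d\eta$, rather than the $\ell^2$ gradient bound; your inequality does become true for the actual error vector if one adds $\sigma_s(\nabla x)_{\ell^1}/\sqrt{s}$ to its right-hand side, but proving that is exactly the $\ell^1$ route just described. Finally, your accounting of the log factors is also not how they arise: the extra $\log^2(N)$ and the nested logarithm in \eqref{nomvds} come not from a union bound over dyadic levels but from invoking the Fourier--Haar RIP at the inflated order $k\asymp s\,d^2\log^2(N)$, chosen so that $\log(N/k)/\sqrt{k}\lesssim 1/\sqrt{s}$.
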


These results are general in the sense that they permit any variable density sampling scheme. Moreover, the effect of the density $p$ is seen clearly through the constant $\Gamma(p)$: the smaller $\Gamma(p)$, the better the measurement conditions \eqref{m1DVDS} and \eqref{nomvds} and the image recovery bounds \eqref{TVVDS1Dsigerr}, \eqref{srani} and \eqref{sri}. In the next section, we discuss the choice of $p$. Specifically, we identify densities for which $\Gamma(p)$ satisfies the optimal bound $\Gamma(p) \lesssim \log_d(N)$.

With this in mind, these results can be understood as follows. Suppose that $p$ is chosen so that $\Gamma(p) \lesssim_d \log(N)$. Then by incorporating variable density samples we achieve better stability and robustness in the image recovery by a factor of $\sqrt{s}$ over the case when only uniform random samples are used (Theorems \ref{t:TVuniform1D} and \ref{t:TVuniformdD}). In particular, the image error bounds, up to the factor of $\Gamma(p)$, depend on $\sigma_{s}(\nabla x)_{\ell^1} / \sqrt{s}$ and $\eta$, exactly as in the gradient error bounds. Moreover, to achieve these estimates we need a number of measurements scaling linearly in $s$, up to log factors. We note also that the anisotropic and isotropic TV semi-norms give the same recovery guarantees, up to factors in $d$.

\subsection{Discussion}

To illustrate this difference, in Fig.\ \ref{f:GradRecSNR} we compare the stability and robustness of the recovery of a two-dimensional image and its gradient. 
In this figure, we perturb either the image $x$ (to study stability) or the measurements $y$ (to study robustness) and compute the error in the reconstructed image and its gradient. 
We use the standard Shepp--Logan phantom, since its gradient is exactly sparse, and compare the recovery from uniform random and variable density samples.

For both types of perturbations, observe that the image recovery error is better for variable density samples than uniform random samples, whereas the gradient recovery errors are very similar.  This confirms the results of the previous section, which assert that uniform random sampling provides adequate recovery of the image gradient, matching the stability and robustness of variable density sampling, but that the image recovery error is worse by a factor of $\sqrt{s}$.

\begin{figure}
{\small 
\begin{center}
\begin{tabular}{@{\hspace{0pt}}c@{\hspace{10pt}}c@{\hspace{10pt}}c@{\hspace{10pt}}c@{\hspace{0pt}}}
Stability (image) & Stability (gradient) & Robustness (image) & Robustness (gradient)
\\[1pt]
\includegraphics[width=0.16\paperwidth,clip=true,trim=0mm 0mm 0mm 0mm
]{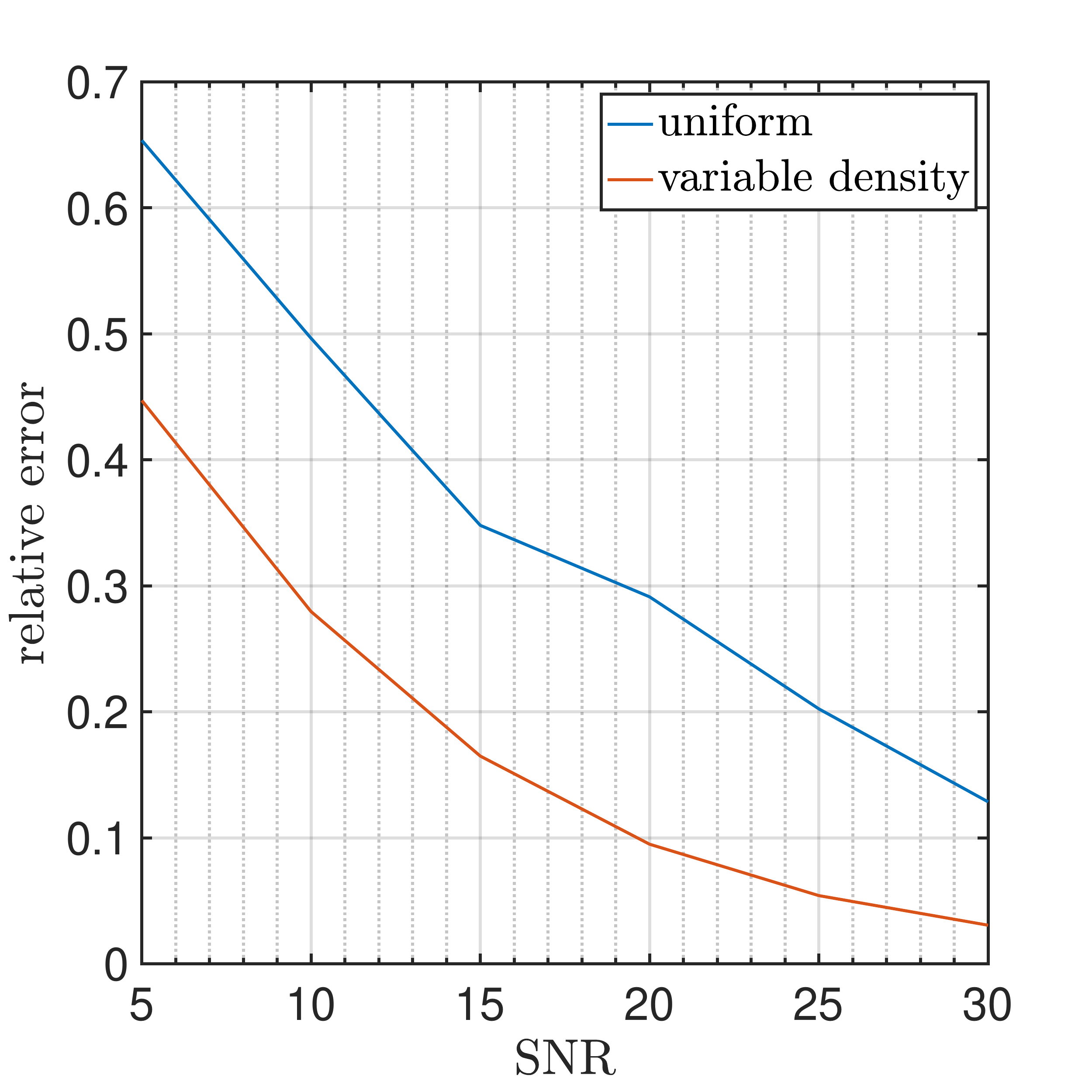}
&
\includegraphics[width=0.16\paperwidth,clip=true,trim=0mm 0mm 0mm 0mm
]{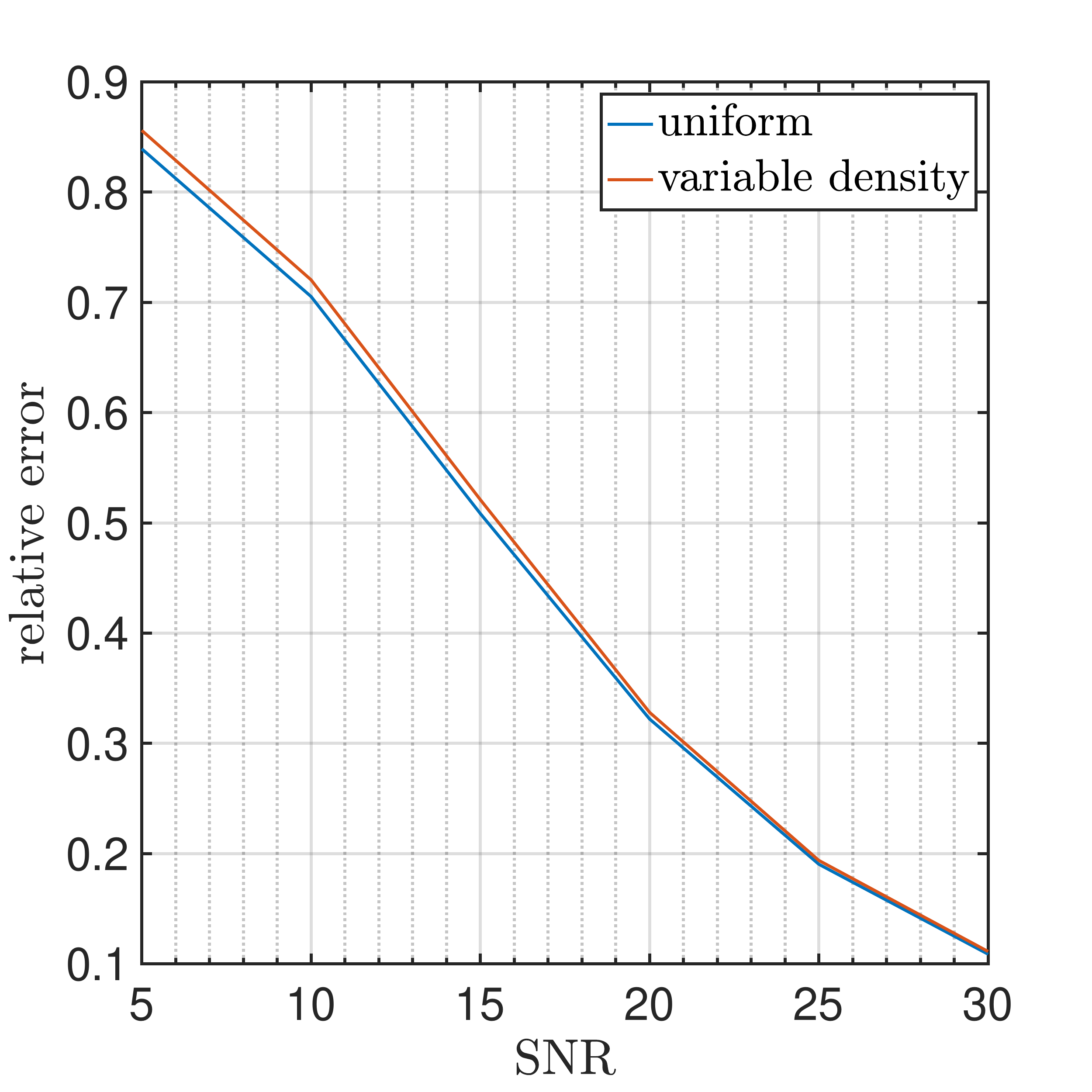}
& 
\includegraphics[width=0.16\paperwidth,clip=true,trim=0mm 0mm 0mm 0mm
]{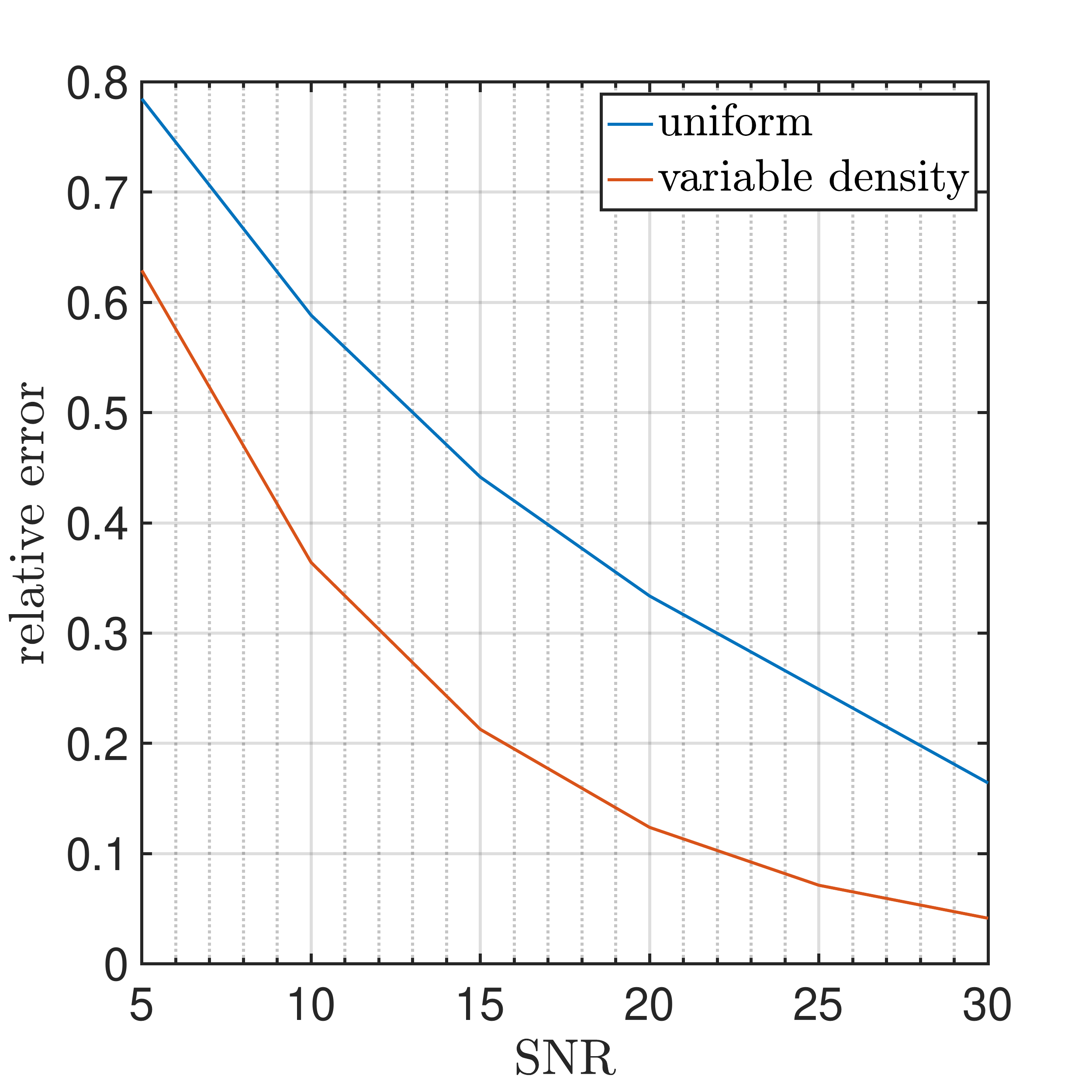}
&
\includegraphics[width=0.16\paperwidth,clip=true,trim=0mm 0mm 0mm 0mm
]{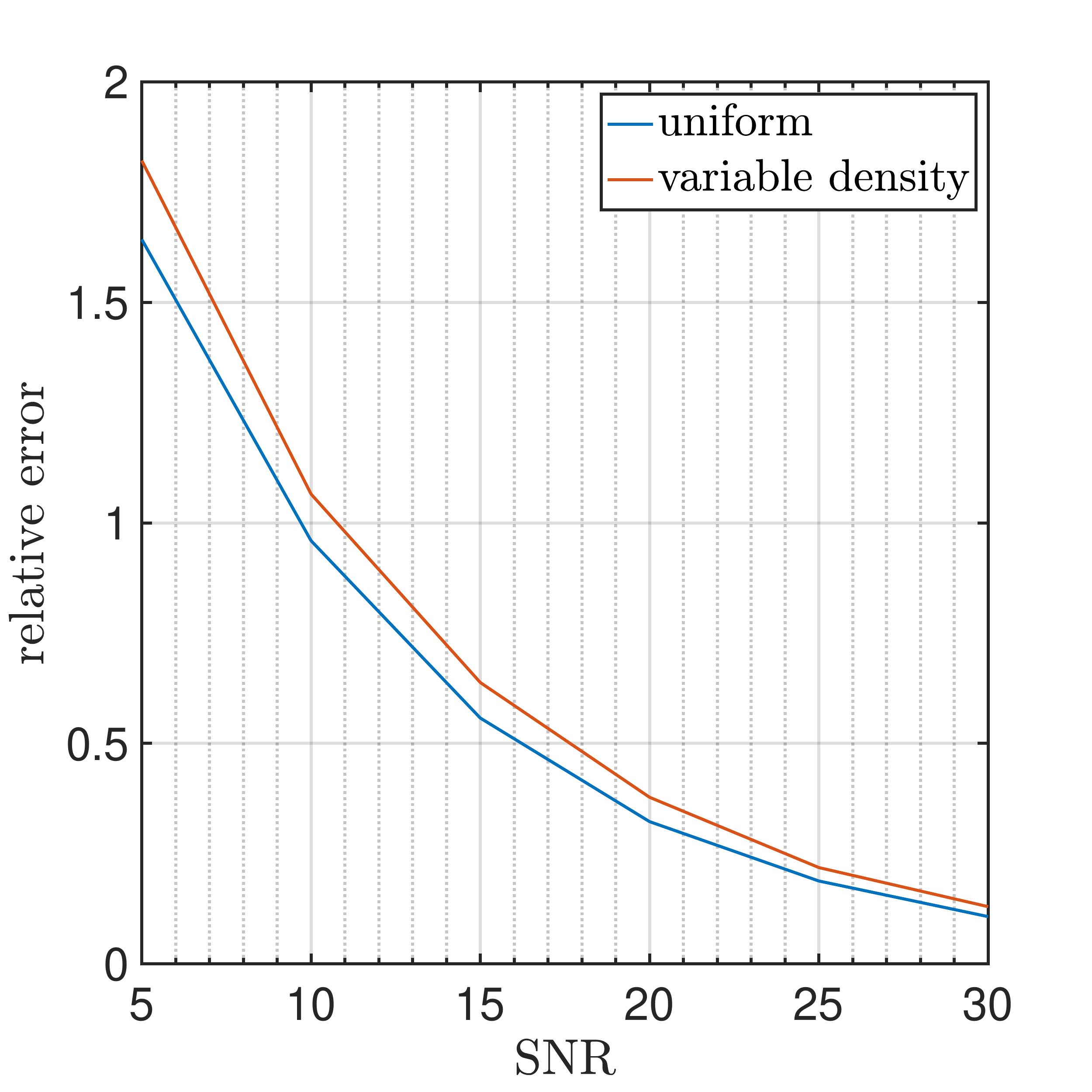}
 \end{tabular}
 \end{center}
}
\vspace{-2mm}
\caption{Recovery of the discrete $256^2$ Shepp--Logan phantom from $25\%$ Fourier measurements using either uniform random sampling or variable density sampling according to \eqref{thopt2Dalt1}.  The horizontal axis shows the signal-to-noise ratio (SNR) of the perturbation and the vertical axis shows the relative error in the recovered image or recovered image gradient.
For the stability experiment (left), the image $x$ is perturbed to $x+h$.  The SNR and relative error are defined as $20 \log_{10}(\nm{x}_{\ell^2} / \nm{h}_{\ell^2})$ and $\nm{z - (x+h)}_{\ell^2} / \nm{x+h}_{\ell^2}$ or $\nm{\nabla(z - (x+h))}_{\ell^2} / \nm{\nabla(x+h)}_{\ell^2}$ respectively, where $z$ is the reconstruction of $x+h$.  For the robustness experiment (right), the measurements $y$ are perturbed to $y + h$.  The SNR and relative error are defined as $20 \log_{10}(\nm{y}_{\ell^2}/\nm{h}_{\ell^2})$ and $\nm{z - x}_{\ell^2} / \nm{x}_{\ell^2}$ or $\nm{\nabla(z - x)}_{\ell^2} / \nm{\nabla(x)}_{\ell^2}$  respectively, where $z$ is the reconstruction obtained from measurements $y + h$.} 
\label{f:GradRecSNR}
\end{figure}

The intuition for this discrepancy is quite straightforward. Since it has periodic boundary conditions, the gradient operator commutes with the DFT matrix (see Lemma \ref{l:commuting}). Hence recovery of the image gradient is equivalent to recovering a sparse vector from samples of its Fourier transform. It is well known that uniform random sampling is a suitable (in fact, optimal) sampling strategy for recovering a sparse vector from samples of its Fourier transform. Hence, we expect adequate recovery of the gradient from such measurements. 
On the other hand, since the constant vector lies in the null space of the gradient operator, it is impossible to recover $x$ from $\nabla x$. This is why the zero frequency is added in Theorems \ref{t:TVuniform1D} and \ref{t:TVuniformdD}. However, the stability and robustness of the image recovery is worse, since the gradient operator $\nabla$ is ill-conditioned for large $N$. In particular, smooth functions (i.e.\ image textures) lie approximately in its null space. Yet, the Fourier transform of a smooth function decays rapidly with increasing frequency. Hence, variable density sampling overcomes this issue by sampling more densely near the origin, thus stabilizing the recovery of the smooth image components.

\section{Choice of Fourier sampling pattern}\label{s:Foursamp}

As noted above, Theorems \ref{t:TVVDS1D} and \ref{t:TVVDSdD} allow for any variable density sampling scheme. We now discuss this choice in more detail.

\subsection{Theoretically-optimal sampling patterns}

We commence by deriving sampling patterns that are theoretically optimal, in the sense that they give the optimal scaling of $\Gamma(p)$ with respect to $N$ (for fixed $d$):

\lem{
\label{l:optimalrule}
Let $p = (p_\omega)$ be a probability distribution and $\Gamma(p)$ be as in \eqref{Cpdef}. Then $\Gamma(p) \gtrsim \log(N)$. Moreover, if
\bes{
p_{\omega} =  \frac{C_{N,d}}{(q_{\omega})^2},\qquad \omega \in \{-N/2+1,\ldots,N/2\}^d,
}
where $q_{\omega}$ is as in \eqref{qomdef1}--\eqref{qomdef2}, then $\Gamma(p) \lesssim_d \log(N)$. 
}

Using this, we immediately deduce the following:

\cor{
[Theoretically-optimal variable density Fourier sampling, one dimension]
\label{c:thopt1D}
Consider the setup of Theorem \ref{t:TVVDS1D} with $p = (p_{\omega})$ given by
\be{
\label{1Doptdensity}
p_{\omega} = \frac{C_N}{\max \{ 1 , |\omega| \} },\qquad \omega \in \{ -N/2+1,\ldots,N/2\},
}
and $s \gtrsim \log(N)$. Then the conclusions of Theorem \ref{t:TVVDS1D} hold (with $\Gamma(p) \lesssim \log(N)$ in the case of \eqref{TVVDS1Dsigerr}), provided $m$ satisfies
\be{
\label{thopt1Dmeascond}
m \gtrsim s \cdot \log(s) \cdot \log(N) \cdot \left ( \log(s) \cdot \log(N) + \log(2 \varepsilon^{-1}) \right ).
}
}

Note that the condition $s \gtrsim \log(N)$ is imposed merely to simplify the measurement condition (it allows one to replace terms such as $\log(\log(N) s)$ by $\log(s)$). 
It is informative to compare this result with Theorem \ref{t:TVuniform1D}. The measurement condition \eqref{thopt1Dmeascond} prescribes an additional $\log(N)$ samples over Theorem \ref{t:TVuniform1D}, taken according to the density \eqref{1Doptdensity}. However, this leads to an improved signal recovery error of the form
\be{
\label{elkisland}
\frac{\nmu{\hat{x} - x}_{\ell^2}}{\sqrt{N}} \lesssim \frac{\sigma_{s}(\nabla x)_{\ell^1}}{s} + \left ( \sqrt{\frac{\log(N)}{N}} + \frac{1}{\sqrt{s}} \right ) \eta.
}
Note that a nonuniform recovery guarantee of similar flavour to Corollary \ref{c:thopt1D} was first proved in \cite[Thm.\ 2.1]{PoonTV}. Therein
$
m \gtrsim s \cdot \log(N) \cdot ( 1 + \log(\varepsilon^{-1}))
$
samples taken in the same way (in particular, with the same variable density \eqref{1Doptdensity}) were shown to give a recovery error
\bes{
\frac{\nmu{\hat{x} - x}_{\ell^2}}{\sqrt{N}} \lesssim \log^2(s) \log(N) \log(m) \left ( \log(s) \log^{1/2}(m) \frac{\sigma_{s}(\nabla x)_{\ell^1}}{s} + \frac{1}{\sqrt{s}}  \eta \right ).
}
By contrast, Corollary \ref{c:thopt1D} is a uniform recovery guarantee. While it imposes a more stringent measurement condition \eqref{thopt1Dmeascond}, specifically, by a factor of $\log^2(s) \log(N)$, it leads to an improved recovery guarantee \eqref{elkisland}. For instance, the best $s$-term approximation error term $\sigma_s(\nabla x)_{\ell^1} / s$ is improved by a factor of $\log^3(s) \log(N) \log^{3/2}(m)$.

\cor{
[Theoretically-optimal variable density Fourier sampling, two dimensions]
\label{c:thopt2D}
Let $d = 2$ and consider the setup of Theorem \ref{t:TVVDSdD} with $p = (p_{\omega})$ given by
\be{
\label{thopt2D}
p_{\omega} = \frac{C_N}{\left (\max \{ 1 , |\omega_1 | , |\omega_2 | \}\right)^{2}},\qquad \omega = (\omega_1,\omega_2)  \in \{ -N/2+1,\ldots,N/2\}^{2},
}
and $s \gtrsim \log(N)$.
Then the conclusions of Theorem \ref{t:TVVDSdD} hold (with $\Gamma(p) \lesssim \log(N)$ in the case of \eqref{srani} and \eqref{sri}), provided $m$ satisfies
\be{
\label{thopt2Dmeascond}
m \gtrsim s \cdot \log(s) \cdot \log^3(N) \cdot \left ( \log(s) \cdot \log(N) + \log(2 \varepsilon^{-1}) \right )
}
Furthermore, the same conclusion holds (with possibly different numerical constant) if \eqref{thopt2D} is replaced by
\be{
\label{thopt2Dalt1}
p_{\omega} = \frac{C_N}{1+(\omega_1)^2 + (\omega_2)^2 },\qquad \omega = (\omega_1,\omega_2) \in \{ -N/2+1,\ldots,N/2\}^{2},
}
or more generally, if $\nm{\cdot}$ is any norm on $\bbR^2$, by
\be{
\label{thopt2Dalt2}
p_{\omega} = \frac{C_N}{1+\nm{\omega}^2},\qquad  \omega \in \{ -N/2+1,\ldots,N/2\}^{2}.
}
}

Note that \eqref{thopt2D} follows immediately from the observation that $q_{\omega} = \max \{ 1 , |\omega_1| , |\omega_2 | \}$ when $d = 2$. The results for \eqref{thopt2Dalt1} and \eqref{thopt2Dalt2} follow in turn simply because of the equivalence of norms on a finite-dimensional vector space.

The scheme \eqref{thopt2Dalt1} is known as \textit{inverse square law} sampling. It is a standard and well-known variable density sampling strategy for compressed sensing recovery from Fourier measurements \cite{KrahmerWardCSImaging,PoonTV}. Interesting, this result also shows that there are many different sampling strategies that give the same recovery guarantees up to constants. The critical factor is the asymptotic decay rate as $\omega \rightarrow \infty$. Fig.\ \ref{f:2d_level_curves} visualizes the level curves of several such sampling strategies. Notice that the schemes \eqref{thopt2Dalt2} depend on the distance of $\omega$ from the zero frequency (with respect to some norm). We therefore informally refer to them as \textit{radially symmetric}.

\begin{figure}[t]
\includegraphics[width=0.23\paperwidth,clip=true,trim=0mm 0mm 0mm 0mm
]{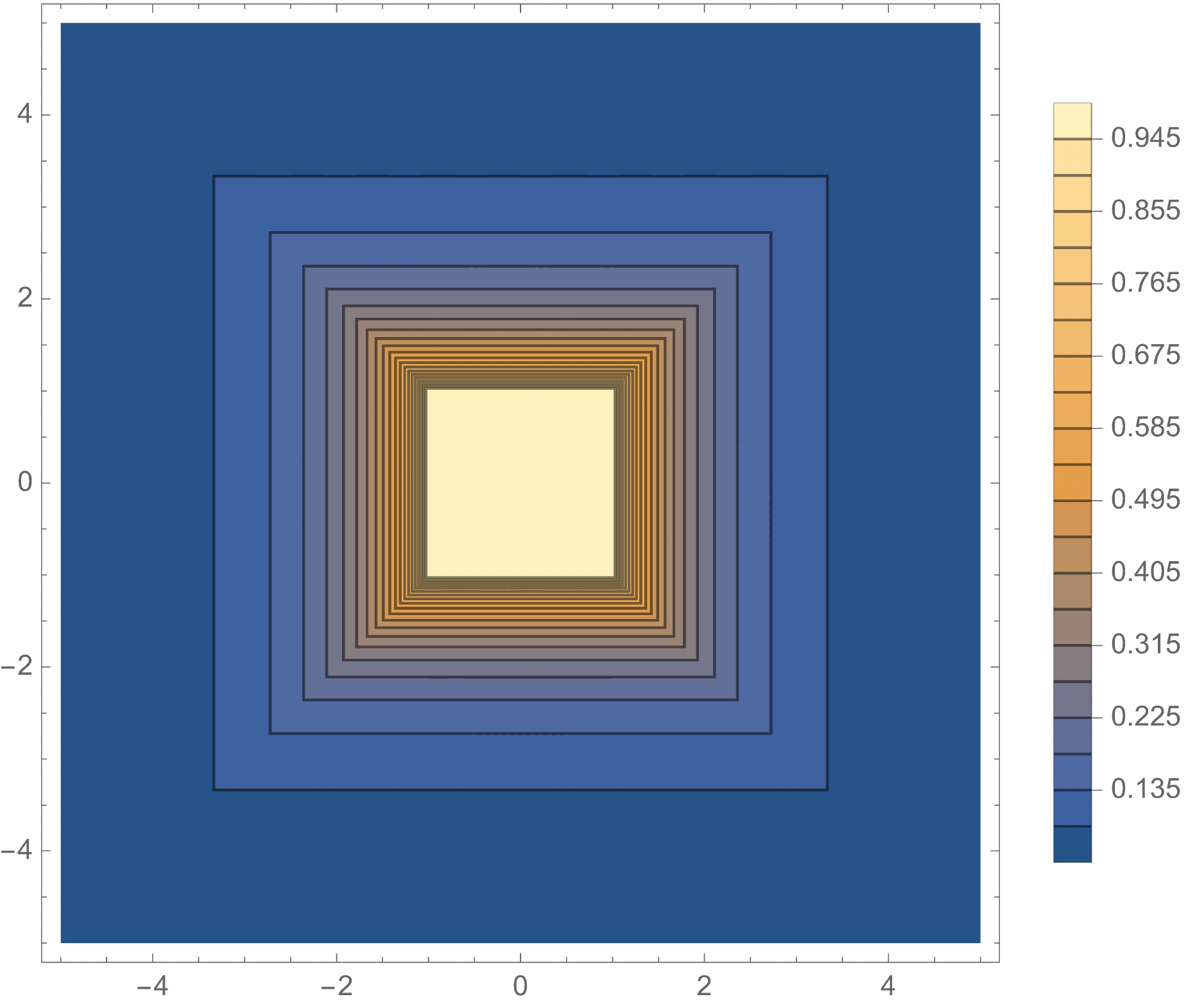}
\includegraphics[width=0.23\paperwidth,clip=true,trim=0mm 0mm 0mm 0mm
]{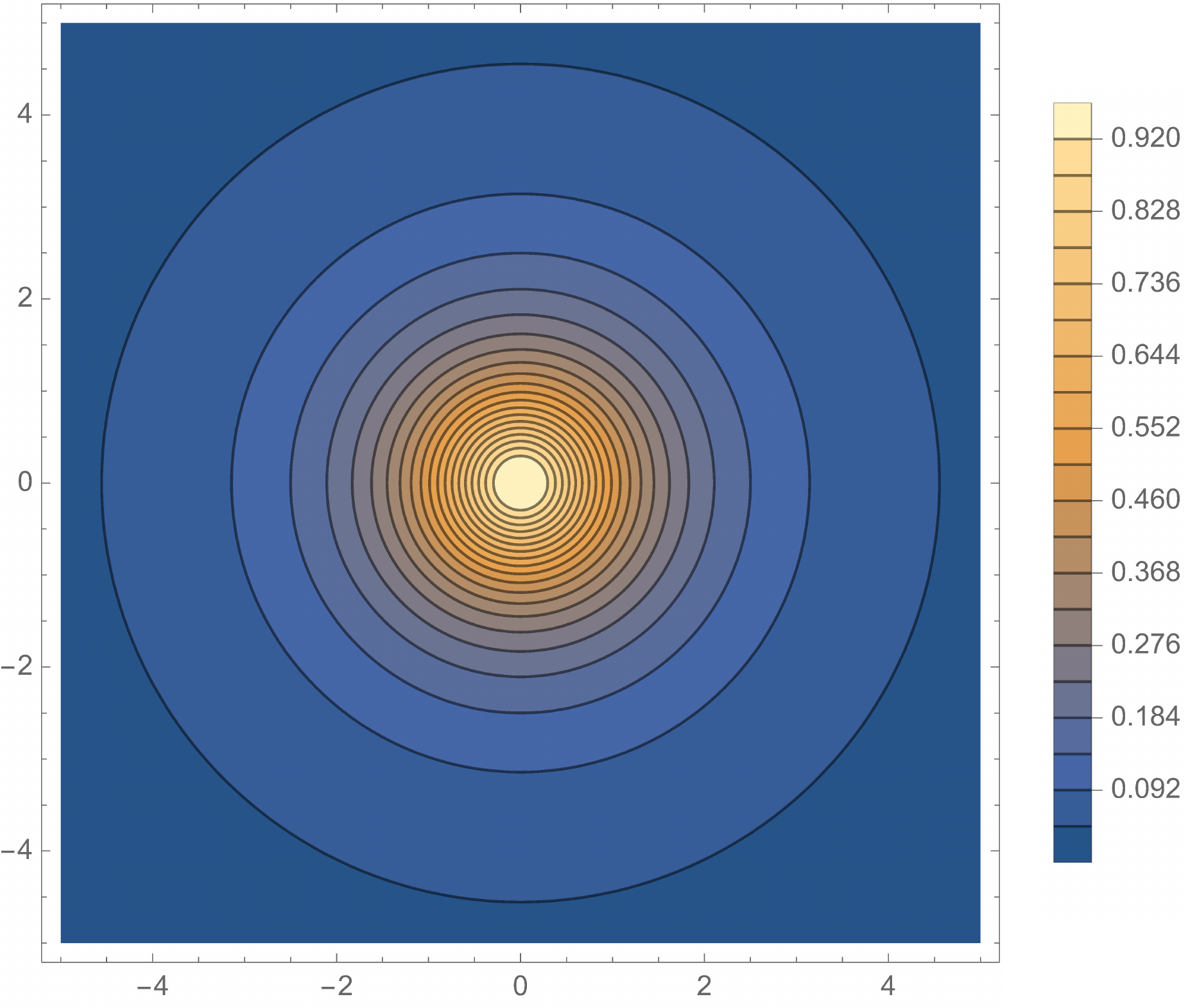}
\includegraphics[width=0.23\paperwidth,clip=true,trim=0mm 0mm 0mm 0mm
]{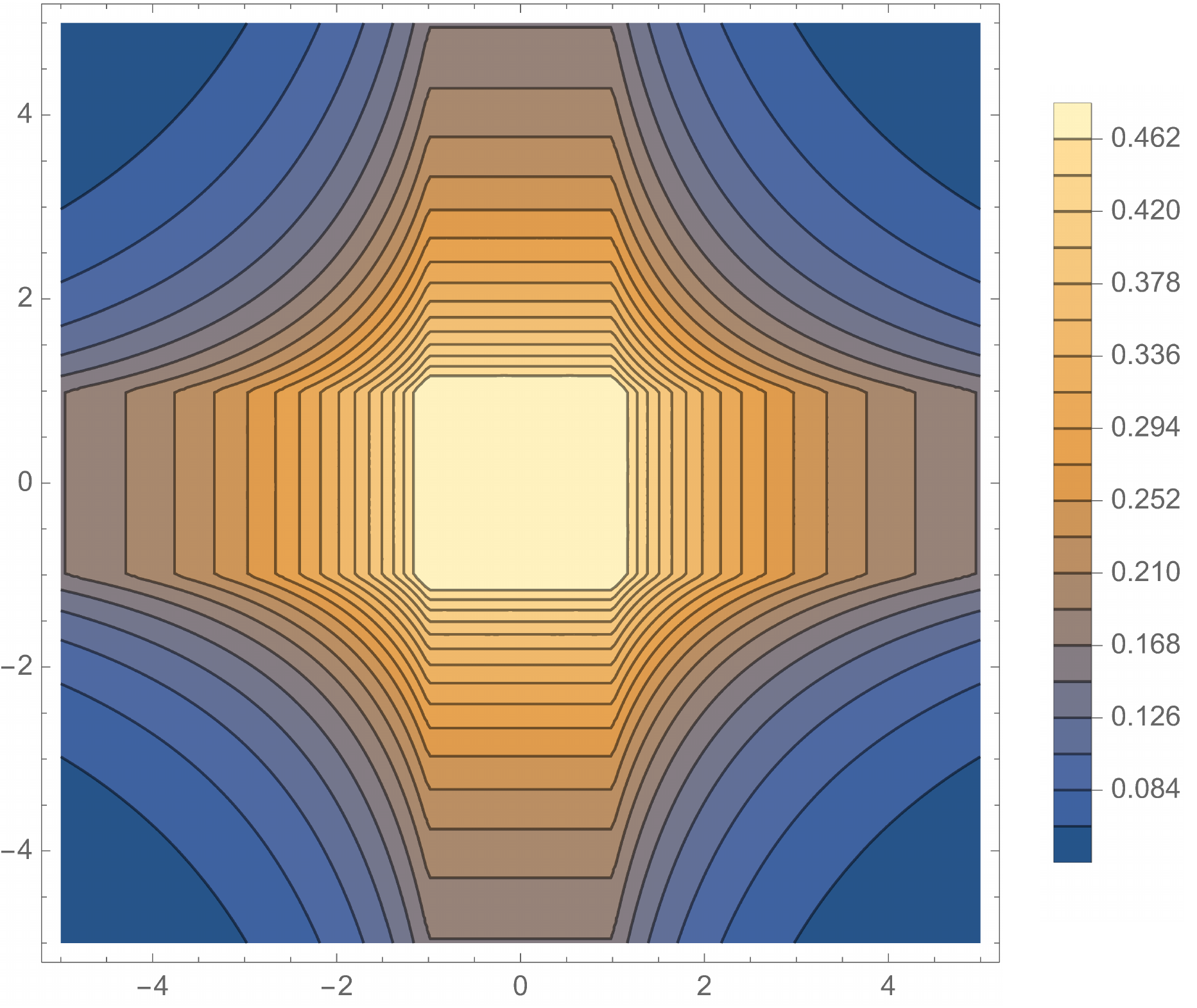}
\vspace{-2mm}
\caption{Level curves for the 2D {\bf(left)} theoretically optimal \eqref{thopt2D}, {\bf(middle)} inverse square \eqref{thopt2Dalt1} and {\bf(right)} hyperbolic cross \eqref{hypcrossdD} densities.}
\label{f:2d_level_curves}
\end{figure}

Similar results to Corollary \ref{c:thopt2D} were shown in \cite{KrahmerWardCSImaging,PoonTV}. In \cite[Thm.\ 1]{KrahmerWardCSImaging} a uniform recovery guarantee was proved for inverse square law sampling \eqref{thopt2Dalt1}, with the measurement condition
\be{
\label{KWmeascond}
m \gtrsim s \cdot \log^3(s) \cdot \log^5(N),
}
implying a image recovery bound
\bes{
\nmu{x - \hat{x}}_{\ell^2} \lesssim \frac{\sigma_{s}(\nabla x)_{\ell^1}}{\sqrt{s}} + \eta,
}
for the anisotropic TV semi-norm with a particular probability, where $\eta$ is a bound for the noise in a certain weighted $\ell^2$-norm. Corollary \ref{c:thopt2D} improves on this result in several ways. First, the log factors in the measurement condition \eqref{thopt2Dmeascond} are reduced by a factor of $\log(s) \cdot \log(N)$ over \eqref{KWmeascond}. Second, this result gives a robustness bound where the noise is measured in an unweighted $\ell^2$-norm. Third, this result establishes the same recovery guarantee for the family of sampling schemes \eqref{thopt2Dalt2}, as opposed to just the inverse square law \eqref{thopt2Dalt1}.

On the other hand, a nonuniform recovery guarantee was shown in \cite[Thm.\ 2.2]{PoonTV}. Therein
\be{
\label{poon2DTVmeas}
m \gtrsim s \cdot \log(N) \cdot ( 1 + \log(\varepsilon^{-1})),
}
taken in the same way (in particular, using the inverse square law) were shown to yield a image recovery bound of the form
\bes{
\nmu{\hat{x} - x}_{\ell^2} \lesssim\log(s) \log(N^2/s) \log^{1/2}(N) \log^{1/2}(m)  \left ( \log^{1/2}(m) \log(s) \frac{\sigma_{s}(\nabla x)_{\ell^{2,1}}}{\sqrt{s}} +  \eta \right ),
}
for the isotropic TV semi-norm. In comparison, Corollary \ref{c:thopt2D} is a uniform recovery guarantee with an image recovery error bound of the form
\bes{
\nmu{\hat{x} - x}_{\ell^2} \lesssim \frac{\sigma_{s}(\nabla x)_{\ell^{2,1}}}{\sqrt{s}} + \sqrt{\log(N)} \eta ,
}
for the isotropic TV semi-norm.
As in the one-dimensional case, the tradeoff for the worse log term in the measurement condition \eqref{thopt2Dmeascond} (by a factor of $\log^2(s) \log^3(N)$ over \eqref{poon2DTVmeas}) is a better image recovery bound by several log factors.

Finally, we consider the case of $d \geq 2$ dimensions. Note that this problem was not considered in either \cite{KrahmerWardCSImaging} or \cite{PoonTV}:

\cor{
[Theoretically-optimal variable density Fourier samples, $d \geq 2$ dimensions]
\label{c:thoptdD}
Let $d \geq 2$ and consider the setup of Theorem \ref{t:TVVDSdD} with $p = (p_{\omega})$ given by
\be{
\label{thoptdD}
p_{\omega} = \frac{C_{N,d}}{(q_{\omega})^2},\qquad \omega \in \{-N/2+1,\ldots,N/2\}^d,
}
and $s \gtrsim \log(N)$.
Then the conclusions of Theorem \ref{t:TVVDSdD} hold (with $\Gamma(p) \lesssim_d \log(N)$ in the case of \eqref{srani} and \eqref{sri}), provided $m$ satisfies
\be{
\label{thoptdDmeascond}
m \gtrsim_d s \cdot \log(s) \cdot \log^3(N) \cdot \left ( \log(s) \cdot \log(N) + \log(2 \varepsilon^{-1}) \right ).
}
In particular, when $d = 3$, \eqref{thoptdD} can be expressed as
\be{
\label{thopt3D}
p_{\omega} = C_N \left ( \left ( \max_{i=1,2,3} \{ \overline{\omega_i} \} \right )^2 \left ( \sum^{3}_{i=1} \overline{\omega}_i - \max_{i=1,2,3} \{ \overline{\omega_i} \} -\min_{i=1,2,3} \{ \overline{\omega_i} \} \right ) \right )^{-1}. 
}
}

Several remarks are in order. First, the measurement condition \eqref{thoptdDmeascond} and recovery error bounds are exactly the same as the two-dimensional measurement condition \eqref{thopt2Dmeascond} and error bounds, except possibly for $d$-dependent constants. Second, as shown by \eqref{thopt3D}, theoretically-optimal sampling strategies cease to be radially-symmetric in $d \geq 3$ dimensions. We shall discuss this further in the next section. But first, it is interesting to visualize the shape of the density \eqref{thopt3D}. Fig.\ \ref{fig:3D_pattern_plots} plots a typical level set of this function. We observe in particular the axis-aligned spikes, and the nonsmooth transitions along the edges of the cube.

\begin{figure}[t]
\begin{center}
\includegraphics[width=0.25\paperwidth,clip=true,trim=0mm 0mm 0mm 0mm]{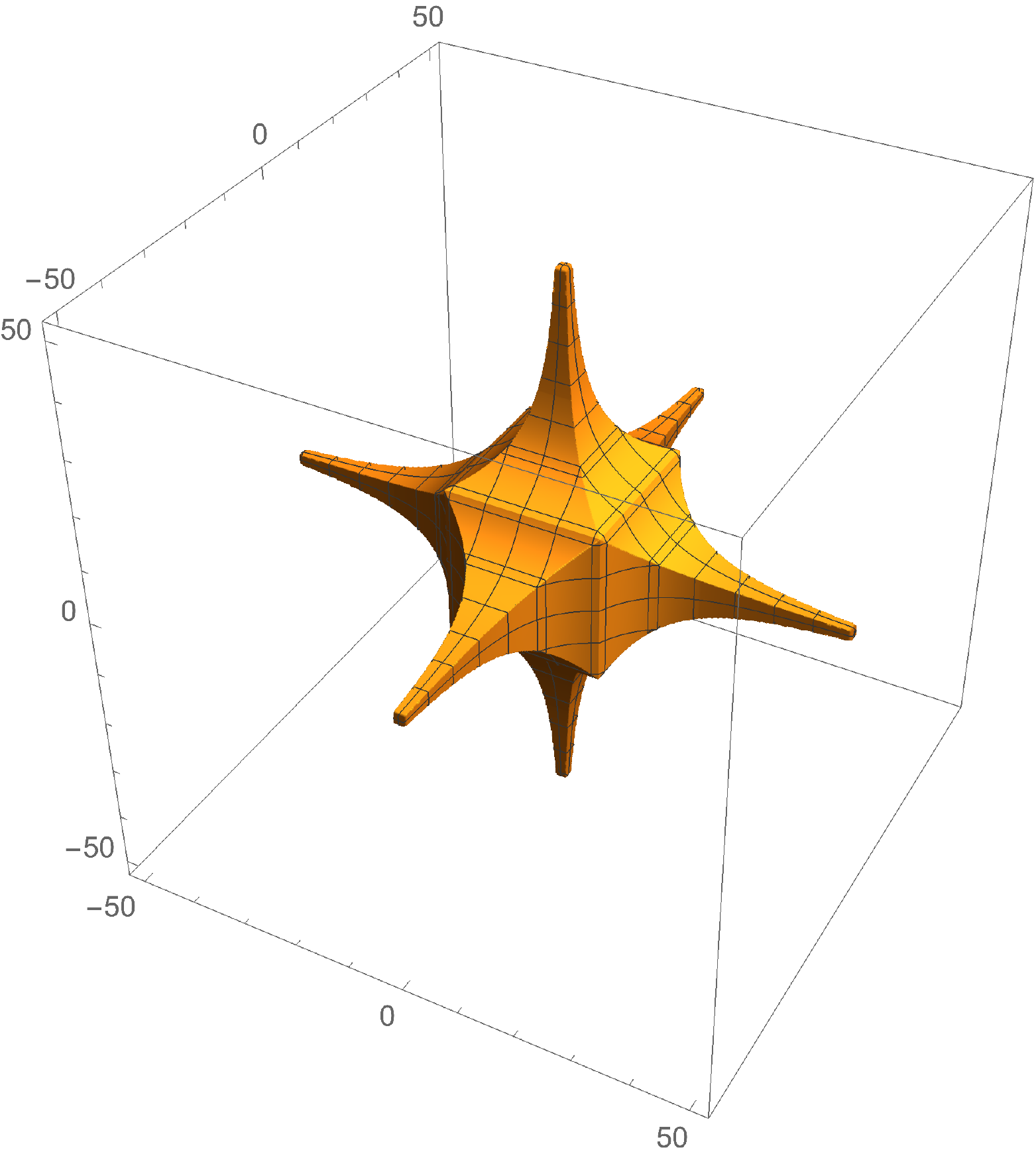}
\includegraphics[width=0.25\paperwidth,clip=true,trim=0mm 0mm 0mm 0mm]{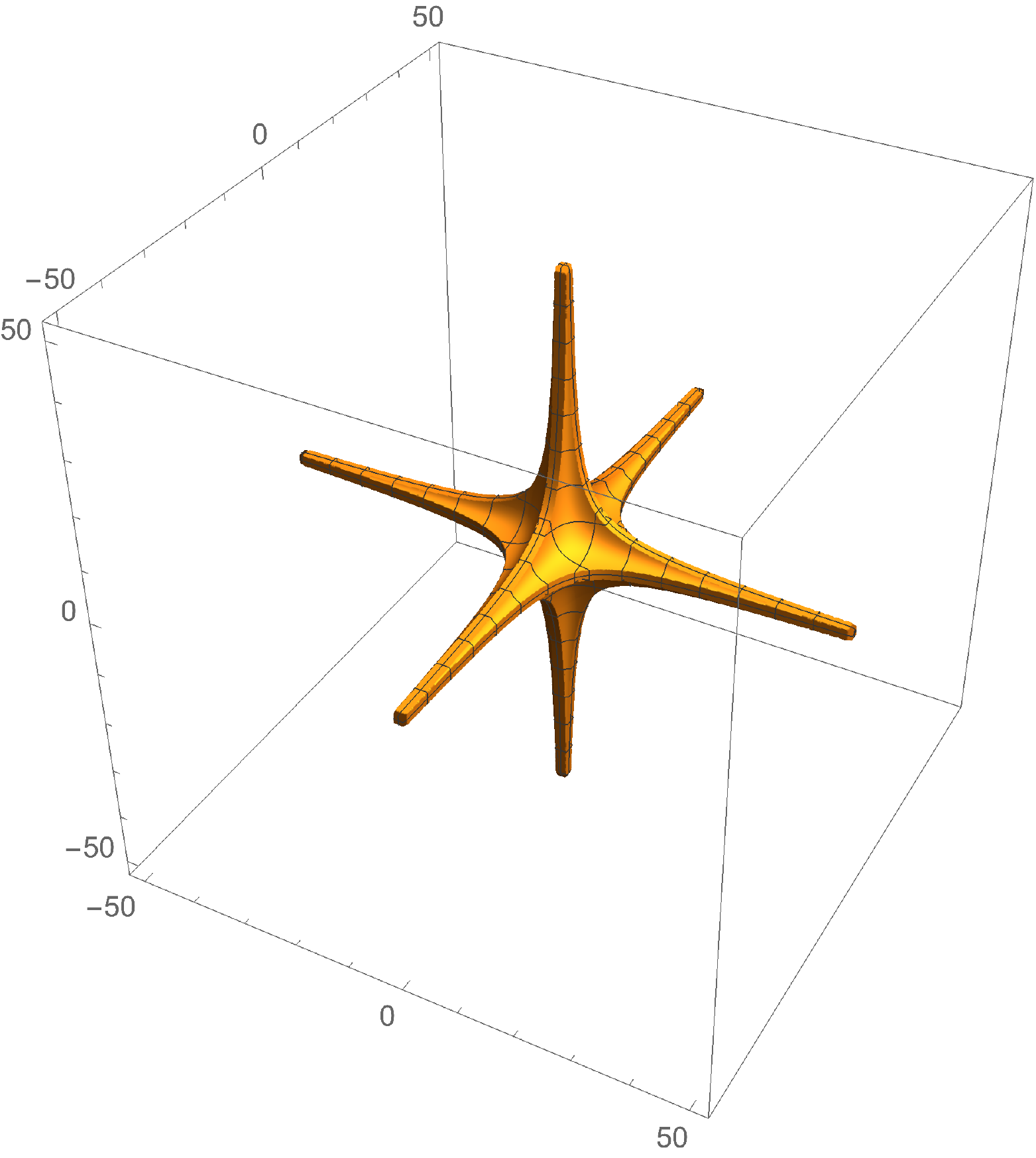}
\end{center}
\vspace{-2mm}
\caption{Level sets of the {\bf(left)} theoretically-optimal \eqref{thopt3D} and {\bf(right)} hyperbolic cross \eqref{hypcrossdD} densities in three dimensions.}
\label{fig:3D_pattern_plots}
\end{figure}

\subsection{Sub-optimality of radially-symmetric sampling}\label{ss:suboptimalradial}

As shown in Corollary \ref{c:thopt2D}, radially-symmetric sampling schemes are theoretically optimal in $d = 2$ dimensions.  We now show that this ceases to be the case when $d \geq 3$. 

\lem{
\label{l:radialsubopt}
Let $d \geq 2$ and $p = (p_{\omega})$ be defined by
\bes{
p_{\omega} = \frac{C_{N,d,\alpha}}{(1+\nm{\omega})^{\alpha}},\qquad \omega \in \{-N/2+1,\ldots,N/2\}^d,
}
where $\nm{\cdot}$ is any norm on $\bbR^d$ and $\alpha >0$. Then
\bes{
\Gamma(p) \asymp_{d,\alpha} \left \{ 
\begin{array}{cc} 
N^{d-\alpha} & \alpha < 2 \\ 
N^{d-2} & 2 \leq \alpha < d  \\
N^{d-2} \log(N) & \alpha = d \\
N^{\alpha-2}& \alpha > d 
\end{array} \right . ,
}
(note that the second case is only possible when $d \geq 3$). In particular, the best scaling for $\Gamma(p)$ is $\Gamma(p) \asymp \log(N)$ when $d = 2$ and $\Gamma(p) \asymp N^{d-2}$ when $d \geq 3$, and these correspond to the choice $\alpha = 2$.
}

In particular, this result means that in $d = 3$ dimensions any radially-symmetric sampling pattern will yield a measurement condition that scales linearly with $N$. This, in view of Corollary \ref{c:thoptdD} is theoretically suboptimal.

\begin{remark}[Why radially-symmetric Fourier sampling is suboptimal]
\label{r:whysuboptimal}
This arises from the proof of Theorem \ref{t:TVVDSdD}, which, following \cite{NeedellWardTV2,NeedellWardTV1}, relies on Haar wavelets. This proof relates the recovery properties of a variable-density scheme for gradient sparse images to its recovery properties for images which are sparse in the discrete Haar wavelet basis. The study of Fourier sampling with wavelets has been considered extensively in \cite{AHPRBreaking,KrahmerWardCSImaging,LiAdcockRIP} and elsewhere. In essence, the optimal variable density scheme is determined by the behaviour of Haar wavelets in frequency space. In one or two dimensions, the Fourier transform of a Haar wavelet decays sufficiently rapidly in all directions to allow for radially-symmetric sampling strategies to be optimal. However, as shown in \cite{AdcockEtAlCISampStrat}, in three or more dimensions, the slow decay of the Fourier transform of a multi-dimensional Haar wavelet means that the optimal sampling scheme is no longer, as termed therein, \textit{isotropic} (i.e.\ radially symmetric), but rather \textit{anisotropic}, similar to what is described in Corollary \ref{c:thoptdD}.
\end{remark}

\subsection{Near-optimal sampling using hyperbolic cross densities}

In $d \geq 3$ dimensions, it is interesting to determine other densities which offer theoretically optimal or near-optimal performance. As seen in Fig.\ \ref{fig:3D_pattern_plots}, the three-dimensional theoretically-optimal density \eqref{thopt3D} has level curves that fail to be smooth at certain points. To conclude this section, we now identify a different density possessing smooth level curves which is optimal up to the log factor. This is based on hyperbolic cross sampling:

\cor{
[Near-optimal hyperbolic cross Fourier sampling, $d \geq 2$ dimensions]
\label{c:HCnearoptimal}
Let $d \geq 2$ and consider the setup of Theorem \ref{t:TVVDSdD} with $p = (p_{\omega})$ given by
\be{
\label{hypcrossdD}
p_{\omega} = \frac{C_{N,d}}{\overline{\omega_1} \cdots \overline{\omega_d}} ,\qquad \omega \in \{-N/2+1,\ldots,N/2\}^d,
}
and $s \gtrsim \log(N)$.
Then the conclusions of Theorem \ref{t:TVVDSdD} hold (with $\Gamma(p) \lesssim_d \log^d(N)$ in the case of \eqref{srani} and \eqref{sri}), provided $m$ satisfies
\be{
\label{hypcrossmeascond}
m \gtrsim_d s \cdot \log(s) \cdot \log^{d+2}(N) \cdot \left ( \log(s) \cdot \log(N) + \log(2 \varepsilon^{-1}) \right ).
}
}

This result shows that hyperbolic cross sampling is near optimal. In particular, the measurement condition \eqref{hypcrossmeascond} is worse than the optimal condition \eqref{thoptdDmeascond} only by a factor of $\log^{d-1}(N)$.
Fig.\ \ref{f:2d_level_curves} plots the level curves of two-dimensional hyperbolic cross sampling and Fig.\ \ref{fig:3D_pattern_plots} shows a three-dimensional level set. Notice that this strategy mimics the spikes of the theoretically-optimal pattern, but is less dense near the centre. However, its is a smooth function of $\overline{\omega_1},\ldots,\overline{\omega_d}$, unlike in the case of the latter. We note in passing that the hyperbolic cross is a well-known object in multivariate approximation theory \cite{TemylakovHC}, where it is used to overcome the curse of dimensionality.

\section{Main results on Walsh sampling}
\label{s:BeyondFourierTV}

We now consider Walsh sampling. The major difference between this and the previous case is that the Walsh--Hadamard transform does not commute with the discrete gradient operator. For this reason, we do not provide gradient recovery estimates, we only consider variable density sampling and we assume throughout that $d \geq 2$ (see \S \ref{s:proofsII} for some further discussion on this point).
For simplicity, we state our results for anisotropic TV only in this section. However, results for isotropic TV can be readily proved as well.

Recall from \S \ref{ss:DWT} that Walsh frequencies are indexed over $\{ 0 ,\ldots,N-1\}^d$. Thus, we now consider variable density sampling according to probability distributions $p = (p_i)_{i \in \{0,\ldots,N-1\}^d}$ over this set. We let $\Gamma(p) \geq 0$ be the smallest constant such that
\be{
\label{CpdefWalsh}
\left ( 1 + \nmu{i}^{d}_{\ell^\infty} \right )^{-1} \leq \Gamma(p) p_i,\qquad \forall i \in \{0,\ldots,N-1\}^d.
}
Once more we notice that $\Gamma(p) \geq 1$, since the $p$ is a probability distribution and the left-hand side is equal to one when $i = (0,\ldots,0)$. Our main result is the following:

\thm{
[Variable density Walsh sampling, $d \geq 2$ dimensions]
\label{t:WalshTV}
Let $d \geq 2$, $0 < \varepsilon < 1$, $2 \leq s , m \leq N^d$ and $\Omega \subseteq \{0,\ldots,N-1\}^{d}$ be a variable density sampling scheme of order $m$ corresponding to a probability distribution $p = (p_i)$.  Let $A = \frac{1}{\sqrt{m}} P_{\Omega} H$ and suppose that
\bes{
m \gtrsim_{d} \Gamma(p) \cdot s \cdot \log^2(N/s) \cdot \log(N) \cdot \log(\Gamma(p) s) \cdot \left ( \log(\Gamma(p) s) \cdot \log(N) + \log(\varepsilon^{-1})  \right ),
}
where $\Gamma(p)$ is as in \eqref{CpdefWalsh}.
Then the following holds with probability at least $1-\varepsilon$.  For all $x \in \bbC^{N^d}$ and $y = A x + e \in \bbC^m$, where $\nm{e}_{\ell^2} \leq \eta$ for some $\eta \geq 0$, every minimizer $\hat{x}$ of \eqref{TVminprob} satisfies
\be{
\label{Walsherr}
\nm{x - \hat{x}}_{\ell^2} \lesssim \frac{\sigma_{s}(\nabla x)_{\ell^1}}{\sqrt{s \log(N)}} + \sqrt{\Gamma(p)} \eta .
}
}
Similar to Fourier sampling, this result asserts stable and robust recovery of the image $x$ from Walsh measurements, up to log factors, taken according to the appropriate variable density strategy.  We now consider the choice of sampling strategy:

\lem{
\label{l:GammaWalsh}
Let $p = (p_i)$ be a probability distribution and $\Gamma(p)$ be as in \eqref{CpdefWalsh}. Then $\Gamma(p) \gtrsim_d \log(N)$. Moreover, if 
\bes{
p_{i} = \frac{C_{N,d}}{1+\nm{i}^{d}},\qquad i \in \{0,\ldots,N-1\}^d,
}
where $\nm{\cdot}$ is any norm, then $\Gamma(p) \lesssim_d \log(N)$.
}

\cor{[Theoretically-optimal variable density Walsh sampling, $d \geq 2$ dimensions]
\label{c:thoptdDWalsh}
Consider the setup of Theorem \ref{t:WalshTV} with $p = (p_i)$ given by
\bes{
p_i = \frac{C_{N,d}}{1+\nm{i}^{d}},\qquad i \in \{0,\ldots,N-1\}^d,
}
where $\nm{\cdot}$ is any norm on $\bbR^d$, and $s \gtrsim \log(N)$. Then the conclusions of Theorem \ref{t:WalshTV} hold (with $\Gamma(p) \lesssim_d \log(N)$ in \eqref{Walsherr}), provided $m$ satisfies
\bes{
m \gtrsim_{d} s \cdot \log(s) \cdot \log^2(N/s) \cdot \log^2(N)  \left ( \log(s) \cdot \log(N) + \log(\varepsilon^{-1}) \right ).
}
}
Much like with Fourier sampling (Corollary \ref{c:thoptdD}), this result asserts a class of theoretically-optimal sampling strategies which ensure stable and robust recovery in $d \geq 2$ dimensions from Walsh measurements. We are unaware of any similar result in the literature. 
It is notable, however, that the optimal sampling strategy is radially symmetric in all dimensions, unlike in the Fourier case. See Remark \ref{r:WalshHaar} below.
We also note that the log term in Corollary \ref{c:thoptdDWalsh} is worse by a factor of $\log^2(N/s) / \log(N)$ than that of Corollary \ref{c:thoptdD}. This stems from the proof strategy, and specifically the different technique that is used in the Walsh case in the absence of the commuting property.

\begin{remark}
\label{r:WalshHaar}
Similar to the Fourier case (Remark \ref{r:whysuboptimal}), the explanation for why radially-symmetric sampling works in any dimensions for Walsh sampling can be traced to the use of Haar wavelets in the proof. Haar wavelets and Walsh functions are intimately related, see \eqref{WalshHaar}. This means that the Walsh transform of a Haar wavelet behaves far more nicely than its Fourier transform, which in turn allows one to use a radially-symmetric sampling pattern in any dimension. See also \cite{AdcockEtAlCISampStrat}. By contrast, as shown in \S \ref{ss:suboptimalradial} the use of a radially-symmetric sampling pattern in the Fourier case leads to a measurement condition with a factor of $N^{d-2}$.
\end{remark}

\section{Experiments and discussion}\label{s:experiments}

We now show a series of further numerical experiments.

\subsection{Experimental setup}

\begin{figure}
	\begin{center}
		\includegraphics[width=0.35\paperwidth,clip=true,trim=30mm 80mm 20mm 80mm]{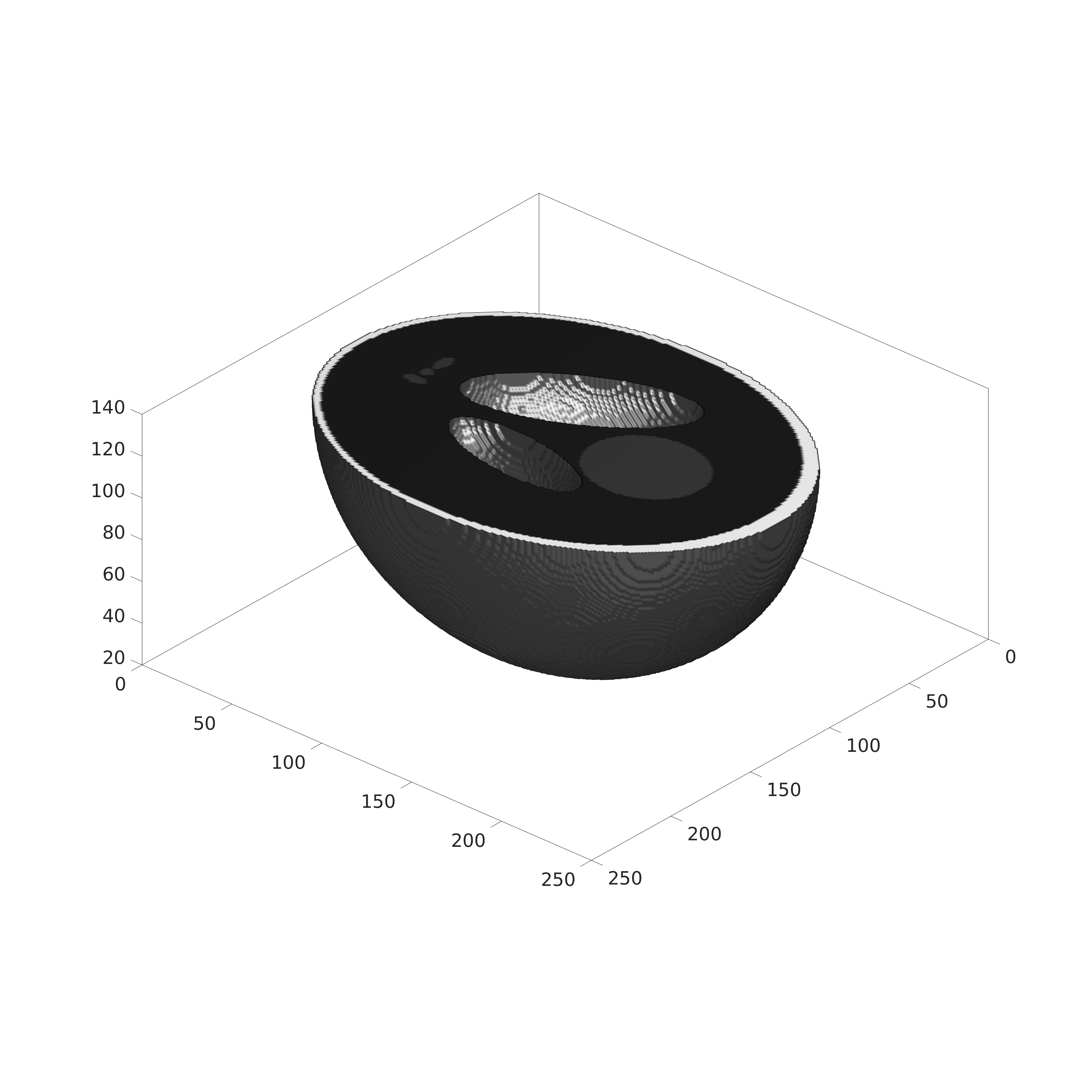}
		\includegraphics[width=0.35\paperwidth,clip=true,trim=30mm 80mm 20mm 80mm]{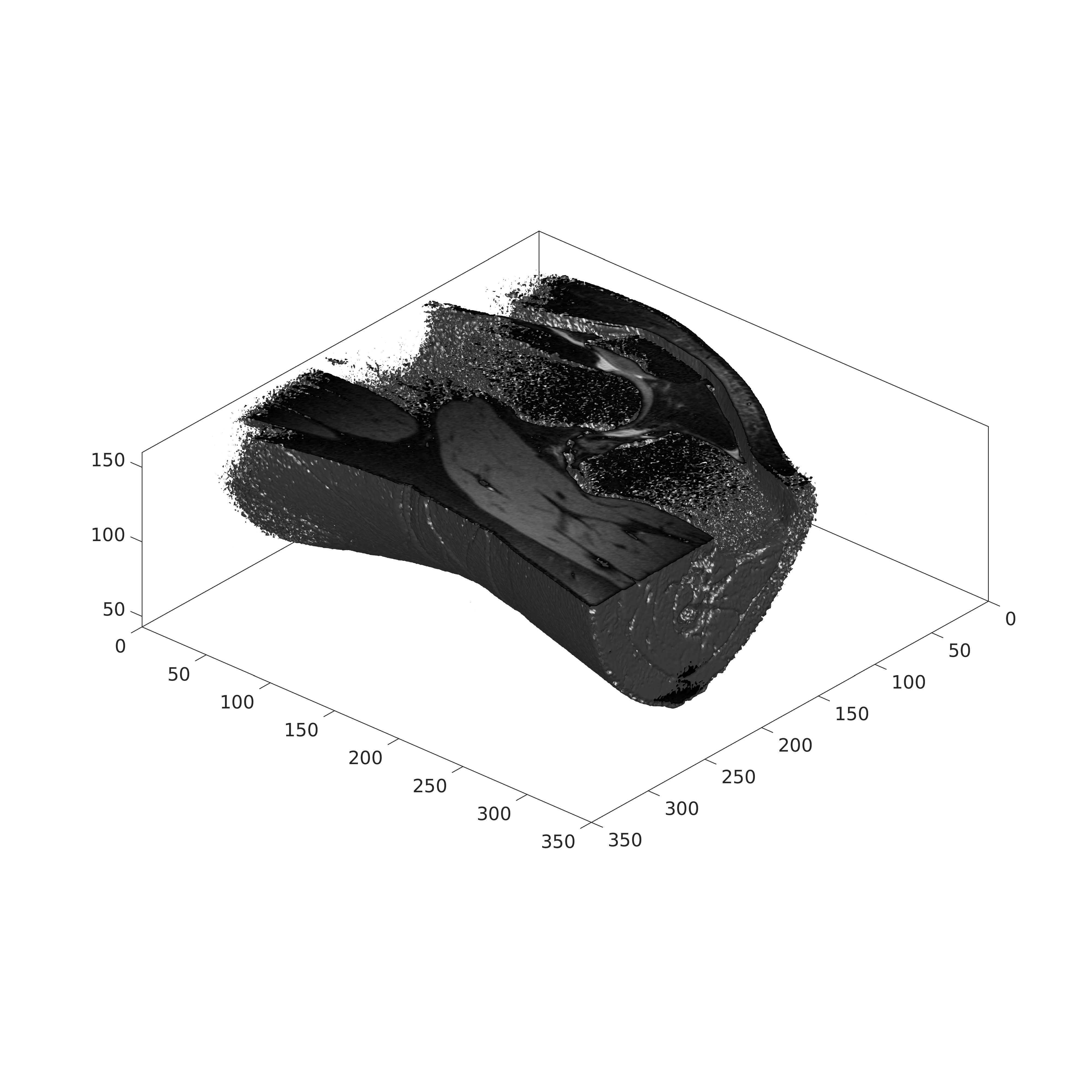}
	\end{center}
	\vspace{-2mm}
	\caption{The {\bf(left)} Shepp--Logan phantom (size $256^3$) generated with \url{https://www.mathworks.com/matlabcentral/fileexchange/9416-3d-shepp-logan-phantom} and {\bf(right)} ``knee MRI'' (size $320^2\times 256$) three-dimensional test images for Fourier sampling. The ``knee MRI'' test image is generated from the MRI data from case 11 of the ``Stanford Fullysampled 3D FSE Knees'' dataset available at \url{https://mridata.org}, and was zero-padded to obtain a test image of size $320^3$. 
	}
	\label{fig:3D_test_images}
\end{figure}

\begin{figure}
	\begin{center}
		\includegraphics[width=0.23\paperwidth,clip=true,trim=0mm 0mm 0mm 0mm]{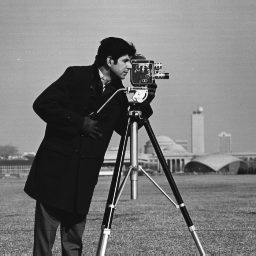}
		\includegraphics[width=0.23\paperwidth,clip=true,trim=0mm 0mm 0mm 0mm]{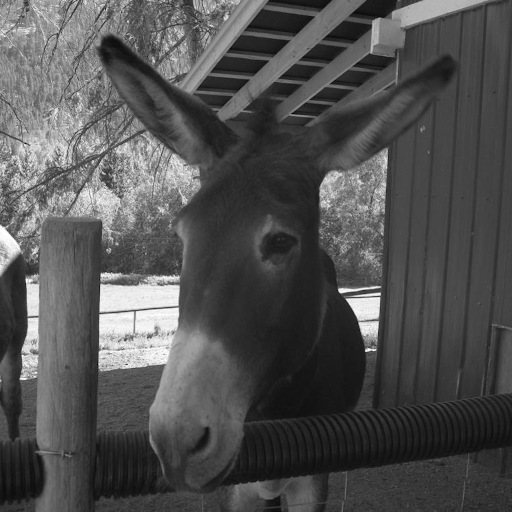}
		\includegraphics[width=0.23\paperwidth,clip=true,trim=0mm 0mm 0mm 0mm]{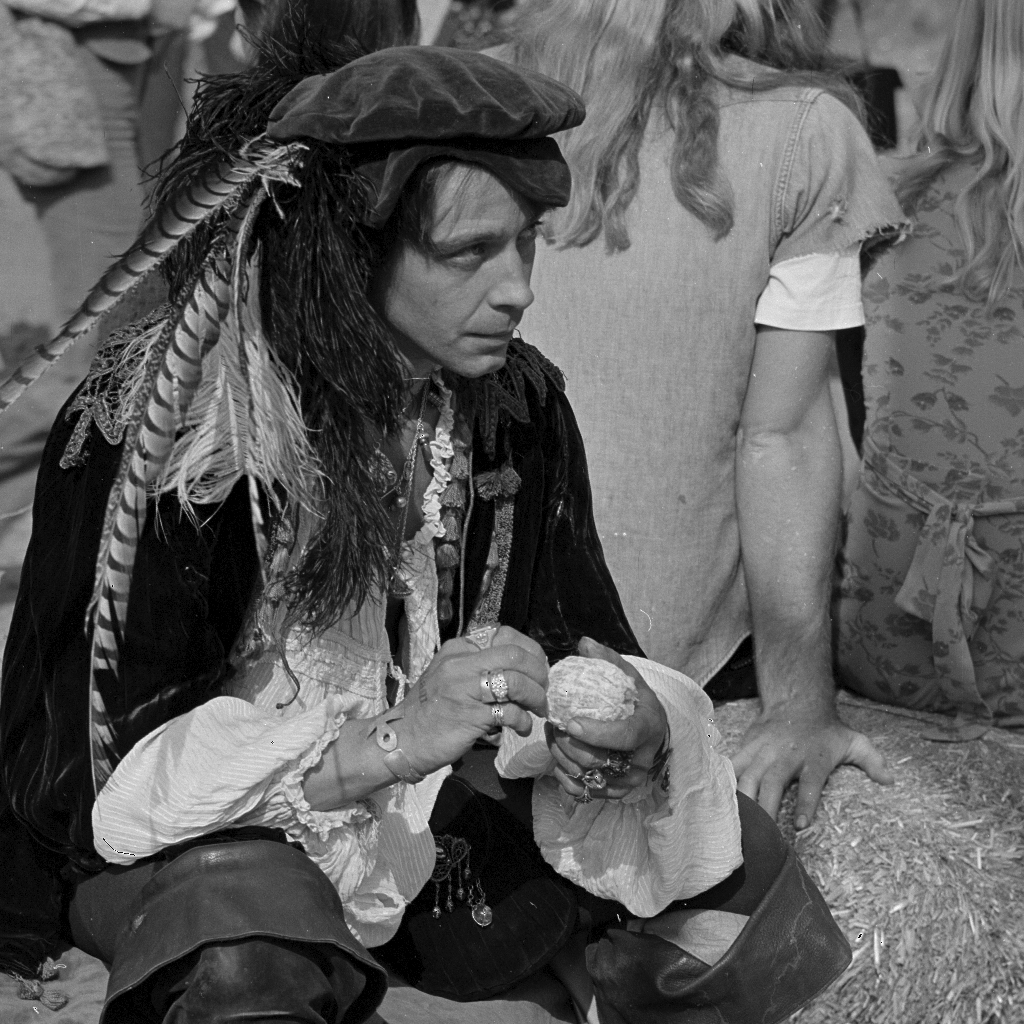}
	\end{center}
	\vspace{-2mm}
	\caption{The {\bf(left)} ``cameraman'' (size $256^2$), {\bf(middle)} ``donkey'' (size $512^2$) and {\bf(right)} ``man'' (available in sizes $256^2$, $512^2$ and $1024^2$) test images for Walsh--Hadamard sampling. 
	}
	\label{fig:Walsh_2D_images}
\end{figure}

We first describe the details of these experiments. We focus on reconstructing either three-dimensional MRI or test data, Fig.\ \ref{fig:3D_test_images}, with Fourier sampling or two-dimensional natural images with Walsh sampling, Fig.\ \ref{fig:Walsh_2D_images}. For each of our experiments, we run 20 trials of reconstructing the given image using a modified version of the {\tt NESTA} solver \cite{Becker2011} which allows for reconstruction of two- or three-dimensional images via TV-minimization. The NESTA parameters used are designed for images whose values lie in the range $[0,100]$, and therefore we rescale all images to this range. These parameters are $\mu = 0.2$, $5$ outer iterations, $5000$ inner iterations, a tolerance of $10^{-5}$ and $\delta = 10^{-5}$. We run 20 random trials, each with a different seed, and plot the average PSNR values.

We consider six sampling patterns, four of which have already been introduced in this paper. These are: \textit{uniform random}, \textit{hyperbolic cross} \eqref{hypcrossdD}, the \textit{theoretically-optimal} pattern (see Corollaries \ref{c:thoptdD} and \ref{c:thoptdDWalsh} for Fourier and Walsh--Hadamard respectively) and the \textit{inverse square law}. We also consider two further sampling patterns, \textit{half-half} sampling and \textit{multilevel random} subsampling. The former fully samples the lowest $m/2$ frequencies and then randomly subsamples the remainder. The latter was introduced in \cite{AHPRBreaking}. In this scheme, one first divides frequency space into $r$ annular regions $B_1,\ldots,B_r$ of equal width. Next, one defines a decreasing sampling fraction $p_k = m_k / |B_k|$ as
\bes{
p_k = 1,\quad k = 1,\ldots,r_0,\qquad p_k = \exp \left ( - \left ( \frac{b(k-r_0)}{r-r_0} \right )^a \right ),\quad k = r_0+1,\ldots,r,
}
where $r_0$ and $a$ are parameters, and $b$ is chosen so that $\sum^{r}_{k=1} m_k = m$. Finally, within each region $B_k$ one selects $m_k$ samples uniformly and randomly. We refer to \cite{AHPRBreaking} for further details.

\subsection{Fourier sampling}

Fig.\ \ref{fig:3D_PSNR} displays the PSNR values for reconstructing the two Fourier test images shown in Fig.\ \ref{fig:3D_test_images}. Note that the reconstruction is performed in three dimensions, while the Fig.\ shows the PSNR versus frame number in the $z$-direction.

\begin{figure}
	\begin{center}
		\includegraphics[width=0.14\paperwidth,clip=true,trim=12mm 0mm 20mm 0mm]{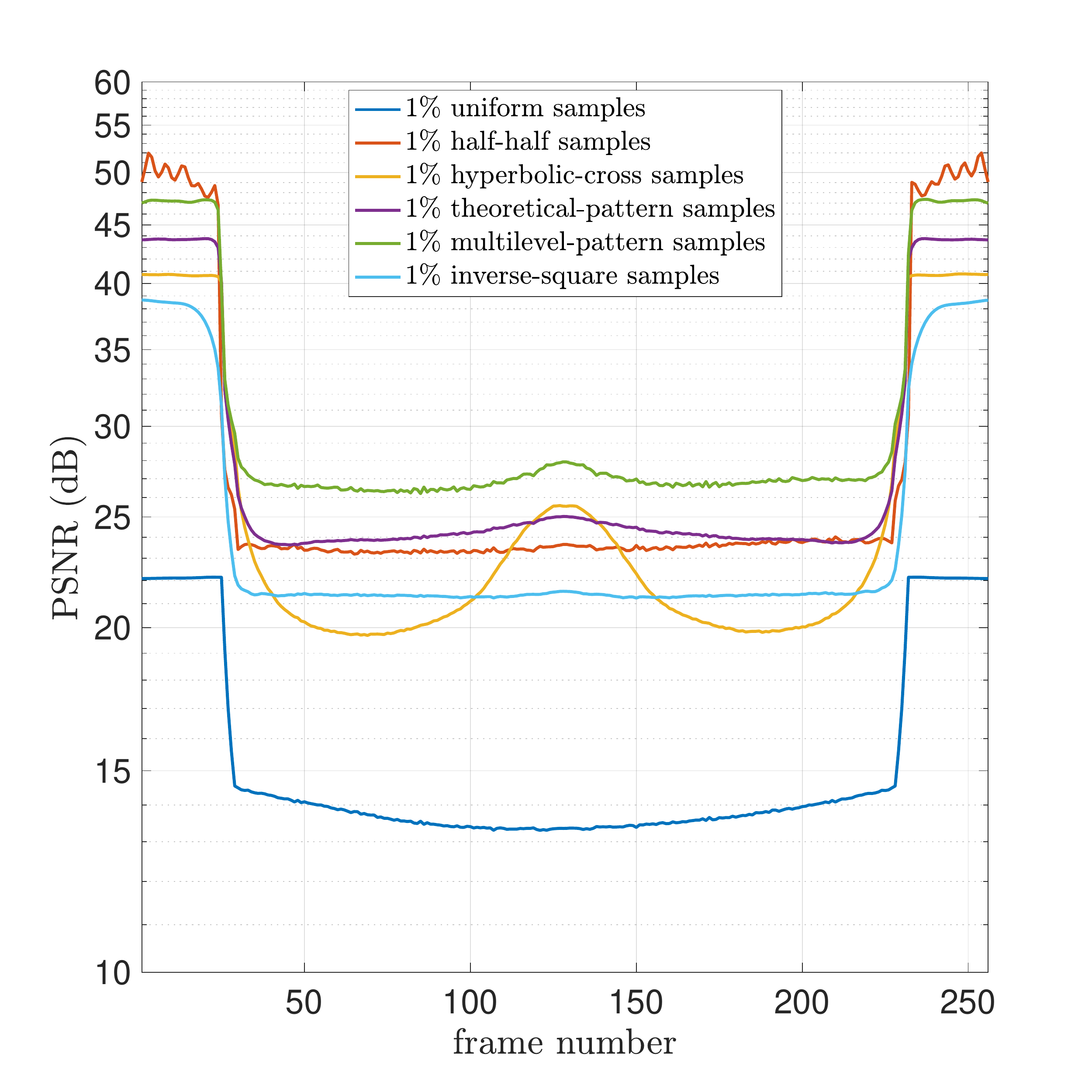}
		\includegraphics[width=0.14\paperwidth,clip=true,trim=12mm 0mm 20mm 0mm]{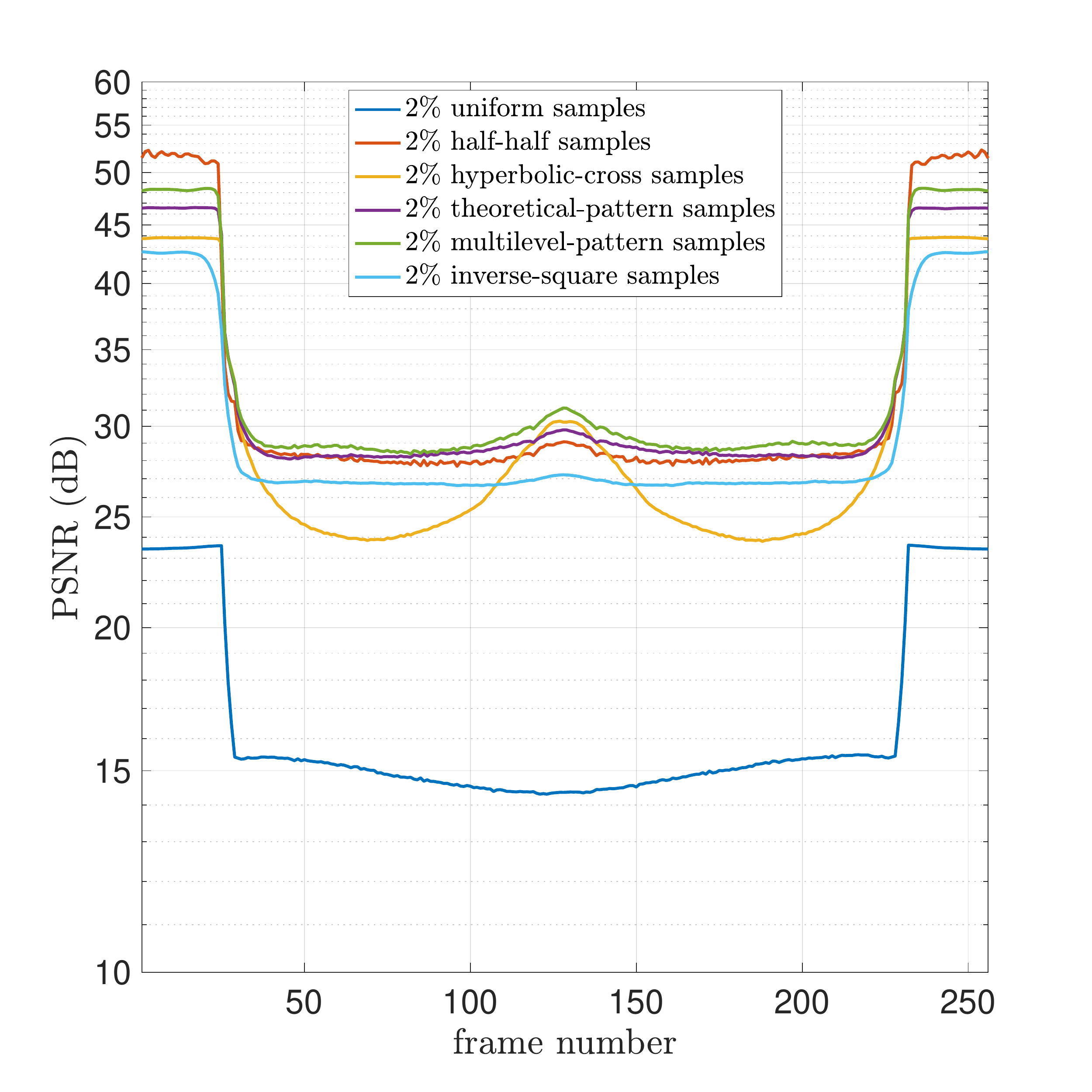}
		\includegraphics[width=0.14\paperwidth,clip=true,trim=12mm 0mm 20mm 0mm]{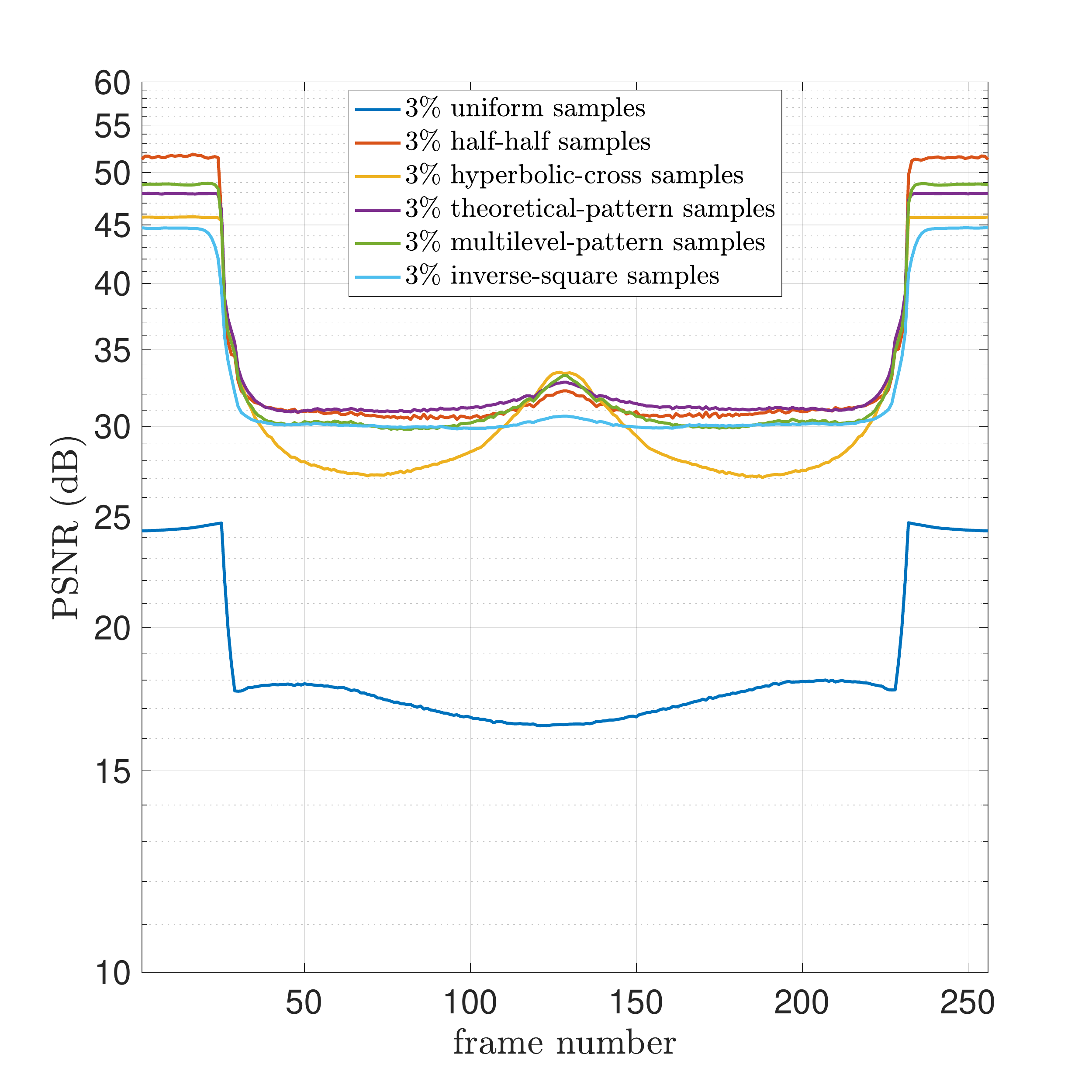}
		\includegraphics[width=0.14\paperwidth,clip=true,trim=12mm 0mm 20mm 0mm]{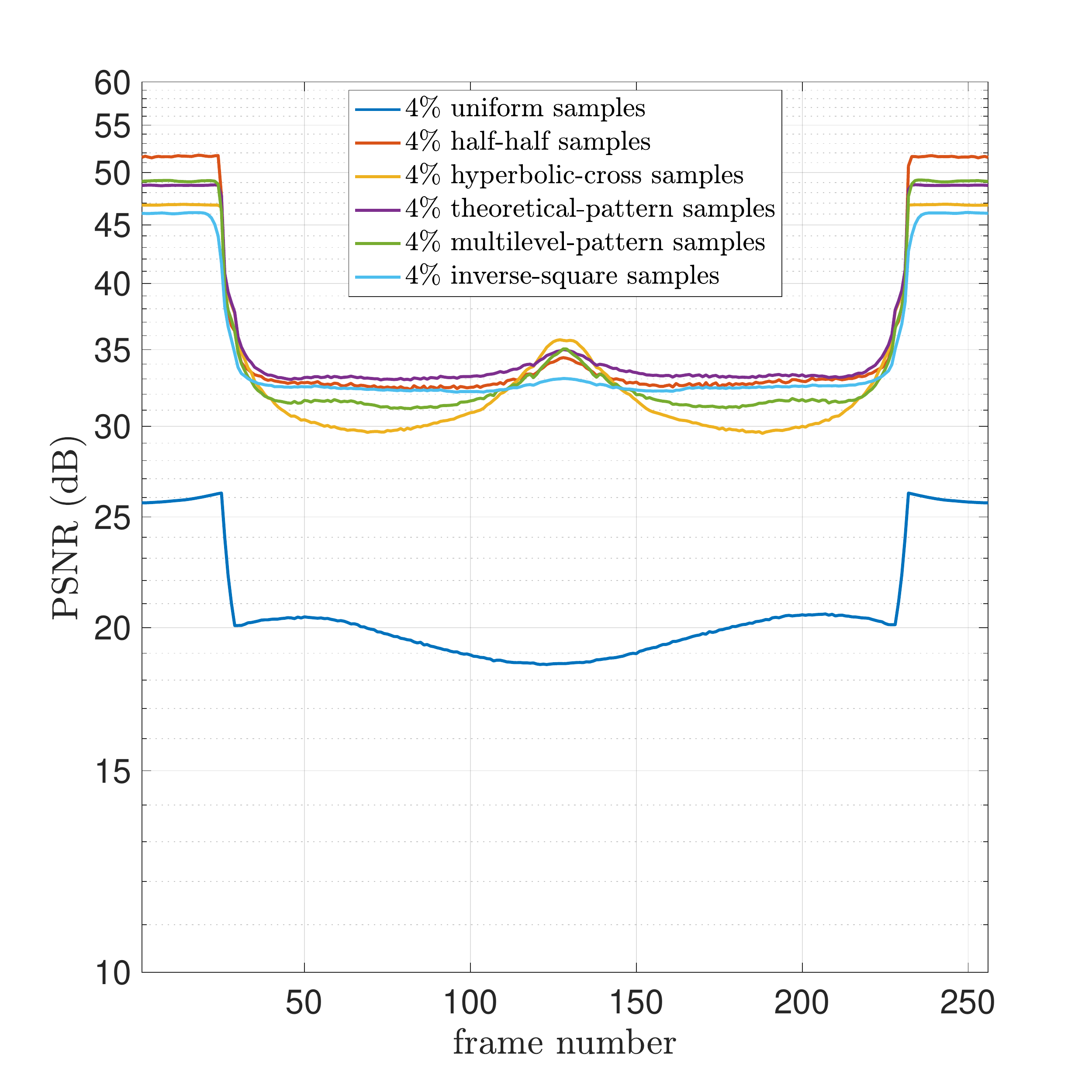}
		\includegraphics[width=0.14\paperwidth,clip=true,trim=12mm 0mm 20mm 0mm]{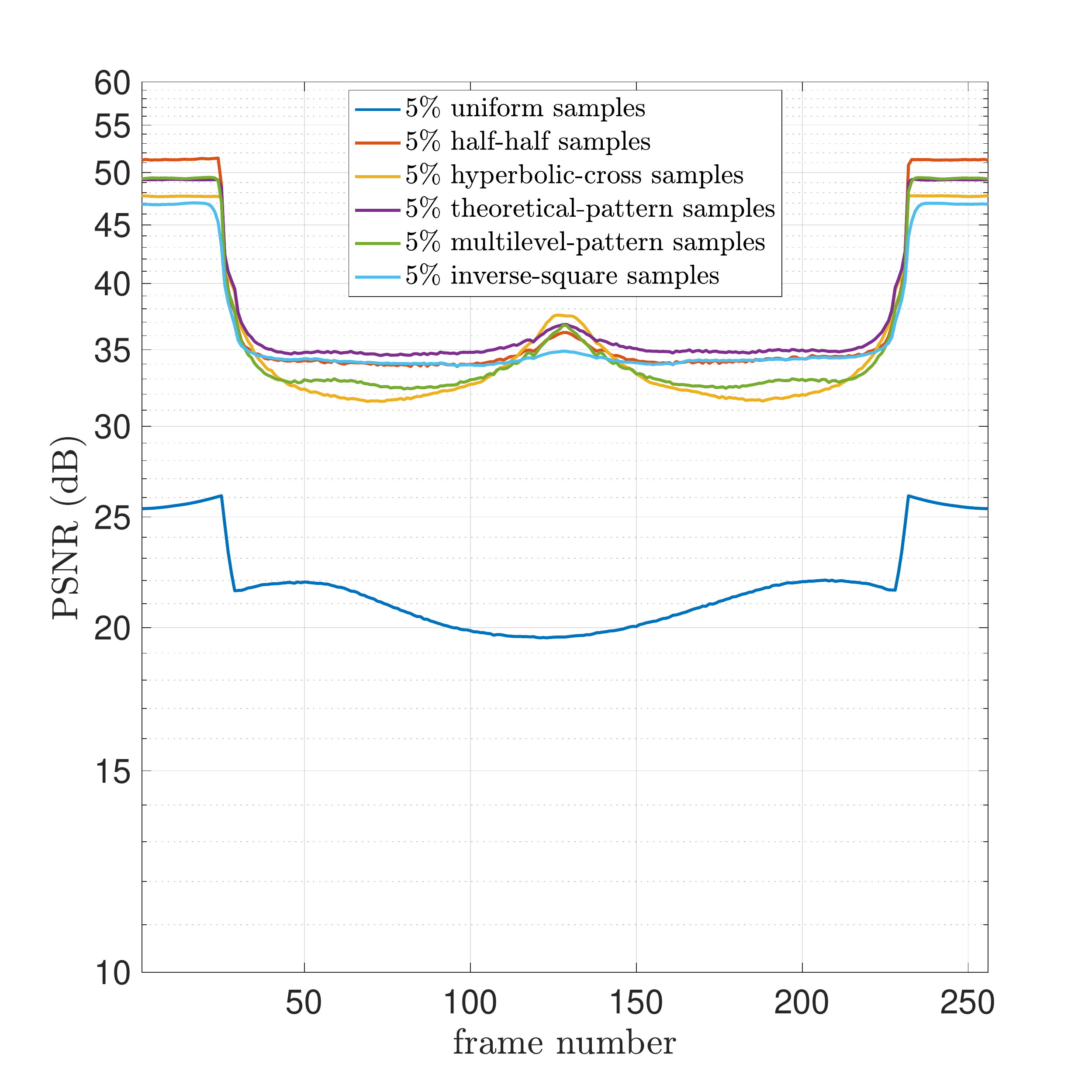} \\
		\includegraphics[width=0.14\paperwidth,clip=true,trim=12mm 0mm 20mm 0mm]{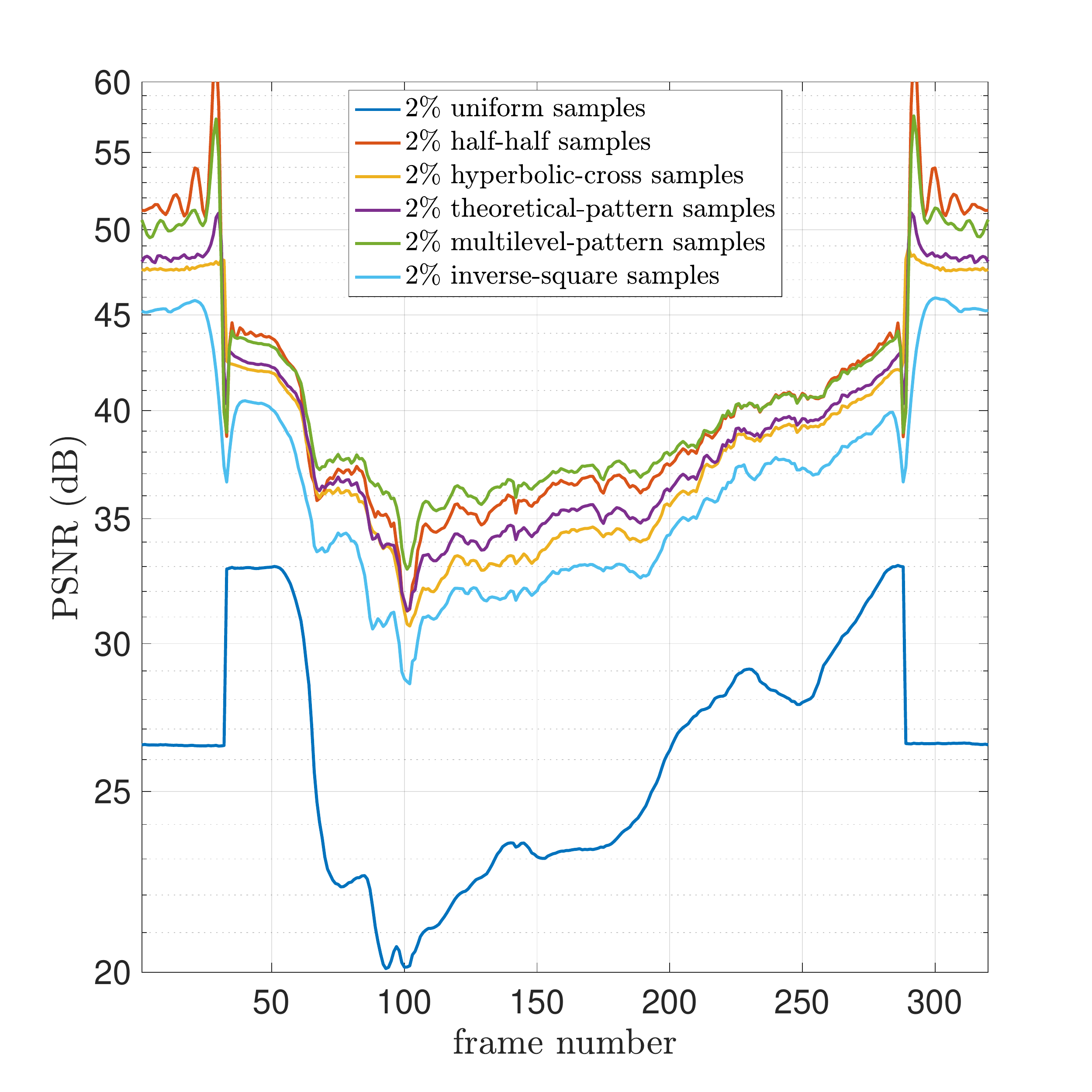}
		\includegraphics[width=0.14\paperwidth,clip=true,trim=12mm 0mm 20mm 0mm]{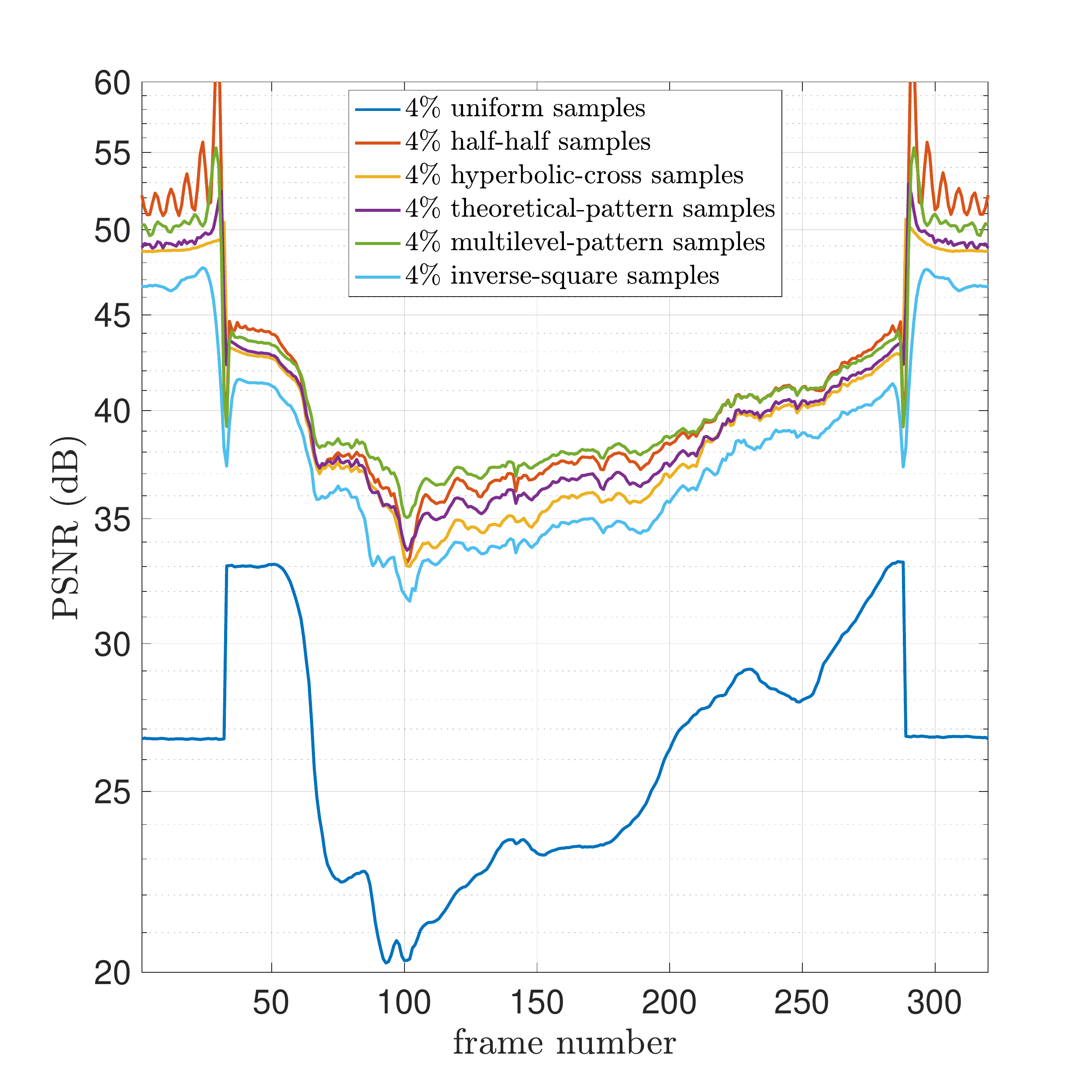}
		\includegraphics[width=0.14\paperwidth,clip=true,trim=12mm 0mm 20mm 0mm]{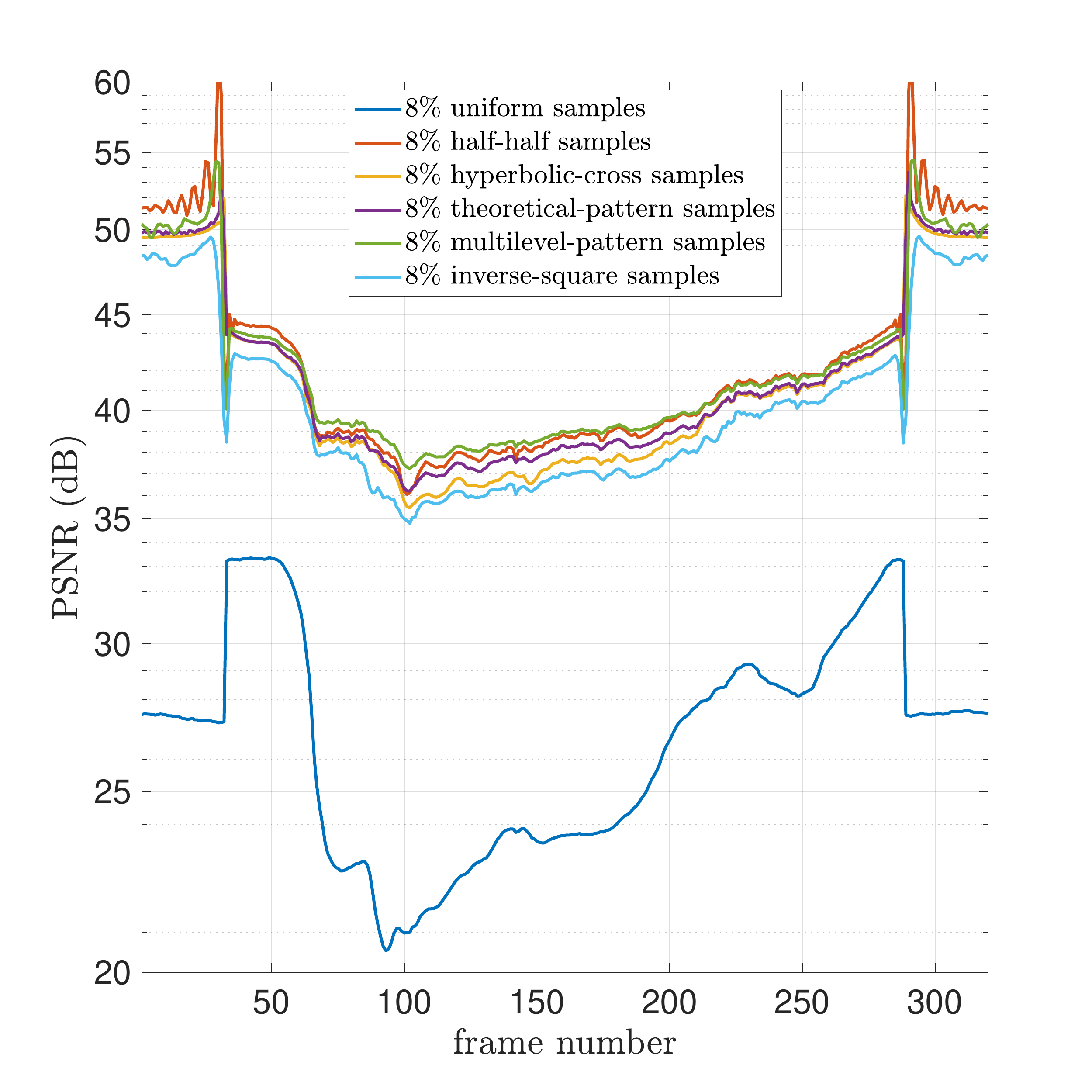}
		\includegraphics[width=0.14\paperwidth,clip=true,trim=12mm 0mm 20mm 0mm]{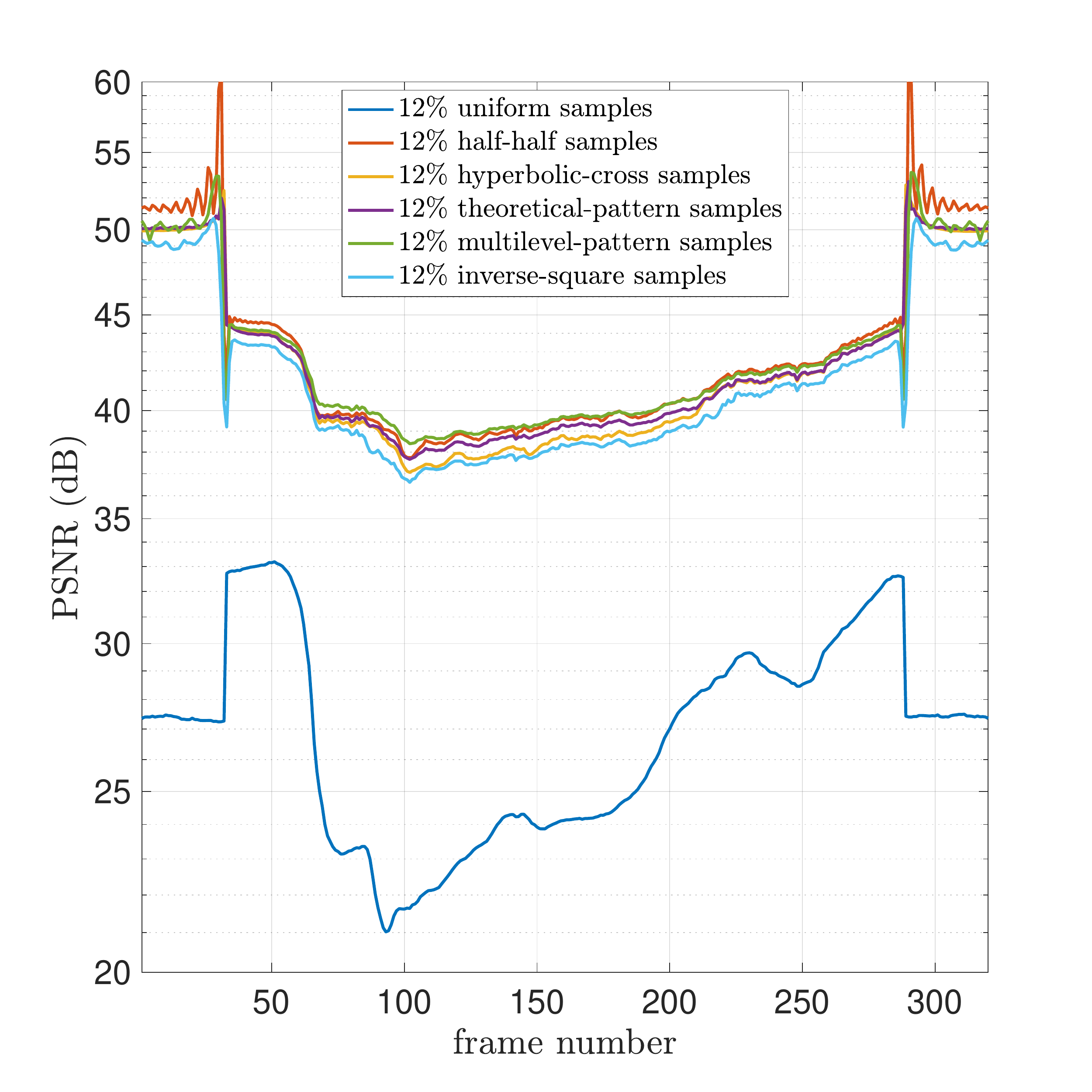}
		\includegraphics[width=0.14\paperwidth,clip=true,trim=12mm 0mm 20mm 0mm]{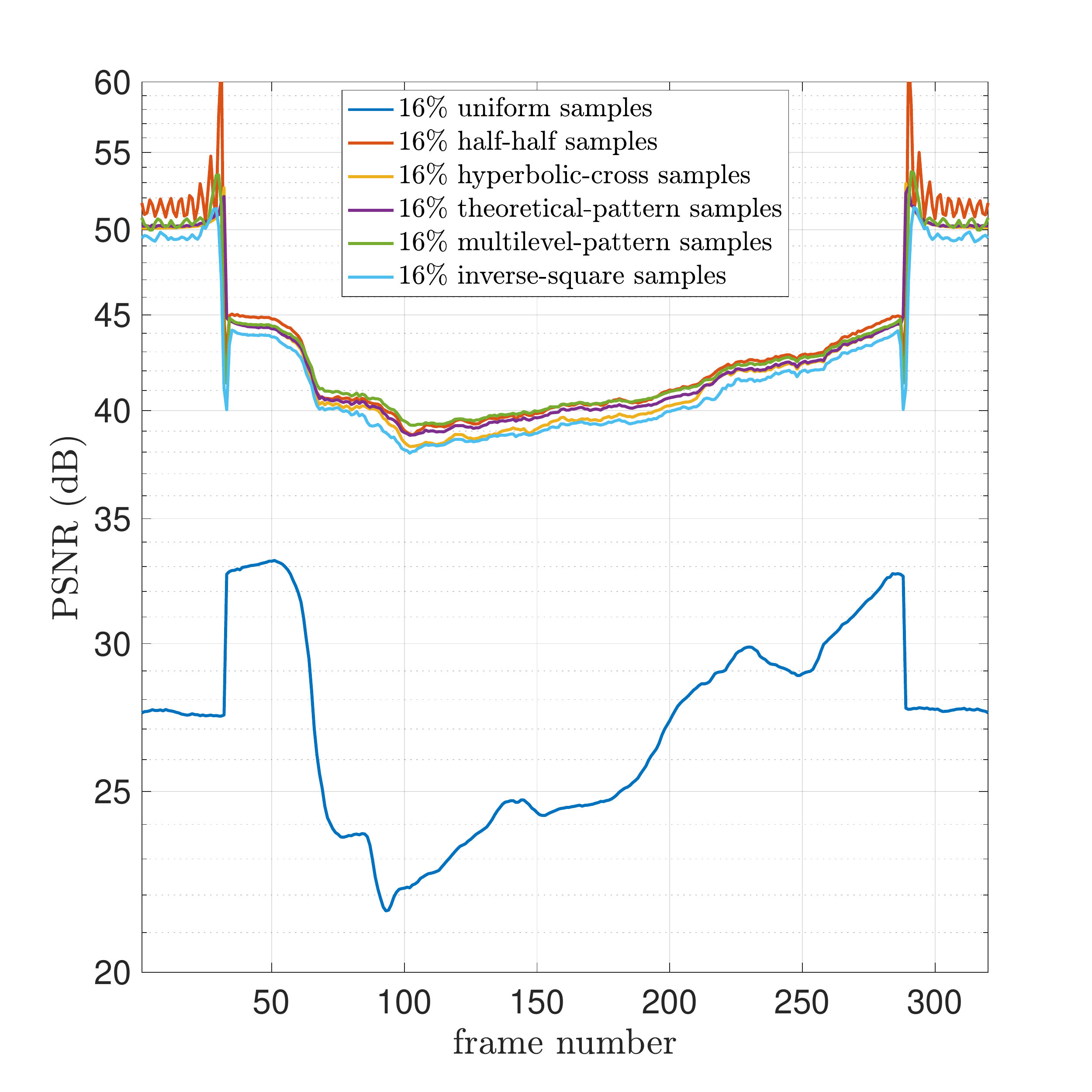}
	\end{center}
	\vspace{-2mm}
	\caption{Average PSNR values over 20 trials in reconstructing each frame of the Shepp--Logan phantom (top) and knee MRI (bottom) test image from Fig.\ \ref{fig:3D_test_images} with different Fourier sampling strategies as the sampling percentage is increased from left to right. Here the multilevel sampling is performed with $a=1$, $r=20$ and $r_0=1$ (top) or $r_0 = 2$ (bottom).}
	\label{fig:3D_PSNR}
\end{figure}

As expected, uniform random sampling performs very poorly in comparison to all other schemes. Similar, as predicted in \S \ref{ss:suboptimalradial}, the inverse-square law generally performs relatively poorly in comparison to the others, especially for the more complicated knee MRI image. 

Interestingly, the multilevel scheme performs amongst the best, especially at low sampling percentages. Often, it outperforms the theoretically-optimal pattern. This is in spite of the fact that the multilevel scheme is radially symmetric, whereas it was argued in \S \ref{ss:suboptimalradial} radially-symmetric patterns, at least those that draw samples from a single density, are theoretically suboptimal in three dimensions. 

Typically, in the experiments, the second and third best performers are the theoretically-optimal and half-half schemes. It should come as little surprise that the latter performs worse than multilevel random sampling: full sampling followed by uniform random sampling is a relatively crude strategy. Interestingly, the behaviour of the hyperbolic cross scheme is much more heavily dependent on the frame for the Shepp--Logan phantom -- it is clearly too anisotropic to recover the details in some of the frames -- than the other patterns. But its relative frame-by-frame performance on the knee MRI image is similar to the other patterns.  

In Fig.\ \ref{fig:knee_MRI_frame_102} we show the recovery of an individual frame for two different sampling percentages. In both cases the half-half and multilevel patterns give a slightly sharper image in comparison to the theoretical pattern, which is slightly more blurred. As one would expect, the hyperbolic cross and inverse-square law both present substantial additional artefacts.

\begin{figure}
\begin{center}
\includegraphics[width=0.6100\paperwidth,clip=true,trim=11mm 3mm 11mm 0mm]{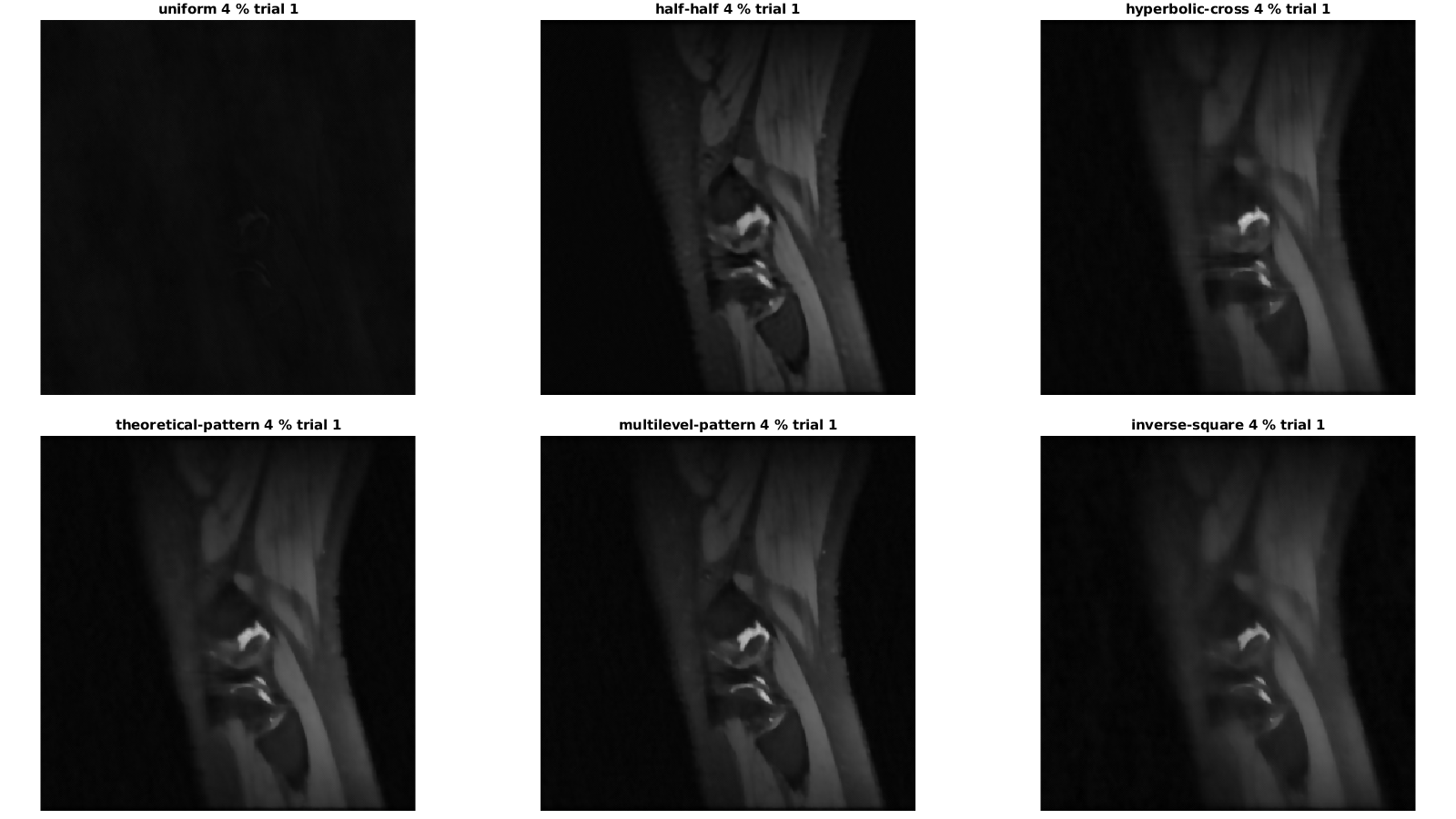}\\[0.2cm]
\includegraphics[width=0.6100\paperwidth,clip=true,trim=11mm 3mm 11mm 0mm]{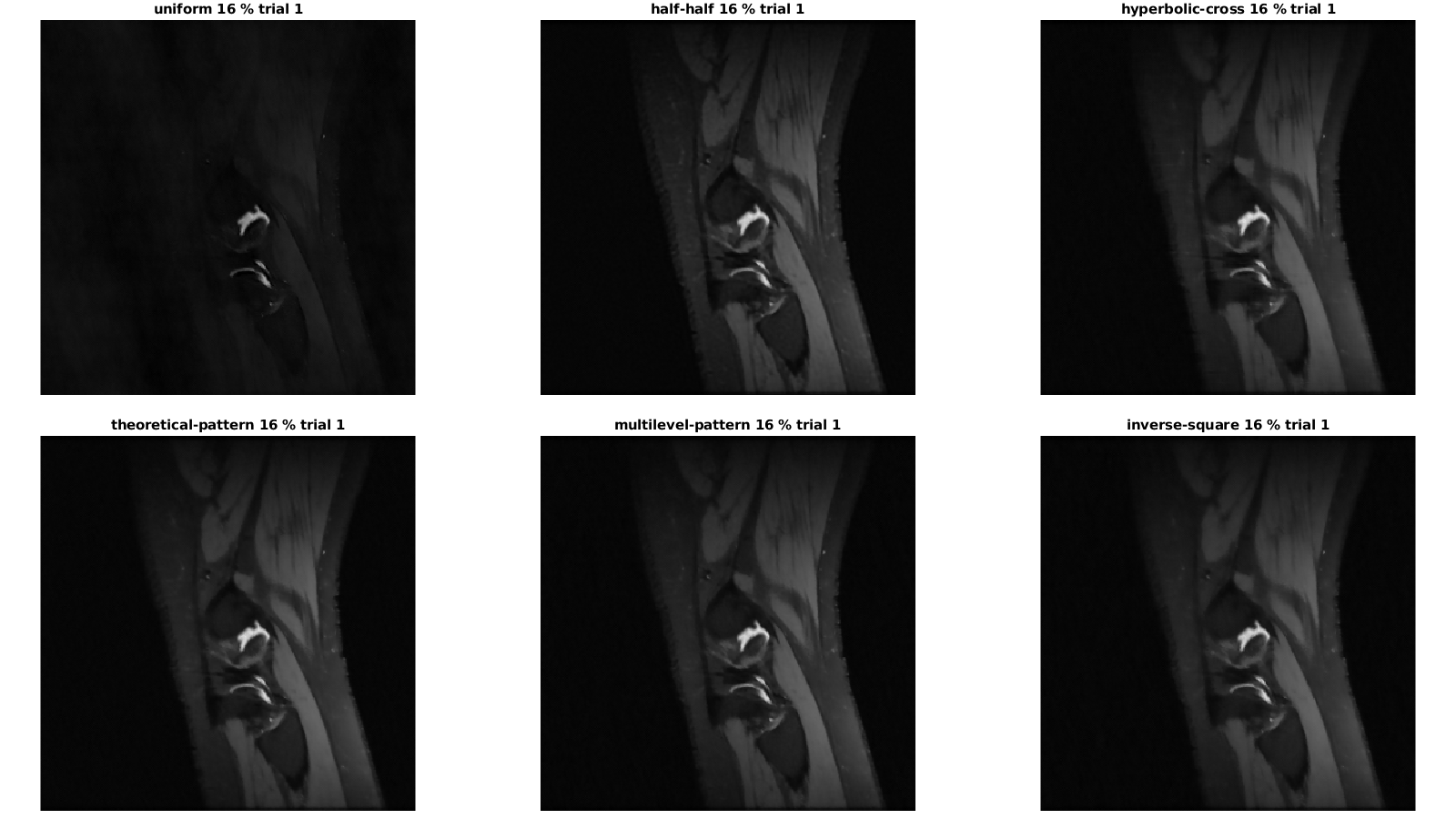}
\end{center}
\vspace{-2mm}
\caption{Comparison of reconstructions from trial 1 of 20 of frame 102 from the zero-padded ``knee MRI'' data with each method at {\bf(rows 1 \& 2)} 4\% and {\bf(rows 3 \& 4)} 16\% subsampling. 
}
\label{fig:knee_MRI_frame_102}
\end{figure}

\subsection{Walsh sampling}

In Fig.\ \ref{fig:Walsh_samp_comp} we consider two-dimensional Walsh sampling for the images in Fig.\ \ref{fig:Walsh_2D_images}. Across all images and all sampling percentages, the multilevel scheme consistently performs amongst the best, with generally the theoretically-optimal or half-half pattern performing second best. The relative performance of the half-half scheme is quite heavily dependent on the image, with it performing worse on the ``cameraman'' image but better on the ``donkey'' and ``man'' images. This is not surprising. The ``cameraman'' image is relatively simple, meaning the half-half scheme likely oversamples the high frequency regime. Conversely, the ``donkey'' and ``man'' images are more complex, meaning more sampling is needed at higher frequencies to resolve the fine details. This effect can be further examined by considering the ``man'' image at different resolutions, as we do in Fig.\ \ref{fig:Walsh_man_img_size_comp}. At low resolution the half-half scheme actually outperforms the multilevel scheme whenever the sampling percentage is greater than 12\%, whereas at higher resolution this only occurs after 21\%. This can once more be traced to the properties of the image. At low resolution, the edges of the image are relatively closer together, thus requiring more higher-frequency samples to resolve, whereas at higher resolutions they are relatively better separated.

\begin{figure}
	\begin{center}
		\includegraphics[width=0.23\paperwidth,clip=true,trim=8mm 6mm 18mm 15mm]{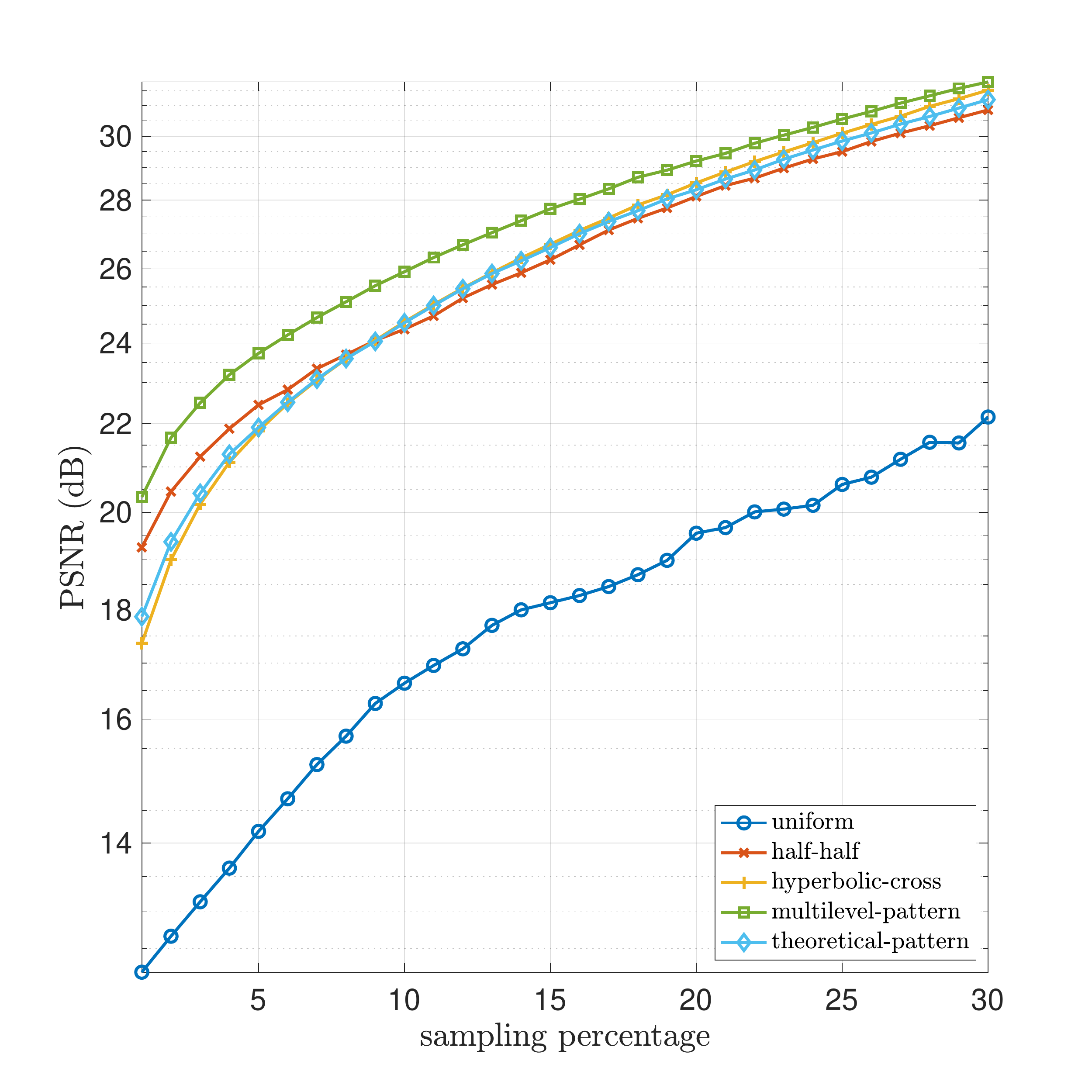}
		\includegraphics[width=0.23\paperwidth,clip=true,trim=8mm 6mm 18mm 15mm]{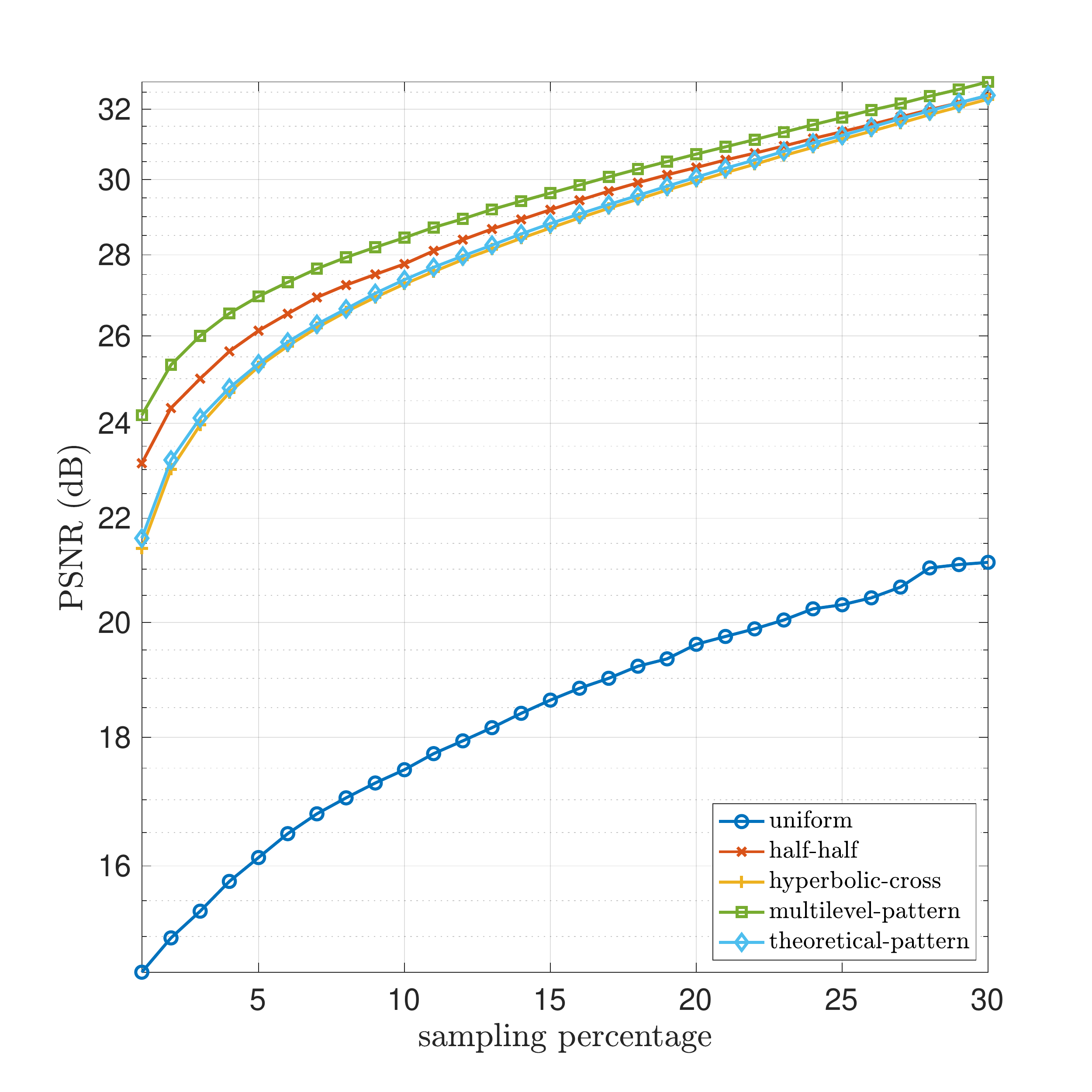}
		\includegraphics[width=0.23\paperwidth,clip=true,trim=8mm 6mm 18mm 15mm]{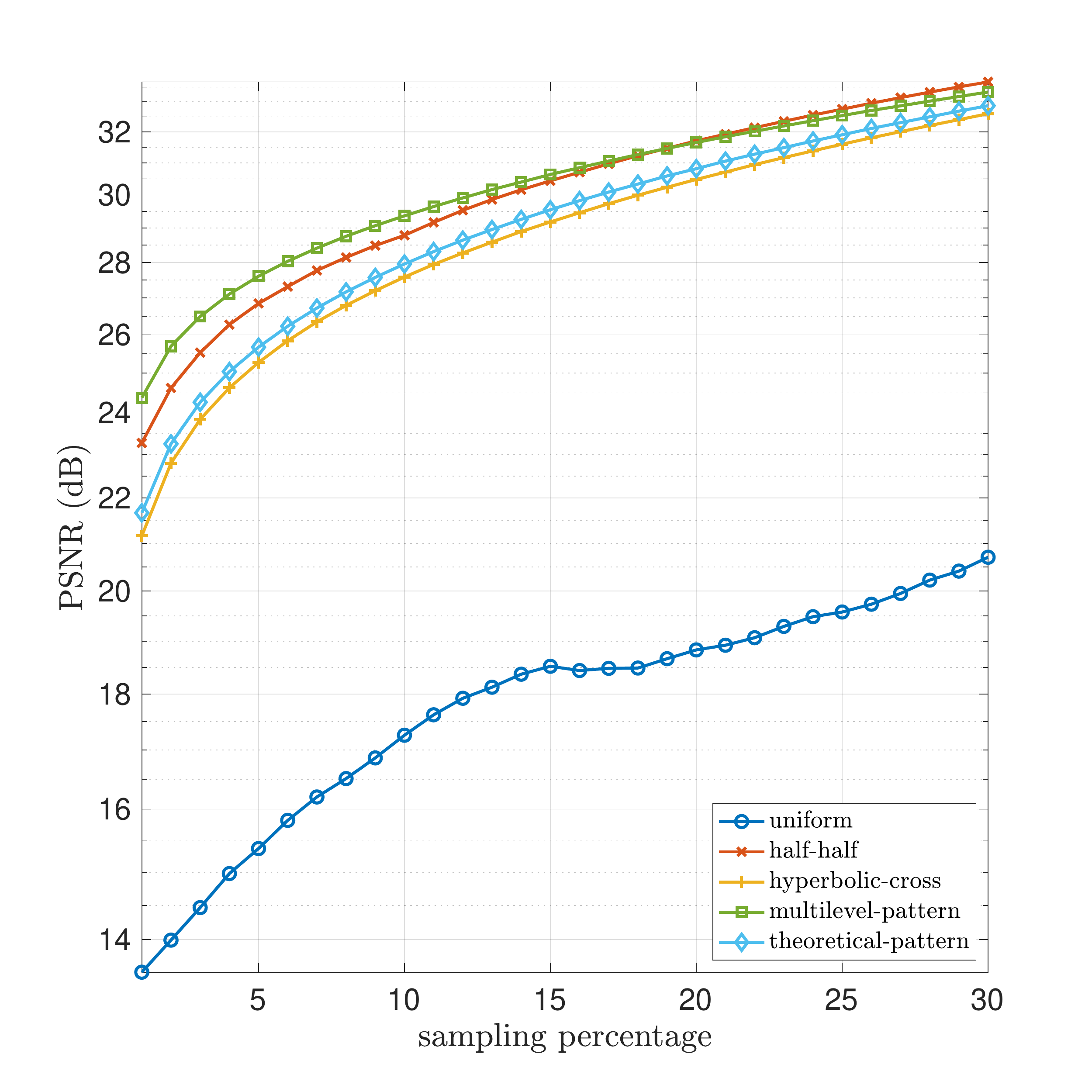}
	\end{center}
	\vspace{-2mm}
	\caption{Comparison of the average PSNR values over 20 trials for various sampling patterns in reconstructing the {\bf(left)} ``cameraman,'' {\bf(middle)} ``donkey,'' and {\bf(right)} ``man'' test images with Walsh sampling. Here the multilevel sampling is performed with $a=2$, $r=30$ and $r_0=2$.}
	\label{fig:Walsh_samp_comp}
\end{figure}

This observation is related to the previous discussion in \S \ref{s:limitations}. As originally considered in \cite{PoonTV}, the optimal sampling strategy in practice depends on the image, resolution and sampling percentage -- in particular, the geometry of its edges. This is not reflected in our theoretically-optimal sampling strategies (which are image independent). Yet it is notable that good all-round performance can be achieved with the multilevel random sampling strategy.

\begin{figure}
	\begin{center}
		\includegraphics[width=0.23\paperwidth,clip=true,trim=8mm 6mm 18mm 15mm]{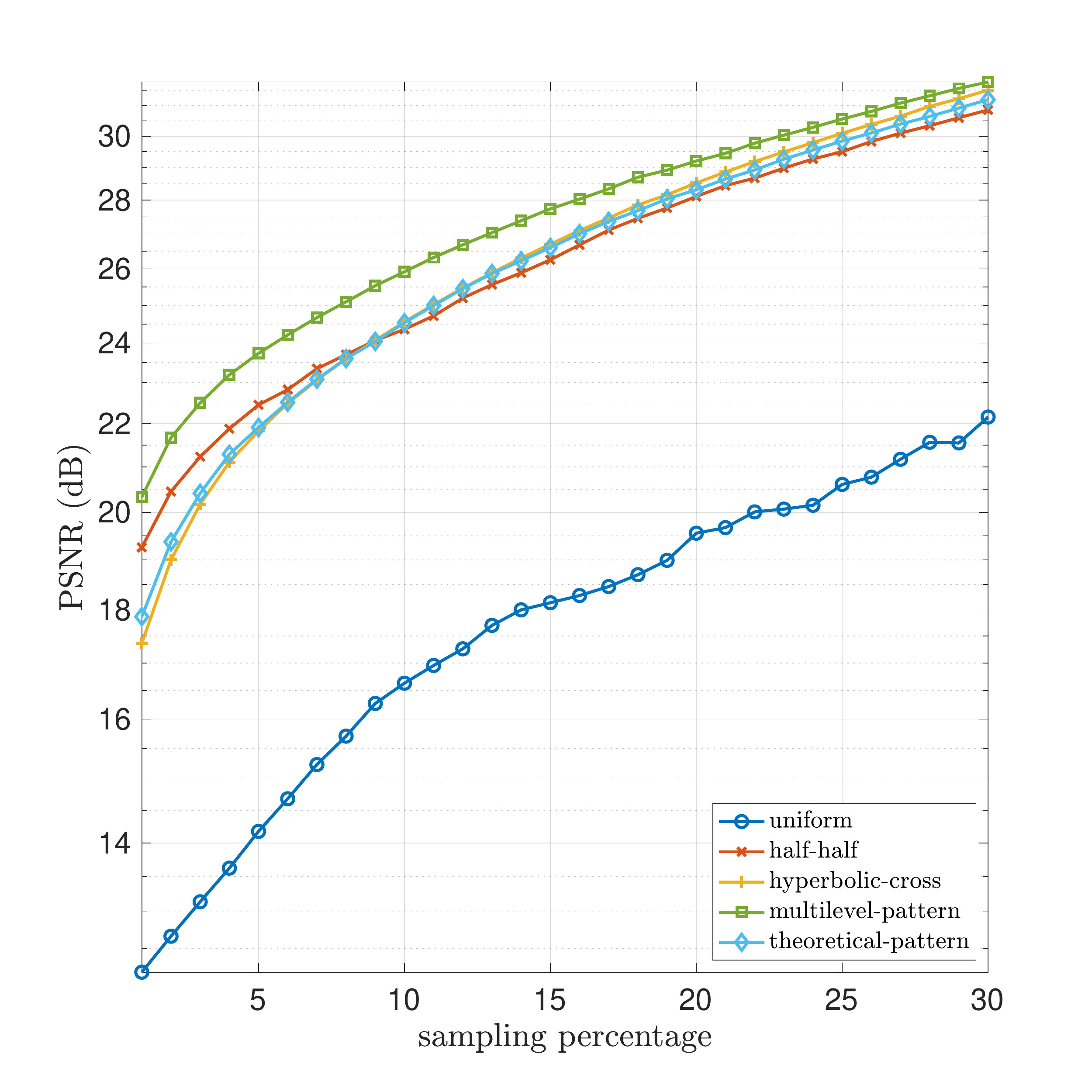}
		\includegraphics[width=0.23\paperwidth,clip=true,trim=8mm 6mm 18mm 15mm]{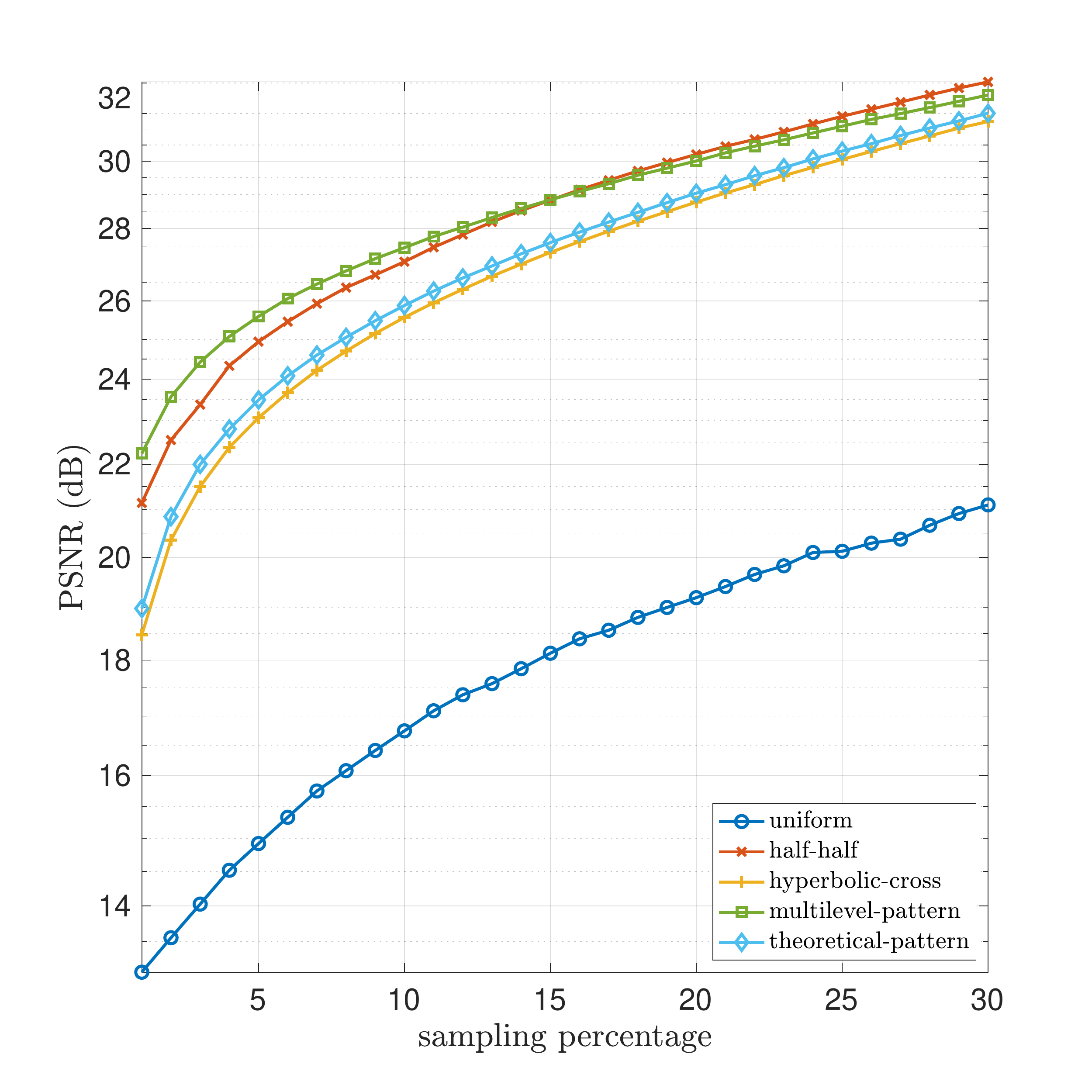}
		\includegraphics[width=0.23\paperwidth,clip=true,trim=8mm 6mm 18mm 15mm]{man_1024_Walsh_experiment_PSNR_plots.pdf}
	\end{center}
	\vspace{-2mm}
	\caption{Comparison of the average PSNR values over 20 trials for various sampling patterns in reconstructing  the ``man'' test image with {\bf(left)} $N=256$ {\bf(middle)} $N=512$, and {\bf(right)} $N=1024$ with Walsh sampling. Here the multilevel sampling is performed with $a=2$, $r=30$ and $r_0=2$.}
	\label{fig:Walsh_man_img_size_comp}
\end{figure}

\section{Proofs Part I: Theorems \ref{t:TVuniform1D}--\ref{t:TVVDSdD}}\label{s:proofsFourier}

The remainder of this paper is devoted to the proofs of the main results. We divide this into two sections: Fourier sampling in this section and Walsh sampling in the next.  Note that in both these sections we rely on some background results which are found in the Supplementary Materials. In \S \ref{s:proofsrest} we also prove several of the ancillary lemmas stated in previously. 

\subsection{Overview}

Our proof is divided into three parts.  First, in \S \ref{ss:gradrecovproof}, we assert stable and robust recovery of the gradient $\nabla x$.  Following \cite{PoonTV}, this made possible by the uniform random samples $\Omega_1$ and relies crucially on the commuting property of the Fourier transform and the gradient operator (Lemma \ref{l:commuting}). 

Next, in the \S \ref{ss:imageunif} and \S \ref{ss:imagevds}, we address the recovery of the image itself.  In the case of uniform random sampling, we follow \cite{PoonTV} and use the following discrete \textit{Poincar\'e inequality}
\be{
\label{discPoin}
\begin{split}
\nm{z}_{\ell^2} &\leq \nm{A z}_{\ell^2} + \sqrt{N} \nm{z}_{\mathrm{TV}},\quad \forall z \in \bbC^N,
\\
\nm{z}_{\ell^2} &\leq \nm{A z}_{\ell^2} + 2^{1-d/2}\nm{z}_{\mathrm{TV}_a},\quad \forall z \in \bbC^{N^d}.
\end{split}
}
See Lemma \ref{mpi}. The estimates for $\nm{\hat{x} - x}_{\ell^2}$ then follow by setting $z = \hat{x} - x$ and using the existing gradient error bounds.
For variable density sampling in \S \ref{ss:imagevds}, based on ideas of \cite{NeedellWardTV2,NeedellWardTV1}, we derive a strengthened Poincar\'e inequality for any measurement matrix that is incoherent with Haar wavelets (Lemma \ref{pihim}). The rest of the proof is then devoted to showing that variable density Fourier samples are sufficiently incoherent with Haar wavelets. For this we use tools from \S \ref{ss:BOS}.

\subsection{Gradient recovery}\label{ss:gradrecovproof}

In this section, we prove the error bounds \eqref{TVUnif1Dgraderr}, \eqref{uge2}, \eqref{uge22}, \eqref{TVVDS1Dgraderr}, \eqref{grani} and \eqref{gri}  for gradient recovery using uniform random and variable density Fourier sampling. This relies on the commuting property: 

\begin{lemma}
[Commuting property]
\label{l:commuting}
Let $F^{(d)}$ be the $d$-dimensional DFT matrix and $\nabla_j$ be the $j^{\rth}$ partial derivative operator. Then
\begin{equation*}
F^{(d)}\nabla_j=(I^{(d-j)}\otimes\mathrm{diag}(\lambda)\otimes I^{(j-1)})F^{(d)},
\end{equation*}
where $\lambda=(\lambda_j)_{j=1}^N\in\mathbb{C}^N$ has entries $\lambda_j=\exp ( 2\pi \mathrm{i}\varrho(j)/N )-1$ and $\varrho$ is defined in \eqref{fourier1}). $I^{(d-j)}=\underbrace{I\otimes I\otimes\cdots\otimes I}_{d-j}$, $I^{(j-1)}= \underbrace{I\otimes\cdots\otimes I}_{j-1}$, and $I\in\mathbb{C}^{N\times N}$ is the identity matrix.
\end{lemma}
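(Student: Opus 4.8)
The plan is to first establish the one-dimensional version of the identity by direct computation, and then lift it to $d$ dimensions using the mixed-product property of the Kronecker product. Concretely, I would first prove that
\[
F \nabla = \diag(\lambda) F, \qquad \lambda_i = \exp(2\pi \mathrm{i}\varrho(i)/N) - 1,
\]
where $F = F^{(1)}$ and $\nabla$ is the one-dimensional circulant gradient generated by $(-1,0,\ldots,0,1)$.

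To prove this, I would fix a row index $i$ and an arbitrary vector $x \in \bbC^N$ and expand $(F\nabla x)_i = \sum_{j=1}^N F_{ij}(x_{j+1} - x_j)$ directly, using the periodic convention $x_{N+1} = x_1$. Splitting into two sums and reindexing the shifted term by $j \mapsto j-1$ produces the factor $F_{i,j-1}/F_{i,j} = \exp(2\pi \mathrm{i}\varrho(i)/N)$, so that $(F\nabla x)_i = \left(\exp(2\pi \mathrm{i}\varrho(i)/N) - 1\right)(Fx)_i = \lambda_i (Fx)_i$. Since $x$ is arbitrary, this yields $F\nabla = \diag(\lambda)F$.

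The one delicate point is the wraparound term: the reindexing must correctly match the $j = N$ contribution $F_{iN}x_1$ to the $j=1$ term of the shifted sum, which requires $F_{i,0} = F_{i,N}$. This holds precisely because $\varrho(i)$ is an integer, so $\exp(-2\pi \mathrm{i}\varrho(i)) = 1$; it is exactly here that the periodic boundary condition of $\nabla$ and the periodicity of the discrete Fourier modes must agree. I expect this boundary bookkeeping to be the only place needing care, as everything else is a routine geometric-sum manipulation.

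For the $d$-dimensional statement I would write $F^{(d)} = F \otimes \cdots \otimes F$ and $\nabla_j = I^{(d-j)} \otimes \nabla \otimes I^{(j-1)}$, and then apply the mixed-product property $(A_1 \otimes \cdots \otimes A_d)(B_1 \otimes \cdots \otimes B_d) = (A_1 B_1) \otimes \cdots \otimes (A_d B_d)$. Every factor except the slot occupied by $\nabla$ reduces to $F \cdot I = F$, while that slot becomes $F\nabla = \diag(\lambda) F$ by the one-dimensional result. Writing $\diag(\lambda) = \diag(\lambda)\cdot I$ and $F = I \cdot F$ and applying the mixed-product property in reverse then pulls $F^{(d)}$ out on the right, leaving the prefactor $I^{(d-j)} \otimes \diag(\lambda) \otimes I^{(j-1)}$, which is exactly the claimed identity.
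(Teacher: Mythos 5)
Your proposal is correct and follows essentially the same route as the paper: the $d$-dimensional identity is obtained from the one-dimensional identity $F\nabla = \diag(\lambda)F$ via the Kronecker mixed-product property, applied once to distribute the factors and once in reverse to pull $F^{(d)}$ back out. The only difference is that the paper dismisses the one-dimensional case as ``a simple exercise,'' whereas you carry out the direct computation, including the wraparound term where periodicity of $\nabla$ and integrality of $\varrho(i)$ are used --- a worthwhile addition, and your bookkeeping there is correct.
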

\begin{proof}
The $d = 1$ case is a simple exercise.
Now consider the $d \geq 2$ case. We have
\begin{align*}
F^{(d)}\nabla_j&=(\underbrace{F\otimes F\otimes\cdots\otimes F}_{d})(\underbrace{I\otimes I\otimes\cdots\otimes I}_{d-j}\otimes \nabla \otimes \underbrace{I\otimes\cdots\otimes I}_{j-1})\\
&=F^{(d-j)}\otimes((\mathrm{diag}(\lambda)F)\otimes (F^{(j-1)}  I^{(j-1)}))\\
&=(I^{(d-j)} F^{(d-j)})\otimes(\mathrm{diag}(\lambda)\otimes I^{(j-1)} F^{(j)}) =I^{(d-j)}\otimes\mathrm{diag}(\lambda)\otimes I^{(j-1)}F^{(d)},
\end{align*}
as required.
\end{proof}

Next, since the sampling map has the form $\Omega=\Omega_1\cup\Omega_2$ in all cases, we can write $A$ as 
\begin{equation*}
A=\left(\begin{array}{c} A_1\\A_2 \end{array} \right)\qquad A_i=\frac{1}{\sqrt{m}}P_{\Omega_i}F,\quad i=1,2.
\end{equation*}
The matrix $N^{-d/2} F$ is unitary, and therefore $A_1$ is a randomly-subsampled unitary matrix (see \S \ref{s:CSpreliminaries}) with the uniform probability distribution $q = (1/N^d)^{N^d}_{i=1}$. Since $|F_{jk}| = 1$, the bounded orthonormal system constant $\Theta = 1$. Hence, by \eqref{BOSRIPcond}, $A_1$ satisfies the RIP of order $2s$ with $\delta_{2s} \leq 1/2$ (this factor is arbitrary, any number less than $4 /\sqrt{41}$ will do) and probability at least $1-\varepsilon$, provided (after simplifying the log factor using the fact that $N^d \geq s \geq 2$),
\bes{
m \gtrsim s \cdot \log(s) \cdot ( \log(s) \cdot \log(N^d) + \log(\varepsilon^{-1}) ).
}
For the next steps of the proof, we split into the anisotropic and isotropic cases.

\subsubsection{Anisotropic TV: \eqref{TVUnif1Dgraderr},\eqref{TVVDS1Dgraderr}, \eqref{uge2} and \eqref{grani}}

\begin{proof}[Proof of \eqref{TVUnif1Dgraderr}, \eqref{TVVDS1Dgraderr}, \eqref{uge2}and \eqref{grani}]
We use Lemma \ref{l:blockArNSP}
Let $B = B_1$ be as in this lemma with $A = A_1$. Since $A_1$ has the RIP of order $2s$ with constant $\delta_{2s} \leq 1/2$, it also has the rNSP of order $s$.
We now apply Lemma \ref{l:rNSPdistance} to $B_1$ with the vectors $\nabla\hat{x}$ and $\nabla x$ and use the fact that $\nmu{\nabla \hat{x}}_1=\nmu{\hat{x}}_{\mathrm{\mathrm{TV}}_a}\leq \nmu{x}_{\mathrm{\mathrm{TV}}_a}= \nmu{\nabla x}_{\ell^1}$ to get
\eas{
\nm{\nabla \hat{x}-\nabla x }_{\ell^1} &\lesssim \sigma_s(\nabla x)_{\ell^1}+\sqrt{sd} \nm{ B_1\nabla (\hat{x}-x) }_{\ell^2},
\\
\nm{ \nabla\hat{x}-\nabla x }_{\ell^2} &\lesssim \frac{\sigma_s(\nabla x)_{\ell^1}}{\sqrt{s}} + \sqrt{d} \nm{ B_1\nabla (\hat{x}-x) }_{\ell^2}.
}
For the second term, we use the fact that $F^{(d)}\nabla_i=(I^{(d-i)}\otimes\mathrm{diag}(\lambda)\otimes I^{(i-1)})F^{(d)}$ (Lemma \ref{l:commuting}) and the bound $\nm{\lambda}_{\ell^{\infty}}\leq 2$ to get
\begin{align*}
\nmu{B_1\nabla (\hat{x}-x)}_{\ell^2}&=\sqrt{\sum_{i=1}^d \nmu{A_1\nabla_i(\hat{x}-x)}^2_{\ell^2}}=\sqrt{\sum_{i=1}^d \nm{\frac{1}{\sqrt{m}}P_{\Omega_1}F^{(d)}\nabla_i (\hat{x}-x) }^2_{\ell^2}}\\
&\leq 2\sqrt{d} \nm{\frac{1}{\sqrt{m}}P_{\Omega_1}F^{(d)}(\hat{x}-x) }_{\ell^2}\\
&=2\sqrt{d} \nmu{A_1(\hat{x}-x) }_{\ell^2} \leq 2\sqrt{d} \nmu{A(\hat{x}-x) }_{\ell^2} \leq4\sqrt{d}\eta.
\end{align*}
Note that in the last step we have used the fact that  $\hat{x}$ and $x$ are feasible for \eqref{TVminprob}. Substituting this into the previous estimates now gives the result.
\end{proof}

\subsubsection{Isotropic TV: \eqref{uge22} and \eqref{gri}}

For isotropic TV, we use the matrix recovery techniques from \S \ref{ss:matrixrecov}.

\begin{proof}[Proof of \eqref{uge22} and \eqref{gri}]

The matrix $A_1$ satisfies the RIP of order $2s$ with constant $\delta_{2s} \leq 1/2$. Hence, by \cite[Prop.\ 4.3]{dexter2018mixed} it also has the $\ell^{2,2}$-rNSP of order $s$ with constants $\rho$ and $\gamma$ depending on $\delta_{2s}$. Using $\nabla\hat{x}$ and $\nabla{x}$ in Lemma \ref{rnsp21}, we get
\begin{equation}
\label{mge1}
\nm{\nabla \hat{x}-\nabla x}_{\ell^{2,1}}\lesssim \sigma_s(\nabla x)_{\ell^{2,1}}+\sqrt{s}\nm{A_1\nabla (\hat{x}-x)}_{\ell^{2,2}},
\end{equation}
\begin{equation}
\label{mge2}
\nm{\nabla \hat{x}-\nabla x}_{\ell^{2,2}}\lesssim \frac{\sigma_s(\nabla x)_{\ell^{2,1}}}{\sqrt{s}}+\nm{A_1\nabla (\hat{x}-x)}_{\ell^{2,2}}.
\end{equation}
For the second term of \eqref{mge1} and \eqref{mge2}, by the commuting property, we have, 
\begin{align*}\nm{A_1\nabla (\hat{x}-x)}_{\ell^{2,2}}&=\nm{(A_1\nabla_1(\hat{x}-x)~\cdots~A_1\nabla_d(\hat{x}-x))}_{\ell^{2,2}}\\
&=\nm{(\frac{1}{\sqrt{m}}P_{\Omega_1}F^{(d)}\nabla_1 (\hat{x}-x)~\cdots~\frac{1}{\sqrt{m}}P_{\Omega_1}F^{(d)}\nabla_d (\hat{x}-x))}_{\ell^{2,2}}\\
&\leq 2\nm{(A_1(\hat{x}-x)~\cdots~A_1(\hat{x}-x))}_{\ell^{2,2}}\\
&\leq 2\nm{(A(\hat{x}-x)~\cdots~A(\hat{x}-x))}_{\ell^{2,2}} \leq 4\sqrt{d}\eta.
\end{align*}
In the last step we used the fact that $\hat{x}$ and $x$ are feasible for \eqref{TVminprob}. Substituting this in \eqref{mge1} and \eqref{mge2} and recalling that $\nm{\cdot}_{\mathrm{TV}_i} =  \nm{\nabla \cdot}_{\ell^{2,1}}$ yields \eqref{uge22}and \eqref{gri}.
\end{proof}

\subsection{Image recovery for uniform random sampling}\label{ss:imageunif}

We now prove \eqref{unia} and \eqref{unii}. These proofs rely on a discrete Poincar\'e inequality (Lemma \ref{l:poincare}). To prove this, as well as several later results, we will follow ideas from \cite{NeedellWardTV2,NeedellWardTV1} and relate the TV semi-norm to the decay rate of Haar wavelet coefficients.  For notation and background on Haar wavelets, see \S \ref{ss:Haarwave}.

\begin{lemma}[Discrete Poincar\'e inequality]
\label{l:poincare}
 Let $x\in\mathbb{C}^{N^d}$ with $\sum_{i=1}^{N^d}x_{i}=0$. Then
\eas{
\nm{x}_{\ell^2} \leq \sqrt{N} \nm{x}_{\mathrm{TV}},\quad d = 1,
\qquad
\nm{x}_{\ell^2} \lesssim \frac{\nm{x}_{\mathrm{TV}a}}{2^{d/2-1}} \leq \frac{\sqrt{d}\nm{x}_{\mathrm{TV}i}}{2^{d/2-1}},\quad d \geq 2.
}
\end{lemma}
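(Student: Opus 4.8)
The second inequality in the $d\ge 2$ case is nothing but the norm equivalence \eqref{TVairelate}, so it suffices to bound $\nm{x}_{\ell^2}$ by $\nm{x}_{\mathrm{TV}}$ (for $d=1$) and by $\nm{x}_{\mathrm{TV}_a}$ (for $d\ge 2$). The one-dimensional case I would dispatch directly, without wavelets. Since $\sum_i x_i=0$, for each $j$ we have $N x_j=\sum_i (x_j-x_i)$, and telescoping the gradient along the shorter arc of the periodic grid gives $|x_j-x_i|\le \nm{\nabla x}_{\ell^1}$ for every $i$. Hence $|x_j|\le \nm{\nabla x}_{\ell^1}=\nm{x}_{\mathrm{TV}}$ for all $j$, and summing the squares yields $\nm{x}_{\ell^2}\le \sqrt N\,\nm{x}_{\mathrm{TV}}$, as claimed.

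For $d\ge 2$ the naive $\ell^\infty$ bound costs a factor $N^{d/2}$, so I would instead follow \cite{NeedellWardTV2,NeedellWardTV1} and pass to the orthonormal $d$-dimensional Haar basis, in the notation of \S\ref{ss:Haarwave}. Write $\nm{x}_{\ell^2}^2=\sum_h |\ip{x}{h}|^2$ by Parseval; since $\sum_i x_i=0$, the coefficient against the constant scaling function vanishes, so the sum runs over genuine wavelets only. The engine of the proof is a single-scale estimate: for a Haar wavelet $h$ at scale $j$, supported on a cube $Q$ of side $w=2^{r-j}$ (recall $N=2^r$) and normalised so that $\nm{h}_{\ell^2}=1$, hence $\nm{h}_{\ell^\infty}=w^{-d/2}$, the vanishing mean of $h$ lets one replace $x$ by $x-\bar x_Q$ and invoke a discrete Poincar\'e inequality on the cube $Q$, giving a bound of the form $|\ip{x}{h}|\lesssim w^{1-d/2}\,\nm{\nabla x|_Q}_{\ell^1}$; for a wavelet that is high-pass only in the directions of some $S\subseteq\{1,\dots,d\}$ one may keep only the partial derivatives $\nabla_i$ with $i\in S$ on the right. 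One then sums $|\ip{x}{h}|^2$ over all positions and the $2^d-1$ types at a fixed scale, exploiting that the supports $Q$ are disjoint at each scale, and finally over the $r=\log_2 N$ scales.

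The main obstacle, and the only genuinely delicate part, is performing this summation sharply enough to obtain the constant $2^{1-d/2}$ with no spurious logarithm. Two points must be handled with care. First, the dimensional constant: the factor $2^{-d/2}$ originates at the finest scale $w=2$, where $\nm{h}_{\ell^\infty}=2^{-d/2}$, so the per-type and per-scale bookkeeping cannot crudely estimate $\sum_Q \nm{\nabla x|_Q}_{\ell^1}^2$ by $\nm{x}_{\mathrm{TV}_a}^2$ or inflate by the full count $2^d-1$ of types, either of which destroys the $2^{-d/2}$ gain. I would instead compute the scale-$j$ wavelet energy exactly through the variance of the child-cube averages, using the identity $\sum_{S\ne\emptyset}|\ip{x}{h_{Q,S}}|^2=(w/2)^d\sum_{c}|\bar x^{(c)}-\bar x_Q|^2$ (where $c$ ranges over the $2^d$ children of $Q$), and estimate each difference $\bar x^{(c)}-\bar x_Q$ by the gradient across the shared faces. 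Second, the case $d=2$ is the borderline exponent: there $w^{1-d/2}=1$ is scale-invariant, and a naive geometric sum over the $\log_2 N$ scales produces an unwanted factor $\log(N)$. To remove it I would use the weak-$\ell^1$ refinement of \cite{NeedellWardTV1}, namely that the decreasingly rearranged Haar coefficients satisfy $c_{(k)}\lesssim \nm{x}_{\mathrm{TV}}/k$, whence $\sum_k c_{(k)}^2\lesssim \nm{x}_{\mathrm{TV}}^2\sum_k k^{-2}\lesssim \nm{x}_{\mathrm{TV}}^2$; the weak-type bound itself follows from the single-scale estimate together with a packing argument counting, across scales, the cubes on which $\nm{\nabla x|_Q}_{\ell^1}$ is large. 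For $d\ge 3$ the exponent $w^{1-d/2}$ decays geometrically in $w$, the scale sum converges and is dominated by the finest scale, and the constant $2^{1-d/2}$ falls out without any weak-type argument.
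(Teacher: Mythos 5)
Your proposal is correct, and for the decisive case $d \geq 2$ it runs on the same engine as the paper's proof: pass to the Haar system, control the wavelet coefficients by the TV semi-norm, and use the Needell--Ward weak-$\ell^1$ rearrangement decay $\nmu{c_{(k)}}_{\ell^2} \lesssim 2^{j_k(d-2)/2}|f|_{BV}/k$ together with $|f|_{BV} \leq N^{1-d/2}\nm{x}_{\mathrm{TV}_a}$ (the paper's Lemmas \ref{cbbv} and \ref{bvtv}) to kill the logarithm in the borderline case $d = 2$. The differences are in the packaging. For $d = 1$ the paper simply cites \cite[Lem.\ 4.1]{PoonTV}, whereas your telescoping argument is self-contained and yields the constant $\sqrt{N}$ exactly; it is a perfectly good substitute. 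For $d \geq 3$ the paper does not split cases: it applies Lemma \ref{cbbv} uniformly for all $d \geq 2$, bounds $2^{j(d-2)/2} \leq (N/2)^{(d-2)/2}$ so that the $N$-dependence cancels against $|f|_{BV} \leq N^{1-d/2}\nm{x}_{\mathrm{TV}_a}$ to leave exactly $2^{1-d/2}\nm{x}_{\mathrm{TV}_a}/k$, and then sums $\sum_k k^{-2}$ via Parseval. Your alternative for $d \geq 3$ --- per-scale energy $\lesssim w^{2-d}\nm{x}_{\mathrm{TV}_a}^2$ summed geometrically and dominated by the finest scale $w = 2$ --- reaches the same constant without invoking the rearrangement lemma at all, which is a modest genuine simplification; your child-average identity (or the elementary inequality $\sum_Q a_Q^2 \leq (\sum_Q a_Q)^2$) is indeed what prevents the spurious factor $2^d-1$ you rightly worry about. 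One caveat applies equally to both routes: the single-scale cube Poincar\'e estimate, and likewise the constant in Lemma \ref{cbbv}, must be tracked for dimension-dependence if the $\lesssim$ in the statement of Lemma \ref{l:poincare} is to be read, as the paper's notation dictates, as an absolute constant rather than $\lesssim_d$.
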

\begin{proof}
See \cite[Lem.\ 4.1]{PoonTV} for the $d = 1$ result. Now consider $d \geq 2$.
Let $x\in\mathbb{C}^{N^d}$ with mean zero and $f$ be its isometric embedding, i.e. $f(i/N)=N^{d/2}X_i$
where $i=(i_1,\ldots,i_d)\in \{0,\ldots,N-1\}^d$ and $x = \mathrm{vec}(X)$. Note that $\nm{f}_{L^2}=\nm{x}_{\ell^2}$ and $f$ also has mean zero. Let $c^{(e)}_{j,n}$ denote the Haar wavelet coefficient of $f$. Since $f$ has mean zero, we have $c^{(0)}_{0,0} = 0$. Write $c_{j,n} \in \bbC^{2^d-1}$ for the vector of coefficients $c^{(e)}_{j,n}$ with $e \in \{0,1\}^d \backslash \{0\}$. Then
Lemma \ref{bvtv} and Lemma \ref{cbbv} give that when $d\geq 2$, there exists a constant $C>0$ such that 
\begin{equation*}
\nmu{c_{(k)}}_{\ell^2} \leq C\frac{2^{j(d-2)/2}|f|_{BV}}{k},\qquad |f|_{BV}\leq N^{-d/2+1}\nm{x}_{\mathrm{TV}a}.
\end{equation*}
Since $2^j\leq N/2$ we have
\begin{equation}
\label{haardecay}
\nmu{c_{(k)}}_{\ell^2}  \leq C\frac{\nm{x}_{\mathrm{TV}a}}{k\cdot 2^{d/2-1}} \leq  C\frac{\sqrt{d}\nm{x}_{\mathrm{TV}i}}{k\cdot 2^{d/2-1}},
\end{equation}
where in the second inequality we use \eqref{TVairelate}. Therefore
\begin{align*}
\nm{x}_{\ell^2}&=\nm{f}_{L^2}=\sqrt{\sum^{\infty}_{k = 1} \nm{c_{(k)} }^2_{\ell^2} }\leq C\frac{ \nm{x}_{\mathrm{TV}a}}{ 2^{d/2-1}}\leq C\frac{\sqrt{d} \nm{x}_{\mathrm{TV}i}}{ 2^{d/2-1}},
\end{align*}
as required.
\end{proof}

This now gives the following:
\begin{lemma}
\label{mpi}
Let $A$ be the measurement matrix of Theorem \ref{t:TVuniform1D}. Then
\bes{
\nm{z}_{\ell^2}\leq \nm{Az}_{\ell^2}+\sqrt{N} \nm{z}_{\mathrm{TV}},\qquad \forall z\in\mathbb{C}^{N}.
}
If $A$ is the measurement matrix of Theorem \ref{t:TVuniformdD} then
\begin{align*}
\nm{z}_{\ell^2}&\leq \nm{Az}_{\ell^2}+2^{1-d/2}\nm{z}_{\mathrm{TV}_a} \leq\nm{Az}_{\ell^2}+2^{1-d/2}\sqrt{d}\nm{z}_{\mathrm{TV}_i},\qquad \forall z\in\mathbb{C}^{N^d}.
\end{align*}
\end{lemma}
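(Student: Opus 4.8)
The plan is to exploit the single extra sample at the zero frequency, which is precisely why $\Omega_2$ is included in the sampling schemes of Theorems~\ref{t:TVuniform1D} and~\ref{t:TVuniformdD}. Recall that the row of $F$ indexed by the frequency $\omega = 0$ (respectively $\omega = (0,\ldots,0)$) is the all-ones vector, so the corresponding entry of $Fz$ equals $\sum_i z_i$. I would first split any $z \in \bbC^{N^d}$ into its mean and mean-zero parts: writing $c = N^{-d}\sum_{i}z_i$ and $u = (1,\ldots,1)$, set $z = c u + z_0$ with $z_0 = z - c u$, so that $\sum_i (z_0)_i = 0$. The triangle inequality then gives $\nm{z}_{\ell^2} \leq \nm{c u}_{\ell^2} + \nm{z_0}_{\ell^2}$, and I would bound the two terms separately.

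For the constant part, note that $\nm{c u}_{\ell^2} = |c|\,N^{d/2} = |\sum_i z_i|/N^{d/2}$. Since $\Omega_2 = \{0\}$ (or $\{(0,\ldots,0)\}$) is one of the sampled frequencies and $A = \tfrac{1}{\sqrt m}P_{\Omega}F$, one entry of $Az$ equals $\tfrac{1}{\sqrt m}\sum_i z_i$, whence $\nm{Az}_{\ell^2} \geq \tfrac{1}{\sqrt m}\left|\sum_i z_i\right|$. Combining these gives $\nm{c u}_{\ell^2} \leq \sqrt{m/N^d}\,\nm{Az}_{\ell^2}$, and the hypothesis $m \leq N^d$ (namely $m \leq N$ when $d=1$) reduces the prefactor to at most one, so that $\nm{c u}_{\ell^2} \leq \nm{Az}_{\ell^2}$. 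This is the one place the sample-count assumption is genuinely used.

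For the mean-zero part, since $z_0$ has zero mean I would apply the discrete Poincar\'e inequality of Lemma~\ref{l:poincare} directly to $z_0$, obtaining $\nm{z_0}_{\ell^2} \leq \sqrt N \nm{z_0}_{\mathrm{TV}}$ when $d = 1$ and $\nm{z_0}_{\ell^2} \lesssim 2^{1-d/2}\nm{z_0}_{\mathrm{TV}_a}$ when $d \geq 2$. The observation that closes the argument is that subtracting a constant does not change the gradient: since $\nabla u = 0$ we have $\nabla z_0 = \nabla z$, and therefore $\nm{z_0}_{\mathrm{TV}} = \nm{z}_{\mathrm{TV}}$ (and likewise for $\mathrm{TV}_a$). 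Substituting the two bounds into the triangle inequality yields the claimed estimates, with the isotropic refinement following from $\nm{z}_{\mathrm{TV}_a} \leq \sqrt d\,\nm{z}_{\mathrm{TV}_i}$ in~\eqref{TVairelate}.

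I do not anticipate a serious obstacle: the argument is essentially bookkeeping around the decomposition $z = cu + z_0$. The only points requiring care are the normalization of $F$ (so that the zero-frequency entry of $Az$ is exactly $\tfrac{1}{\sqrt m}\sum_i z_i$) and verifying that the factor $\sqrt{m/N^d}$ is bounded by one under the stated hypothesis. The real content is already packaged in Lemma~\ref{l:poincare}; the present lemma simply upgrades it to an inequality involving the measurement matrix, using the zero-frequency sample to absorb the otherwise unrecoverable mean component.
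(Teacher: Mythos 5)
Your proposal is correct and is essentially the paper's own proof: the same decomposition of $z$ into its mean and mean-zero parts, the same application of the discrete Poincar\'e inequality (Lemma \ref{l:poincare}) to the mean-zero part (whose TV semi-norm equals that of $z$ since constants have zero gradient), and the same use of the zero-frequency sample together with $m \leq N^d$ to bound the mean component by $\nm{Az}_{\ell^2}$. Nothing further is needed.
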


\begin{proof}
Consider the case $d \geq 2$ first.
Let $z\in\mathbb{C}^{N^d}$ and define $\bar{z}=(\bar{z}_i)_{i=1}^{N^d}$ with $\bar{z}_i=z_i-\frac{1}{N^d}\sum_{j=1}^{N^d}z_j$. Then we have $\sum_{i=1}^{N^d}\bar{z}_i=0$ and applying the Poincar$\mathrm{\acute{e}}$ inequality gives
\begin{equation*}
\nm{\bar{z}}_{\ell^2}\lesssim 2^{1-d/2}\nm{\bar{z}}_{\mathrm{TV}_a}=2^{1-d/2} \nm{z}_{\mathrm{TV}_a}.
\end{equation*}
Since  $\sum_{j=1}^{N^d}z_j=(Fz)_0=\sqrt{m}A_2z$ and $m\leq N^d$ by assumption, we have
\begin{align*}
\nm{z}_{\ell^2}\leq \frac{1}{\sqrt{N^d}} \nm{(Fz)_0 }_{\ell^2}+2^{1-d/2} \nm{z}_{\mathrm{TV}_a}
&=\sqrt{\frac{m}{N^d}} \nm{A_2z}_{\ell^2}+2^{1-d/2} \nm{z}_{\mathrm{TV}_a}\\
&\leq \nm{Az}_{\ell^2}+2^{1-d/2} \nm{z}_{\mathrm{TV}_a}.
\end{align*}
This gives the first inequality. The second follows from \eqref{TVairelate}. When $d = 1$ we use the same arguments, replacing the Poincar\'e inequality by its one-dimensional version (Lemma \ref{l:poincare}).
\end{proof}

\begin{proof}[Proof of \eqref{TVUnif1Dsigerr}, \eqref{unia} and \eqref{unii}]
We use Lemma \ref{mpi} with $z=\hat{x}-x$. This gives
\begin{align*}
\nm{\hat{x}-x }_{\ell^2}&\leq  \nm{A(\hat{x}-x) }_{\ell^2}+2^{1-d/2} \nm{\hat{x}-x }_{\mathrm{TV}_a} \lesssim 2^{-d/2}\sigma_s(\nabla x)_{\ell^1}+(1+2^{-d/2}\sqrt{s}d)\eta,
\end{align*}
when $d \geq 2$, which yields \eqref{unia}. Here, for the second inequality, we use \eqref{uge2} and the fact that $\hat{x}$ and $x$ are feasible, so that $\nm{A (\hat{x} - x)}_{\ell^2} \leq 2 \eta$. This isotropic case \eqref{unii} is identical. To prove \eqref{TVUnif1Dsigerr}, we use Lemma \ref{mpi} with $d = 1$ and \eqref{TVUnif1Dgraderr}.
\end{proof}

\subsection{Image recovery for variable density Fourier sampling}\label{ss:imagevds}

We now consider variable density samples. We first show a strengthened Poincar\'e inequality for Haar-incoherent measurements, and then derive conditions under which this holds for variable density samples.

\begin{lemma}
[Poincar\'e inequality for Haar-incoherent measurements]
\label{pihim}
Let $W\in\mathbb{R}^{N^d\times N^d}$ be the matrix of the $d$-dimensional discrete Haar wavelet transform and $B\in\mathbb{C}^{m\times N^d}$. Suppose that $BW$ satisfies the RIP of order $5k$ with constant $\delta_{5k} < 1/3$. Then
\bes{
\nm{x}_{\ell^2}\lesssim \nm{Bx}_{\ell^2}+\frac{\sqrt{N} \nm{x}_{\mathrm{TV}}}{k},\qquad d = 1,
}
and
\begin{equation*}
\nm{x}_{\ell^2}\lesssim_d \nm{Bx}_{\ell^2}+\frac{\nm{x}_{\mathrm{TV}_a}}{\sqrt{k}}\log\left(\frac{N}{k}\right),\qquad d \geq 2.
\end{equation*}
\end{lemma}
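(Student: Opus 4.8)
The overall strategy is to mimic the proof of the plain discrete Poincaré inequality (Lemma \ref{l:poincare}), but instead of bounding the full $\ell^2$-norm by the Haar-coefficient decay, I would split the Haar expansion into a \emph{head} (the coefficients at the coarsest scales, amounting to roughly $k$ degrees of freedom) and a \emph{tail} (the remaining fine-scale coefficients). The head is controlled using the RIP of $BW$, exactly as one passes from a near-sparse vector to its measurements; the tail is controlled using the quantitative decay estimate \eqref{haardecay}, namely $\nm{c_{(\ell)}}_{\ell^2} \lesssim \nm{x}_{\mathrm{TV}_a} / (\ell \cdot 2^{d/2-1})$, where $c_{(\ell)}$ denotes the $\ell$th largest (in $\ell^2$-norm) of the grouped wavelet coefficient vectors.

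\textbf{Key steps, in order.} First I would let $c = Wx$ be the Haar coefficient vector (grouped so that $c_{(\ell)}\in\bbC^{2^d-1}$ collects the coefficients sharing a spatial-scale index, as in Lemma \ref{l:poincare}) and, writing the coefficients in nonincreasing order of $\ell^2$-norm, let $S$ be the index set of the $k$ largest such groups. Decompose $\nm{x}_{\ell^2} = \nm{c}_{\ell^2} \leq \nm{c_S}_{\ell^2} + \nm{c_{S^c}}_{\ell^2}$. Second, for the tail I would use the decay bound \eqref{haardecay}: since the groups are ordered by decreasing norm, $\nm{c_{(\ell)}}_{\ell^2} \lesssim \nm{x}_{\mathrm{TV}_a}/(\ell\cdot 2^{d/2-1})$ for $\ell > k$, so
\be{
\nm{c_{S^c}}_{\ell^2} = \left( \sum_{\ell > k} \nm{c_{(\ell)}}^2_{\ell^2} \right)^{1/2} \lesssim_d \frac{\nm{x}_{\mathrm{TV}_a}}{2^{d/2-1}} \left( \sum_{\ell > k} \frac{1}{\ell^2} \right)^{1/2} \lesssim \frac{\nm{x}_{\mathrm{TV}_a}}{\sqrt{k}}.
}
Here the $\log(N/k)$ factor has \emph{not yet} appeared; I expect it to enter because the grouping must account for the number of scales $\log_2(N) - \log_2(k) \asymp \log(N/k)$ and the multiplicity of frequencies per scale, so the clean $1/\ell^2$ summation is replaced by a scale-by-scale bound in which each of the $\asymp\log(N/k)$ dyadic scales contributes a comparable amount—this is the source of the logarithm and the point where care is needed. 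Third, for the head $\nm{c_S}_{\ell^2}$, the vector $Wc_S = W(Wx)|_S$ is a $k$-group-sparse Haar-coefficient vector, so its image under $BW$ is controlled below by the RIP lower bound (constant $\delta_{5k}<1/3$ gives a definite lower Riesz bound), and I would write $BWc_S = Bx - BW c_{S^c}$, bound $\nm{BW c_{S^c}}_{\ell^2}$ via the RIP upper bound together with the tail estimate, and thereby obtain $\nm{c_S}_{\ell^2} \lesssim \nm{Bx}_{\ell^2} + (\text{tail terms})$. Adding the head and tail bounds yields the claimed inequality, with the order $5k$ (rather than, say, $2k$) arising to absorb the cross-terms needed when the near-sparse support must be enlarged.

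\textbf{Main obstacle.} The delicate part is the second/third step interplay: correctly identifying the grouping of Haar coefficients by scale so that (i) the ``head'' really is $k$-group-sparse in the sense the RIP of $BW$ is stated for, and (ii) the tail summation produces exactly $\frac{1}{\sqrt{k}}\log(N/k)$ rather than $\frac{1}{\sqrt{k}}$ or $\frac{1}{\sqrt{k}}\log(N)$. This requires tracking the per-scale contribution carefully, using that there are only $\asymp\log(N/k)$ relevant scales above the threshold and that within each scale the decay estimate \eqref{haardecay} is applied with the right normalization. The one-dimensional case is simpler: there the decay is fast enough (the single $\log$ collapses) that one obtains $\sqrt{N}\nm{x}_{\mathrm{TV}}/k$ directly, so I would handle $d=1$ separately, reusing the one-dimensional bound from Lemma \ref{l:poincare} in place of the $d\geq 2$ wavelet decay.
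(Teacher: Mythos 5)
Your overall architecture --- split the Haar coefficients $c = W^* x$ into a head and a tail, control the head through the RIP of $BW$ and the tail through the decay estimate \eqref{haardecay} --- is the right one, and it is exactly the mechanism the paper uses, packaged there as an application of Lemma \ref{l:NWprop} (Needell--Ward) with $\Delta$ the set of the $k$ largest entries of $c$, $\gamma = 1$ and $\sigma = \nmu{P^{\perp}_{\Delta} c}_{\ell^1}$. However, your execution has a genuine gap in the head estimate, and because of it you locate the logarithm in the wrong place. The step ``bound $\nm{BW c_{S^c}}_{\ell^2}$ via the RIP upper bound together with the tail estimate'' fails: $c_{S^c}$ is not sparse, so the RIP of $BW$ gives no control over $\nm{BW c_{S^c}}_{\ell^2}$ (and the unrestricted operator norm of a subsampled transform can be as large as $\sqrt{N^d/m}$, far too big to use). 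The standard repair is to partition $S^c$ into consecutive blocks $S_1, S_2, \ldots$ of $\asymp k$ coordinates each, in decreasing order of magnitude, and apply the RIP blockwise; this yields $\nm{BW c_{S^c}}_{\ell^2} \lesssim \nm{c_{S_1}}_{\ell^2} + \nm{c_{S^c}}_{\ell^1}/\sqrt{k}$, and it is this $\ell^1$ tail norm --- not any scale-by-scale refinement of the $\ell^2$ sum --- that creates the logarithm: by \eqref{haardecay}, $\nm{c_{S^c}}_{\ell^1} \lesssim_d \nm{x}_{\mathrm{TV}_a} \sum_{\ell > \lfloor k/(2^d-1) \rfloor} \ell^{-1}$, and this harmonic sum over the remaining blocks is what produces the $\log(N/k)$ factor, which then gets divided by $\sqrt{k}$. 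Your own $\ell^2$ tail computation $\nm{c_{S^c}}_{\ell^2} \lesssim \nm{x}_{\mathrm{TV}_a}/\sqrt{k}$ is correct and genuinely log-free, since \eqref{haardecay} already applies to the globally ordered blocks; your conjecture that the log re-enters through per-scale multiplicity in that same sum is therefore wrong, and as written your outline would terminate in a log-free bound reached through an invalid step.

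Two smaller points. First, your head $S$ consists of the $k$ largest groups, i.e.\ $k(2^d-1)$ coordinates, which for $d \geq 3$ already exceeds the available RIP order $5k$; take instead the $k$ largest individual entries (as the paper does), or the largest $\lfloor k/(2^d-1) \rfloor$ groups, and use the group decay estimates only on the tail. Second, $d = 1$ needs no separate mechanism: the same $\ell^1$ argument with the faster decay $\nmu{c_{(\ell)}}_{\ell^2} \lesssim \sqrt{N} \nm{x}_{\mathrm{TV}} \, \ell^{-3/2}$ gives $\nm{c_{S^c}}_{\ell^1} \lesssim \sqrt{N/k} \, \nm{x}_{\mathrm{TV}}$, hence the bound $\sqrt{N} \nm{x}_{\mathrm{TV}}/k$ after dividing by $\sqrt{k}$, which is exactly why no logarithm appears in that case.
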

\begin{proof}
We may assume without loss of generality that $x$ has mean zero. Let $A = B W$, $c = W^* x$ and $\Delta$ be the index set of the largest $k$ entries of $c$ in absolute value. Then Lemma \ref{l:NWprop} with the trivial choices $\gamma = 1$ and $\sigma = \nmu{P^{\perp}_{\Delta} c}_{\ell^1}$ gives
\bes{
\nm{x}_{\ell^2} = \nm{c}_{\ell^2} \lesssim \frac{\nmu{P^{\perp}_{\Delta} c}_{\ell^1}}{\sqrt{k}} + \nm{A c}_{\ell^2} = \frac{\nmu{P^{\perp}_{\Delta} c}_{\ell^1}}{\sqrt{k}} + \nm{B x}_{\ell^2}
}
Now, as in the proof of Lemma \ref{l:poincare}, let $c_{(k)} \in \bbC^{2^d-1}$ denote $k^{\rth}$ largest wavelet coefficient block in $c$. Then
\bes{
\nmu{c_{(k)}}_{\ell^2} \lesssim_d \left \{ \begin{array}{cc} \sqrt{N} \nm{x}_{\mathrm{TV}_a} / k^{3/2}  & d = 1  \\ \nm{x}_{\mathrm{TV}_i} / k & d \geq 2 \end{array} \right . .
}
Hence, when $d = 1$, we have 
$
\nmu{P^{\perp}_{\Delta} c}_{\ell^1} \lesssim \sqrt{N} \nm{x}_{\mathrm{TV}} \sum_{i > k} i^{-3/2} \lesssim \sqrt{N/k} \nm{x}_{\mathrm{TV}},
$
as required. When $d \geq 2$, since $\Delta$ contains the index set of the largest $k$ entries of $c$, we can bound $\nmu{P^{\perp}_{\Delta} c}_{\ell^1}$ by $\nmu{P^{\perp}_{\Delta'} c}_{\ell^1}$, where $\Delta'$ contains the indices of the largest $\lfloor k / (2^d-1) \rfloor$ of $c$. Hence
\bes{
\nmu{P^{\perp}_{\Delta} c}_{\ell^1} \lesssim_d \nm{x}_{\mathrm{TV}_a} \sum^{N}_{i = \lfloor k / (2^d-1) \rfloor+1 } i^{-1} \lesssim_d \nm{x}_{\mathrm{TV}_a} \log(N/k),
}
as required.
\end{proof}

\begin{lemma}
[The RIP for the Fourier--Haar matrix]
\label{ripfhm}
Let $0<\delta, \varepsilon<1$, $2\leq s\leq N^d$, $\Omega\subseteq\{-\frac{N}{2}+1,\ldots,\frac{N}{2}\}^d$ be a $d$-dimensional variable sampling pattern corresponding to a probability distribution $p = (p_{\omega})$, with $\Gamma(p)$ as in \eqref{Cpdef} and $D\in\mathbb{C}^{N^d\times N^d}$ be the  diagonal matrix with entries $D_{ii}=\frac{1}{\sqrt{p_{\varrho^{-1}(i)}}}$, where $\varrho = \varrho^{(d)}$ is the bijection \eqref{fourierd}. Suppose that
\begin{equation*}
m \gtrsim_d  \Gamma(p)\cdot s \cdot \log(\Gamma(p) s) \cdot \left ( \log(\Gamma(p) s) \cdot \log(N) + \log(\epsilon^{-1}) \right ).
\end{equation*}
Then, with probability at least $1-\varepsilon$, the matrix
\be{
\label{FHaarnormalized}
\frac{1}{\sqrt{mN^d}}P_{\Omega}DFW,
}
has the RIP of order $s$ with constant $\delta_s\leq 1/2$, where $F$ and $W$ are the discrete Fourier and Haar wavelet transforms respectively.
\end{lemma}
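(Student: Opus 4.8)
The plan is to recognize the matrix \eqref{FHaarnormalized} as a randomly subsampled, preconditioned bounded orthonormal system (BOS) and then invoke the same BOS RIP estimate \eqref{BOSRIPcond} that was applied to $A_1$ in \S\ref{ss:gradrecovproof}. First I would observe that $U := N^{-d/2} F W$ is unitary: indeed $N^{-d} F^* F = I$ and the Haar transform $W$ is orthogonal, so $U^* U = N^{-d} W^* F^* F W = I$. Since $F W = N^{d/2} U$, the matrix in \eqref{FHaarnormalized} equals $\frac{1}{\sqrt{m}} P_\Omega D U$, i.e.\ the matrix obtained by drawing $m$ rows of $U$ i.i.d.\ according to $p$ and rescaling the row indexed by frequency $\omega$ by $1/\sqrt{p_\omega}$ (this is exactly the role of $D$, whose entries are $D_{ii} = 1/\sqrt{p_{\varrho^{-1}(i)}}$). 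This is the canonical form to which the BOS machinery of \S\ref{ss:BOS} applies, with sampling measure $p$.

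The controlling quantity is the preconditioned BOS constant $\Theta$, given by $\Theta^2 = \sup_\omega p_\omega^{-1} \max_k |U_{\omega,k}|^2$, where $\mu(\omega) := \max_k |U_{\omega,k}|$ is the local coherence of the Fourier--Haar system at frequency $\omega$. The technical heart of the argument is the second step: bounding this local coherence by $\mu(\omega) \lesssim_d q_\omega^{-1}$, with $q_\omega$ as in \eqref{qomdef1}--\eqref{qomdef2}. Each entry of $U$ is, up to normalization, the value at frequency $\omega$ of the discrete Fourier transform of a tensor-product Haar wavelet, so this reduces to decay estimates for such transforms. Combining the per-coordinate decay of the Fourier transform of a one-dimensional Haar wavelet with the tensor factorization, and then maximizing over all dyadic scales and wavelet types $e \in \{0,1\}^d \setminus \{0\}$, produces precisely the sorted-product structure of $q_\omega$. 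I would invoke the corresponding Fourier--Haar coherence lemma developed alongside \S\ref{ss:Haarwave} in the Supplementary Material rather than re-derive it here.

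Given the coherence bound, the definition \eqref{Cpdef} of $\Gamma(p)$ finishes the estimate: $\Theta^2 \lesssim_d \sup_\omega p_\omega^{-1} q_\omega^{-2} \le \Gamma(p)$. Substituting $\Theta^2 \lesssim_d \Gamma(p)$ into the BOS RIP condition \eqref{BOSRIPcond} yields the RIP of order $s$ with $\delta_s \le 1/2$ and failure probability at most $\varepsilon$, provided $m \gtrsim_d \Gamma(p) \cdot s \cdot (\,\text{log factors}\,)$. A short simplification of the log terms, using $s \le N^d$ and $N = 2^r$ to absorb $\log(N^d)$ into $d \log(N)$, brings this into the stated form $m \gtrsim_d \Gamma(p)\, s\, \log(\Gamma(p)s)\,(\log(\Gamma(p)s)\log(N) + \log(\varepsilon^{-1}))$.

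I expect the main obstacle to be the local coherence estimate $\mu(\omega) \lesssim_d q_\omega^{-1}$ in general dimension $d$. The one- and two-dimensional cases are essentially classical, but the correct anisotropic decay, captured by ordering the magnitudes $\overline{\omega_1},\ldots,\overline{\omega_d}$ and taking the product of the largest roughly half of them (with a square root on the middle term when $d$ is odd), requires carefully tracking how the per-coordinate decay combines across the tensor factors and over all dyadic scales and wavelet types. It is precisely this estimate that dictates the form of $q_\omega$, and hence, through \eqref{Cpdef}, the shape of the theoretically-optimal sampling density.
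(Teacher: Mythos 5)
Your overall architecture is exactly the paper's: identify $U = N^{-d/2}FW$ as unitary so that \eqref{FHaarnormalized} is a randomly subsampled, preconditioned unitary matrix in the sense of \S\ref{ss:BOS}, bound the bounded-orthonormal-system constant $\Theta$ by $\sqrt{\Gamma(p)}$ via a local coherence estimate, and conclude from \eqref{BOSRIPcond}. The gap is in the middle step. You propose to ``invoke the corresponding Fourier--Haar coherence lemma developed alongside \S\ref{ss:Haarwave} in the Supplementary Material,'' but no such multi-dimensional lemma exists there: the supplement provides only the one-dimensional decay bound (Lemma \ref{fhmeb}), namely $\frac{1}{\sqrt{N}}|(F\psi^{(e)}_{j,n})_{\varrho^{-1}(\omega)}| \lesssim 2^{j/2}/\max\{\overline{\omega},2^j\}$. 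The passage from that bound to $\mu(\omega) \lesssim_d q_\omega^{-1}$ is not something one can cite; it is the entire technical content of the paper's proof of the present lemma, and you flag it yourself as ``the main obstacle'' without carrying it out.

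Concretely, what is missing is the following. The tensor structure and Lemma \ref{fhmeb} give
\begin{equation*}
\Theta \lesssim_d \max_{\omega} \max_{0 \le j \le r-1} \left\{ \frac{1}{\sqrt{p_\omega}} \prod_{i=1}^d \frac{2^{j/2}}{\max\{\overline{\omega_i},2^j\}} \right\},
\end{equation*}
and one must show that the product is bounded by $1/q_\omega$ \emph{uniformly in the scale $j$} (the same $j$ appears in every factor, since each $d$-dimensional Haar wavelet carries a single scale). The paper does this by taking the nonincreasing rearrangement $\pi$ of $\overline{\omega_1},\ldots,\overline{\omega_d}$, letting $l$ be such that $\overline{\omega_{\pi(l)}} \ge 2^j \ge \overline{\omega_{\pi(l+1)}}$, rewriting the product as $2^{j(l-d/2)}/(\overline{\omega_{\pi(1)}}\cdots\overline{\omega_{\pi(l)}})$, and then absorbing the scale factor: for $d$ even, if $l \ge d/2$ one bounds $2^{j(l-d/2)} \le \overline{\omega_{\pi(d/2+1)}}\cdots\overline{\omega_{\pi(l)}}$ (each such term is $\ge 2^j$), while if $l < d/2$ one bounds $2^{-j(d/2-l)} \le (\overline{\omega_{\pi(l+1)}}\cdots\overline{\omega_{\pi(d/2)}})^{-1}$ (each such term is $\le 2^j$); either way the product is at most $1/(\overline{\omega_{\pi(1)}}\cdots\overline{\omega_{\pi(d/2)}}) = 1/q_\omega$, and the odd-$d$ case is analogous and produces the square root on the middle term. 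This rearrangement argument is short but it is precisely the step that makes $q_\omega$ (and hence the optimal sampling densities of \S\ref{s:Foursamp}) emerge, so it cannot be outsourced. Everything else in your proposal --- the unitarity of $U$, the role of $D$, the inequality $\sup_\omega p_\omega^{-1}q_\omega^{-2} \le \Gamma(p)$ from \eqref{Cpdef}, and the simplification of the log factors in \eqref{BOSRIPcond} --- matches the paper and is correct.
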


Note that the factor $1/2$ here is arbitrary. To prove this, we use the tools introduced in \S \ref{ss:BOS}. To this end, we first require an upper bound on the Fourier transform of the discrete Haar wavelet $\phi^{(e)}_{j,n}$. For this, we use Lemma \ref{fhmeb}.

\begin{proof}[Proof of Lemma \ref{ripfhm}]
Since $N^{-d/2} F W = U$ is unitary, the matrix \eqref{FHaarnormalized} is a randomly-subsampled unitary matrix in the sense of \S \ref{s:CSpreliminaries}. Hence it has the RIP of order $s$ provided \eqref{BOSRIPcond} holds, where $\Theta$ is as in \eqref{Thetadef}. In particular, it suffices to show that $\Theta \leq \sqrt{\Gamma(p)}$. Indeed, if this holds, the log factor in \eqref{BOSRIPcond} simplifies, since $s \geq 2$ and $\Gamma(p) \geq 1$ (this follows from \eqref{Cpdef} and the fact that $p$ is a probability distribution).
Using Lemma \ref{fhmeb} and tensor-product nature of the Fourier transform and Haar wavelets, we see that
\be{
\label{Thetabdtemp1}
\Theta \lesssim_{d} \max_{\substack{ \omega = (\omega_1,\ldots,\omega_d) \\ -N/2 < \omega_1,\ldots,\omega_d \leq N/2}} \max_{j = 0,\ldots,r-1} \left \{ \frac{1}{\sqrt{p_{\omega}} } \prod^{d}_{i=1}  \frac{2^{j/2}}{\max \{ \overline{\omega_i} , 2^j \} } \right \}.
}
Consider the product term on the right-hand side. Let $\pi : \{1,\ldots,d\} \rightarrow \{1,\ldots,d\}$ be a nonincreasing rearrangement of the the $\overline{\omega_i}$, and let $0 \leq l \leq d+1$ be such that
$
\overline{\omega_{\pi(l)}} \geq 2^j \geq \overline{\omega_{\pi(l+1)}}.
$
Note that if $l = 0 $ this means $2^j \geq \overline{\omega_{\pi(1)}}$ and if $l = d+1$ this means $\overline{\omega_{\pi(d)}} \geq 2^j$. Then
\eas{
\prod^{d}_{i=1} \frac{2^{j/2}}{\max \{  \overline{\omega_i} , 2^{j} \}} = \prod^{l}_{i=1} \frac{2^{j/2}}{\overline{\omega_{\pi(i)}}} \prod^{d}_{i=l+1} \frac{2^{j/2}}{2^j} = \frac{2^{j(l-d/2)}}{\overline{\omega_{\pi(1)}} \cdots \overline{\omega_{\pi(l)}} }.
}
Suppose first that $d$ is even. Then, since $\overline{\omega_{\pi(i)}} \geq 2^j$ for $i = 1,\ldots,l$, we can use the smallest $l-d/2$ such terms to bound the denominator, giving
\bes{
\prod^{d}_{i=1} \frac{2^{j/2}}{\max \{  \overline{\omega_i} , 2^{j} \}} \leq \frac{1}{\overline{\omega_{\pi(1)}} \cdots \overline{\omega_{\pi(d/2)}} }
}
If $d$ is odd, then by a similar argument we obtain
\bes{
\prod^{d}_{i=1} \frac{2^{j/2}}{\max \{  \overline{\omega_i} , 2^{j} \}} \leq \frac{1}{\overline{\omega_{\pi(1)}} \cdots \overline{\omega_{\pi((d-1)/2)}} \sqrt{\overline{\omega_{\pi((d+1)/2)}}} }.
}
Hence, recalling \eqref{qomdef1}--\eqref{qomdef2}, and returning to \eqref{Thetabdtemp1}, we deduce that
\bes{
\Theta \lesssim_d \max_{\substack{ \omega = (\omega_1,\ldots,\omega_d) \\ -N/2 < \omega_1,\ldots,\omega_d \leq N/2}} \left \{ \frac{1}{q_{\omega} \sqrt{p_{\omega}}} \right \} \leq \sqrt{\Gamma(p)},
}
as required.
\end{proof}

We now return to the final arguments. We first require the following:

\begin{lemma}
Under the conditions of Theorem \ref{t:TVVDS1D}, the following holds with probability at least $1-\varepsilon/2$:
\be{
\label{final10}
\nm{x}_{\ell^2}\lesssim \sqrt{\Gamma(p)}\nm{Ax}_{\ell^2}+\frac{\sqrt{N}\nm{x}_{\mathrm{TV}}}{s},\qquad \forall x \in \bbC^{N}.
}
Under the conditions of Theorem \ref{t:TVVDSdD}, the following holds with probability at least $1-\varepsilon/2$:
\begin{equation}
\label{final1}
\nm{x}_{\ell^2}\lesssim \sqrt{\Gamma(p)}\nm{Ax}_{\ell^2}+\frac{\nm{x}_{\mathrm{TV}_a}}{\sqrt{s}},\qquad \forall x \in \bbC^{N^d}.
\end{equation}
\end{lemma}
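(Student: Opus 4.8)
The plan is to isolate the variable density sub-block of $A$, turn it into a properly preconditioned (hence Haar-incoherent) matrix, and then chain Lemmas \ref{ripfhm} and \ref{pihim}. Write $A_2 = \frac{1}{\sqrt{m}} P_{\Omega_2} F$ for the block of $A$ coming from the variable density samples $\Omega_2$ (of order $m/2$), and let $D \in \bbC^{N^d \times N^d}$ be the diagonal preconditioner $D_{ii} = 1/\sqrt{p_{\varrho^{-1}(i)}}$ exactly as in Lemma \ref{ripfhm}. Define the preconditioned variable density matrix
\bes{
B = \frac{1}{\sqrt{(m/2) N^d}} P_{\Omega_2} D F .
}
I would take $k = s$ when $d = 1$ and $k = \lceil s \log^2(N) \rceil$ when $d \geq 2$; the extra $\log^2(N)$ factor in the latter is precisely what absorbs the logarithmic loss in the multidimensional Poincar\'e inequality, and it explains the additional $\log^2(N)$ in the measurement condition \eqref{nomvds} relative to \eqref{m1DVDS}.

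First I would invoke Lemma \ref{ripfhm}, applied with sparsity order $5k$ and failure probability $\varepsilon/2$, to conclude that with probability at least $1-\varepsilon/2$ the matrix $BW$ has the RIP of order $5k$ with constant $\delta_{5k} < 1/3$ (the value $1/2$ in Lemma \ref{ripfhm} being arbitrary). One must check that the measurement conditions of Theorems \ref{t:TVVDS1D} and \ref{t:TVVDSdD}, in particular the $\log(2\varepsilon^{-1})$ term, imply the hypothesis of Lemma \ref{ripfhm} for these choices of $k$ and $\Omega_2$; for $d \geq 2$ this uses $\log(\Gamma(p) s \log^2(N)) \asymp \log(\Gamma(p) \log(N) s)$. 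On this event, Lemma \ref{pihim} applied to $B$ yields $\nm{x}_{\ell^2} \lesssim \nm{Bx}_{\ell^2} + \sqrt{N}\nm{x}_{\mathrm{TV}}/k$ when $d=1$, and $\nm{x}_{\ell^2} \lesssim_d \nm{Bx}_{\ell^2} + \nm{x}_{\mathrm{TV}_a}\log(N/k)/\sqrt{k}$ when $d \geq 2$. In the latter case the choice $k = \lceil s \log^2(N) \rceil$ gives $\log(N/k)/\sqrt{k} \leq \log(N)/(\sqrt{s}\log(N)) = 1/\sqrt{s}$, so the TV term collapses to $\nm{x}_{\mathrm{TV}_a}/\sqrt{s}$; for $d=1$ the term is already $\sqrt{N}\nm{x}_{\mathrm{TV}}/s$.

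The crux is the deterministic comparison $\nm{Bx}_{\ell^2} \leq \sqrt{\Gamma(p)}\nm{Ax}_{\ell^2}$, which is where the factor $\sqrt{\Gamma(p)}$ in the conclusion enters. Expanding,
\bes{
\nm{Bx}^2_{\ell^2} = \frac{1}{(m/2)N^d} \sum_{\omega \in \Omega_2} \frac{1}{p_\omega} |(Fx)_\omega|^2 .
}
Definition \eqref{Cpdef} gives the pointwise bound $1/p_\omega \leq \Gamma(p)(q_\omega)^2$, and \eqref{qomdef1}--\eqref{qomdef2} together with $\overline{\omega_i} \leq N/2$ give $(q_\omega)^2 \leq (N/2)^d = N^d/2^d$, so $1/(N^d p_\omega) \leq \Gamma(p)/2^d$ for every frequency $\omega$. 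Substituting this uniform bound and recalling $\nm{A_2 x}^2_{\ell^2} = \frac{1}{m}\sum_{\omega\in\Omega_2}|(Fx)_\omega|^2$ gives $\nm{Bx}^2_{\ell^2} \leq (\Gamma(p)/2^{d-1})\nm{A_2 x}^2_{\ell^2} \leq \Gamma(p)\nm{Ax}^2_{\ell^2}$, where the last step uses $\nm{A_2 x}_{\ell^2} \leq \nm{Ax}_{\ell^2}$ since $A_2$ is a sub-block of $A$. This holds for every realization of $\Omega_2$, so it combines with the Poincar\'e estimate to give \eqref{final10} and \eqref{final1} on the RIP event of probability at least $1-\varepsilon/2$.

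I expect the main obstacle to be bookkeeping rather than conceptual: verifying that $k \asymp s\log^2(N)$ simultaneously absorbs the $\log(N/k)$ loss in the $d \geq 2$ Poincar\'e inequality and keeps the RIP hypothesis of Lemma \ref{ripfhm} within the measurement budget \eqref{nomvds}, including reconciling the slightly different logarithmic arguments. The comparison step itself is clean once one observes that the worst-case weight $1/(N^d p_\omega)$ is controlled by $\Gamma(p)$ through the elementary bound $(q_\omega)^2 \leq N^d/2^d$.
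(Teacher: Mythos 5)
Your proposal is correct and takes essentially the same route as the paper's own proof: precondition the variable-density block to form $B = \frac{1}{\sqrt{m N^d}} P_{\Omega_2} D F$ (the $m$ vs.\ $m/2$ normalization is immaterial), apply Lemma \ref{ripfhm} to get the RIP for $BW$ at order $\asymp s$ when $d=1$ and $\asymp s \log^2(N)$ when $d \geq 2$, feed this into the Haar-incoherent Poincar\'e inequality (Lemma \ref{pihim}) so that the $\log(N/k)/\sqrt{k}$ factor collapses to $1/\sqrt{s}$, and finish with the deterministic comparison $\nm{Bx}_{\ell^2} \leq \sqrt{\Gamma(p)} \nm{Ax}_{\ell^2}$, which in both arguments rests on the same pointwise bound $(q_{\omega})^2 \leq N^d$ combined with \eqref{Cpdef}. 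The only differences are cosmetic: you expand $\nm{Bx}^2_{\ell^2}$ as a sum over $\Omega_2$ where the paper bounds $\nm{D}_{\ell^2}$ directly, and your bookkeeping of the RIP order ($5k$) and constant ($<1/3$) is in fact slightly more careful than the paper's.
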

\begin{proof}
Consider the first case. The condition \eqref{m1DVDS} and Lemma  \ref{ripfhm} give that the matrix $BW=\frac{1}{\sqrt{mN^d}}P_{\Omega_2}DFW$ has the RIP of order $2s+1$ with constant $\delta_{2s+1} \leq 1/2$. Hence Lemma \ref{pihim} gives that
\bes{
\nm{x}_{\ell^2} \lesssim \nm{Bx}_{\ell^2}+\frac{\sqrt{N}\nm{x}_{\mathrm{TV}}}{s},\qquad \forall x \in \bbC^N.
}
For the second case, the condition \eqref{nomvds} and Lemma \ref{ripfhm} give that the matrix $BW=\frac{1}{\sqrt{mN^d}}P_{\Omega_2}DFW$ has the RIP of order $k$ with constant $\delta_{2k+1} \leq 1/2$, where $k=\lceil sd^2(\log N)^2\rceil$. Hence Lemma \ref{pihim} gives that
\begin{align*}
\nm{x}_{\ell^2}&\lesssim \nm{Bx}_{\ell^2}+\frac{\nm{x}_{\mathrm{TV}_a}}{\sqrt{s}},\qquad \forall x \in \bbC^{N^d}.
\end{align*}
Thus, it remains to show that $\nm{B x}_{\ell^2} \leq \sqrt{\Gamma(p)} \nm{A x}_{\ell^2}$.
Observe that $B=\frac{1}{\sqrt{mN^d}}P_{\Omega_2}DF=\frac{1}{\sqrt{N^d}}DA_2$. Therefore
\begin{align*}
\nm{Bx}_{\ell^2}&\leq \frac{1}{\sqrt{N^d}}\nm{D}_{\ell^2}\nm{Ax}_{\ell^2} \leq \frac{1}{\sqrt{N^d}\min_{\omega}\{\sqrt{p_{\omega}}\}}\nm{Ax}_{\ell^2}\leq\sqrt{\Gamma(p)}\nm{Ax}_{\ell^2}.
\end{align*}
Here, in the penultimate step we use \eqref{Cpdef} and the definition of $q_{\omega}$ to write
\bes{
\frac{1}{\sqrt{N^d} \sqrt{p_{\omega}}} \leq \sqrt{\Gamma(p)} \frac{q_{\omega}}{N^{d/2}} \leq \sqrt{\Gamma(p)}.
}
The result now follows.
\end{proof}

\begin{proof}[Proof of  \eqref{TVVDS1Dsigerr}, \eqref{srani} and \eqref{sri}]
We consider the case $d \geq 2$. The case $d = 1$ is identical.
As shown in \S \ref{ss:gradrecovproof}, the gradient error bounds \eqref{grani} and \eqref{gri} hold with probability at least $1-\varepsilon /2$. Hence, the bounds \eqref{grani}, \eqref{gri} and \eqref{final1} hold simultaneously with probability at least $1-\varepsilon$. We now apply \eqref{final1} to $\hat{x} - x$ to get
\eas{
\nmu{\hat{x}-x }_{\ell^2}&\lesssim\sqrt{\Gamma(p)} \nm{A(\hat{x}-x) }_{\ell^2}+\frac{ \nmu{\hat{x}-x }_{\mathrm{TV}_a}}{\sqrt{s}} \lesssim \sqrt{\Gamma(p)} \eta + \frac{\nmu{\hat{x}-x }_{\mathrm{TV}_a}}{\sqrt{s}} .
}
Hence \eqref{srani} follows from \eqref{grani}. For \eqref{sri}, we use \eqref{final1} and the inequality $\nmu{\hat{x} - x}_{\mathrm{TV}_a} \leq \sqrt{d} \nmu{\hat{x} - x}_{\mathrm{TV}_i}$ to get 
\eas{
\nmu{\hat{x}-x }_{\ell^2}&\lesssim\sqrt{\Gamma(p)} \nm{A(\hat{x}-x) }_{\ell^2}+\frac{ \nmu{\hat{x}-x }_{\mathrm{TV}_a}}{\sqrt{s}} \lesssim \sqrt{\Gamma(p)} \eta + \sqrt{d} \frac{\nmu{\hat{x}-x }_{\mathrm{TV}_i}}{\sqrt{s}} .
}
The result then follows from \eqref{gri}.
\end{proof}

\section{Proofs Part II: Theorem \ref{t:WalshTV}}\label{s:proofsII}

Since we no longer have the commuting property, our proof strategy is based on ideas from \cite{NeedellWardTV1}, see also \cite{KrahmerWardCSImaging}. In particular,  we first show the following result, which extends \cite[Thm.\ 6]{NeedellWardTV1} for $d = 2$ to $d \geq 2$ dimensions:
\thm{
\label{t:TVHaarinc2D}
Let $d \geq 2$, $N = 2^r \geq s \geq 2$, $W \in \bbR^{N^d \times N^d}$ be the matrix of the $d$-dimensional discrete Haar wavelet transform and $A \in \bbC^{m \times N^d}$.  Suppose that $A W $ has the RIP of order $t \gtrsim_d s \cdot \log(N) \cdot \log^2(N/s)$ with constant $\delta \leq 1/2$.  Then for every $x \in \bbC^{N^d}$ and $y = A x + e$, where $\nm{e}_{\ell^2} \leq \eta$ for some $\eta \geq 0$, any minimizer $\hat{x}$ of \eqref{TVminprob} satisfies
\bes{
\nm{\hat{x} - x}_{\ell^2} \lesssim_d \frac{\sigma_s(\nabla x)_{\ell^1}}{\sqrt{s \log(N)}} + \eta.
}
}

This result asserts that any measurement matrix which is incoherent with the Haar wavelet basis yields stable and robust recovery via TV minimization.  Hence, much as in the Fourier case, to derive guarantees for Walsh sampling we need to examine its incoherence with the Haar basis. Note that Theorem \ref{t:TVHaarinc2D} does not apply when $d = 1$ (which is the reason our results for Walsh sampling apply only when $d \geq 2$), since it relies crucially on the multi-dimensional Haar coefficient bound that follows from Lemmas \ref{bvtv} and \ref{cbbv}.

\prf{
[Proof of Theorem \ref{t:TVHaarinc2D}]
Since the proof is similar to that of \cite[Thm.\ 6]{NeedellWardTV1}, we omit some details.
Let $z = \hat{x} - x$ and $c = W^* z$ be its discrete Haar coefficients. We may assume $z$ is mean zero. Let $\pi : \{1,\ldots,N^d\} \rightarrow \{1,\ldots,N^d\}$ be a nonincreasing rearrangement of the entries of $c$ in absolute value.  We first show that
\be{
\label{Mtm}
\sum^{N^d}_{j = k+1} |c_{\pi(j)}| \leq C_d \log(N^d/t) \left ( \sum^{k}_{j=1} | c_{\pi(j)} | + \sigma_{s}(\nabla x)_{\ell^1} \right ),
}
where $C_d > 0$ and $k = (2^d - 1) l + 1$ is minimal such that $k \geq \tau_d s \log(N)$ for some constant $\tau_d \geq 1$ to be defined later. Observe that
\bes{
\sum_{j > k} | c_{\pi(j)} | \leq \sum_{i > t} \nmu{c_{(i)}}_{\ell^2},
}
where $c_{(i)} \in \bbC^{2^d-1}$ are the wavelet coefficient blocks, sorted in nonincreasing order. Hence Lemmas \ref{bvtv} and \ref{cbbv} give
\be{
\label{rearrangedecay}
\sum_{j > k} | c_{\pi(j)} | \lesssim \frac{\nm{\nabla z}_{\ell^1}}{2^{d/2}} \log(N^d/t).
}
Let $\Delta$ be the index set of the largest $s$ entries of $\nabla z$ in absolute value.  It is straightforward to show that
\be{
\label{purduebin}
\nmu{P^{\perp}_{\Delta} \nabla z}_{\ell^1} \leq 2 \sigma_{s}(\nabla x)_{\ell^1} + \nmu{P_{\Delta} \nabla z}_{\ell^1}.
}
Now consider $\nm{\nabla z}_{\ell^1}$. Write $\xi_1,\ldots,\xi_{N^d} \in \bbC^{N^d}$ for the discrete Haar basis and let $\Lambda = \{ j : (\nabla \xi_j)_i \neq 0\ \mbox{for some $i \in \Delta$} \}$
be the index set of those Haar wavelets that are nonconstant on $\Delta$.  It is straightforward to show that $|\Lambda | \lesssim_d s \log(N)$, thus we now let $\tau_d$ be such that $|\Lambda| \leq \tau_d s \log(N)$. Write $z = \sum_{j \in \Lambda} c_j \xi_j + \sum_{j \notin \Lambda} c_j \xi_j$.  Then $P_{\Delta} \nabla z = \sum_{j \in \Lambda} c_j P_{\Delta} \nabla \xi_j$ by construction, and therefore
\bes{
\nmu{P_{\Delta} \nabla z}_{\ell^1} \leq \sum_{j \in \Lambda} | c_j | \nmu{ \nabla \xi_j}_{\ell^1} \lesssim_{d} \sum_{j \in \Lambda} | c_j |.
}
Here, in the second step we use the fact that $\nmu{\nabla \xi_j}_{\ell^1} \lesssim_d 1$, which follows easily from the definition of the $\xi_j$.  Combining this with \eqref{purduebin}, we have
\bes{
\nm{\nabla z}_{\ell^1} \lesssim_d \sigma_{s}(\nabla x)_{\ell^1} + \sum_{j \in \Lambda} | c_j| \leq  \sigma_{s}(\nabla x)_{\ell^1} +  \sum^{k}_{j=1} | c_{\pi(j)} |,
}
where in the second step we use the definition of $\pi$ and the fact that  $|\Lambda| \leq \tau_d s \log(N) \leq k$.  Substituting this into \eqref{rearrangedecay} now yields \eqref{Mtm}.

To complete the proof we apply Lemma \ref{l:NWprop} to the matrix $A W$, with values $\gamma = \lceil C_d \log(N^d/t) \rceil$, $\sigma = \gamma \sigma_{s}(\nabla x)_{\ell^1}$ and $\Delta = \{ \pi(1),\ldots,\pi(k) \}$.  The matrix $A W$ satisfies the RIP of order $5 k \gamma^2 \lesssim_d s \cdot \log(N) \cdot \log^2(N/s)$.  Hence
\bes{
\nm{\hat{x} - x}_{\ell^2} = \nm{c}_{\ell^2} \lesssim \frac{\sigma}{\gamma \sqrt{k}} + \nm{A W c }_{\ell^2} \lesssim_d  \frac{\sigma_{s}(\nabla x)_{\ell^1}}{\sqrt{s \log(N)}} + \nm{A (\hat{x} - x)}_{\ell^2}.
}
The result now follows after noting that $\nm{A (\hat{x} - x)}_{\ell^2} \leq 2 \eta$.
}

\lem{
\label{l:WalshHaarRIP}
Let $0<\delta, \varepsilon<1$, $2\leq s\leq N^d$, $\Omega\subseteq\{0,\ldots,N-1\}^d$ be a $d$-dimensional variable sampling pattern corresponding to a probability distribution $p = (p_{i})$, with $\Gamma(p)$ as in \eqref{CpdefWalsh} and $D\in\mathbb{C}^{N^d\times N^d}$ be the  diagonal matrix with entries $D_{ii}=\frac{1}{\sqrt{p_{\varrho^{-1}(i)}}}$. Suppose that
\begin{equation*}
m \gtrsim_d  \Gamma(p)\cdot s \cdot \log(\Gamma(p) s) \cdot \left ( \log(\Gamma(p) s) \cdot \log(N) + \log(\epsilon^{-1}) \right ).
\end{equation*}
Then, with probability at least $1-\varepsilon$, the matrix
\be{
\label{WHaarnormalized}
\frac{1}{\sqrt{mN^d}}P_{\Omega}D HW,
}
has the RIP of order $s$ with constant $\delta_s\leq 1/2$, where $H$ and $W$ are the discrete Walsh--Hadamard and Haar wavelet transforms respectively.
}
\prf{
As in the Fourier case (see the proof of Lemma \ref{ripfhm}), the matrix $N^{-d/2} H W$ is unitary and therefore $A = \frac{1}{\sqrt{m N^d}} P_{\Omega} D H W$ is a randomly-subsampled unitary matrix. Hence it has the RIP of order $s$ whenever  \eqref{BOSRIPcond} holds with $\Theta$ is as in \eqref{Thetadef} for $U = N^{-d/2} H W$. Hence it suffices to show that $\Theta^2 \lesssim_d \Gamma(p)$.

Let $v_i$ denote the one-dimensional Walsh function on $[0,1)$ and $\psi^{(e)}_{j,n}$ be the one-dimensional Haar wavelet. Then
\be{
\label{WalshHaar}
\left | \ip{v_i}{\psi^{(0)}_{j,n}}_{L^2} \right | = \left \{ \begin{array}{cc} 2^{-j/2} & i < 2^j \\ 0 & \mbox{otherwise} \end{array} \right . ,\qquad \left | \ip{v_i}{\psi^{(1)}_{j,n}}_{L^2} \right | =\left \{ \begin{array}{cc} 2^{-j/2} & 2^j \leq i < 2^{j+1} \\ 0 & \mbox{otherwise} \end{array} \right . ,
}
See \cite[Thm.\ 6.8]{AAHWalshWavelet}. In particular, this implies that
\be{
\label{WalshHaarbd}
| \ip{v_i}{\psi^{(e)}_{j,n}}_{L^2} | \leq \left \{ \begin{array}{cc} 2^{-j/2} & i < 2^{j+1} \\ 0 & \mbox{otherwise} \end{array} \right . .
}
Let $\psi^{(e)}_{j,n}$ be the $d$-dimensional Haar wavelets on $[0,1]^d$ and $v_i = v_{i_1} \otimes \cdots \otimes v_{i_d}$ be the $d$-dimensional Walsh functions, where $i = (i_1,\ldots,i_d)$. Then
\bes{
\Theta = \max \left \{ \frac{1}{\sqrt{p_i}} \left | \ip{v_i}{\psi^{(e)}_{j,n}}_{L^2}  \right | \right \},
}
where the maximum is taken over all $i = (i_1,\ldots,i_d) \in \{0,\ldots,N-1\}^d$, $n = (n_1,\ldots,n_d)$ with $0 \leq n_k < 2^j$, $j = 0,\ldots,r-1$ and $e \in \{0,1\}^d$ with $e \neq 0$ unless $j = 0$. Using \eqref{WalshHaarbd}, we have
\bes{
\left | \ip{v_i}{\psi^{(e)}_{j,n}}_{L^2} \right | = \prod^{d}_{k=1} \left | \ip{v_{i_k}}{\psi^{(e_k)}_{j,n_k}}_{L^2} \right |  \leq \left \{ \begin{array}{cc} 2^{-jd/2} & \nmu{i}_{\ell^{\infty}} < 2^{j+1} \\ 0 & \mbox{otherwise} \end{array} \right . .
}
It follows that $ | \ip{v_i}{\psi^{(e)}_{j,n}}_{L^2}  | \lesssim_d  ( 1 + \nmu{i}^{d/2}_{\ell^{\infty}}  )^{-1}$
and therefore
\bes{
\Theta \lesssim_d \max_{i} \left \{ \frac{1}{\sqrt{p_i}} \left ( 1 + \nmu{i}^{d/2}_{\ell^{\infty}} \right )^{-1} \right \} \leq \sqrt{\Gamma(p)},
}
as required.
}

\prf{
[Proof of Theorem \ref{t:WalshTV}]
Let $A'$ be the matrix defined in \eqref{WHaarnormalized} of Lemma \ref{l:WalshHaarRIP}.  This lemma, the condition on $m$ and the fact that $\Gamma(p) \gtrsim \log(N)$ (Lemma \ref{l:GammaWalsh}) imply that $A'$ has the RIP of order $t \gtrsim_{d} s \log(N) \log^2(N/s)$.  To complete the proof, we cannot simply invoke Theorem \ref{t:TVHaarinc2D}, since the measurement matrix $A = \frac{1}{\sqrt{m}} P_{\Omega} H$ is not scaled in such a way for $A W$ to have the RIP.  Instead, we follow the same steps as its proof, making necessary adjustments.  Let $z = \hat{x} - x$, $c = W^* z$ be its Haar coefficients and $k$ be as in the proof.  Then \eqref{Mtm} holds (this property does not depend on the measurement matrix).  We now apply \cite[Prop.\ 3]{NeedellWardTV1} using the matrix $A' $ and the values $\gamma = \lceil C_d\log(N^d/t) \rceil $ and $\sigma = \gamma \sigma_{s}(\nabla x)_{\ell^1}$.  This gives
\eas{
\nm{\hat{x} - x}_{\ell^2} = \nm{d}_{\ell^2} &\lesssim \frac{\sigma_{s}(\nabla x)_{\ell^1}}{\sqrt{s \log(N)}} +\nmu{A' c}_{\ell^2}.
}
Now observe that
\bes{
\nmu{A' c}_{\ell^2} = \nmu{A' W^* (\hat{x} - x) }_{\ell^2} = \frac{1}{\sqrt{N^d} } \nm{D}_{\ell^2} \nm{A (\hat{x} - x) }_{\ell^2} \leq \frac{2}{\sqrt{N^d} \min_{i} \{ \sqrt{p_i} \} } \eta. 
}
Observe that
\bes{
\frac{1}{\sqrt{N^d} \sqrt{p_i}} \leq \sqrt{\Gamma(p)} \frac{\sqrt{1+\nmu{i}^{d}_{\ell^{\infty}} }}{\sqrt{N^{d}}} \leq \sqrt{2\Gamma(p)}.
}
Hence $\nm{A' d}_{\ell^2} \lesssim \sqrt{\Gamma(p)} \eta$, as required.
}

\appendix

\ifpdf
\hypersetup{
  pdftitle={Improved TV guarantees},
  pdfauthor={B. Adcock, N. Dexter and Q. Xu}
}
\fi

\maketitle

\section{Preliminary results from compressed sensing}\label{s:CSpreliminaries}

Below we collect some standard compressed sensing results. For further information, see for instance \cite{FoucartRauhutCSbook}.

\subsection{Sparsity, rNSP and RIP}

Let $N \geq s \geq 2$. Recall that a vector $x \in \bbC^N$ is $s$-sparse if it has at most $s$ nonzero entries. We write $\Sigma_{s}$ for the set of $s$-sparse vectors. Let $D_s$ denote the set of all subsets $\Delta \subseteq \{1,\ldots,N\}$ for which $|\Delta| \leq s$. Thus, $x \in \Sigma_s$ if and only if its support $\mathrm{supp}(x) = \{ i : x_i \neq 0\}$ belongs to $D_s$.

\defn{[Robust Null Space Property]
\label{d:rNSP}
A matrix $A \in \bbC^{m \times N}$ satisfies the \textit{robust Null Space Property (rNSP)} of order $s$ with constants $0 < \rho < 1$ and $\gamma > 0$ if
\be{
\label{rNSP}
\nm{P_{\Delta} x}_{\ell^2} \leq \frac{\rho}{\sqrt{s}} \nmu{P^{\perp}_{\Delta} x}_{\ell^1} + \gamma \nm{A x}_{\ell^2},\qquad \forall x \in \bbC^N,\ \Delta \in D_s.
}
}

\lem{[rNSP implies $\ell^1$ and $\ell^2$ distance bounds]
\label{l:rNSPdistance}
Suppose that $A$ has the rNSP of order $s$ with constants $0 < \rho < 1$ and $\gamma > 0$.  Let $x,z \in \bbC^N$.  Then
\bes{
\nm{z - x}_{\ell^1} \leq \frac{1+\rho}{1-\rho} \left(  2 \sigma_{s}(x)_{\ell^1} + \nm{z}_{\ell^1} - \nm{x}_{\ell^1} \right ) + \frac{2 \gamma}{1-\rho} \sqrt{s} \nm{A(z-x)}_{\ell^2},
}
and
\bes{
\nm{x-z}_{\ell^2} \leq \frac{(3 \rho+1)(\rho+1)}{2(1-\rho)} \left( \frac{2 \sigma_{s}(x)_{\ell^1} + \nm{z}_{\ell^1} - \nm{x}_{\ell^1} }{\sqrt{s}} \right) + \frac{(3 \rho+5) \gamma}{2(1-\rho)} \nm{A (z-x)}_{\ell^2}.
}
}
Note that this result is a special case (corresponding to $M = 1$) of a result proved later, Lemma \ref{rnsp21}.

\defn{
The $s^{\rth}$ \textit{Restricted Isometry Constant (RIC)} $\delta_s$ of a matrix $A \in \bbC^{m \times N}$ is the smallest $\delta \geq 0$ such that
\be{
\label{RIP}
(1-\delta) \| x \|^2_{\ell^2} \leq \| A x \|^2_{\ell^2} \leq (1+\delta) \| x \|^2_{\ell^2},\quad \forall x \in \Sigma_{s}.
}
If $0 < \delta_s < 1$ then the matrix $A$ is said to have the \textit{Restricted Isometry Property (RIP)} of order $s$.
}

Note that the RIP implies the rNSP. For instance, if $A$ has the RIP of order $2s$ with constant $\delta_{2s} < 4 / \sqrt{41}$ then it has the rNSP of order $s$ with constants $\rho$ and $\gamma$ depending on $\delta_{2s}$ \cite[Thm.\ 6.13]{FoucartRauhutCSbook}.

For convenience, we now state one further result:

\begin{lemma}
\label{l:blockArNSP}
If $A \in \bbC^{m \times N}$ satisfies the rNSP of order $s$ with constants $\rho$ and $\gamma$, then 
\begin{equation*}
B=\left(\begin{array}{ccc} A& & \\ &\ddots& \\&&A \end{array}\right)\in\mathbb{C}^{dm\times dN},
\end{equation*}
 has the rNSP of order $s$ with constants $\rho' = \rho$ and $\gamma'=\sqrt{d}\gamma$. 
\end{lemma}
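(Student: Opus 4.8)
The plan is to exploit the block-diagonal structure of $B$ and reduce the rNSP for $B$ to $d$ copies of the rNSP for $A$. Identifying $\bbC^{dN}$ with $(\bbC^N)^d$, I would write a generic $x \in \bbC^{dN}$ as a stack $x = (x^{(1)},\ldots,x^{(d)})$ with each $x^{(i)} \in \bbC^N$, so that $Bx = (A x^{(1)},\ldots,A x^{(d)})$ and hence $\nm{Bx}_{\ell^2}^2 = \sum_{i=1}^d \nm{A x^{(i)}}_{\ell^2}^2$.

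Next I would fix a support set $\Delta \in D_s$, i.e.\ $\Delta \subseteq \{1,\ldots,dN\}$ with $|\Delta| \leq s$, and split it according to the blocks: let $\Delta_i \subseteq \{1,\ldots,N\}$ collect the indices of $\Delta$ lying in the $i$-th block. The two bookkeeping facts to record are that $|\Delta_i| \leq |\Delta| \leq s$, so each $\Delta_i \in D_s$ in the base space $\bbC^N$ and the rNSP of $A$ applies to it, and that the projections decompose exactly blockwise, namely $\nm{P_\Delta x}_{\ell^2}^2 = \sum_i \nm{P_{\Delta_i} x^{(i)}}_{\ell^2}^2$ and $\nmu{P^{\perp}_\Delta x}_{\ell^1} = \sum_i \nmu{P^{\perp}_{\Delta_i} x^{(i)}}_{\ell^1}$.

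Then I would combine. Using the norm inequality $\nm{\cdot}_{\ell^2} \leq \nm{\cdot}_{\ell^1}$ across the $d$ blocks,
\[
\nm{P_\Delta x}_{\ell^2} = \Big( \sum_{i=1}^d \nm{P_{\Delta_i} x^{(i)}}_{\ell^2}^2 \Big)^{1/2} \leq \sum_{i=1}^d \nm{P_{\Delta_i} x^{(i)}}_{\ell^2},
\]
I would apply the rNSP of $A$ to each summand and regroup. The $\ell^1$-defect terms sum exactly to $\tfrac{\rho}{\sqrt{s}}\nmu{P^{\perp}_\Delta x}_{\ell^1}$, giving $\rho' = \rho$, while the measurement terms give $\gamma \sum_i \nm{A x^{(i)}}_{\ell^2}$, to which Cauchy--Schwarz yields $\gamma \sum_i \nm{A x^{(i)}}_{\ell^2} \leq \sqrt{d}\,\gamma\,\nm{Bx}_{\ell^2}$, hence $\gamma' = \sqrt{d}\gamma$.

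There is no real obstacle here: the only points requiring care are verifying that each $\Delta_i$ still lies in $D_s$, so that the per-block rNSP is legitimate, and that the projection norms split exactly, both of which are immediate from the block structure. I note that the factor $\sqrt{d}$ originates solely from the Cauchy--Schwarz step on the measurement term; had I instead applied the triangle inequality in $\ell^2$ to the two families $(\tfrac{\rho}{\sqrt{s}}\nmu{P^{\perp}_{\Delta_i}x^{(i)}}_{\ell^1})_i$ and $(\gamma\nm{Ax^{(i)}}_{\ell^2})_i$, the measurement term would collapse to exactly $\gamma\,\nm{Bx}_{\ell^2}$, giving the stronger constant $\gamma' = \gamma$. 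Since only $\gamma' = \sqrt{d}\gamma$ is claimed and needed in the sequel, either route suffices.
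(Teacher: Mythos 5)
Your proof is correct and essentially identical to the paper's: both split the support set blockwise, bound $\nm{P_{\Delta}x}_{\ell^2}$ by the sum of the per-block $\ell^2$ norms, apply the rNSP of $A$ to each block (legitimate since $|\Delta_i| \leq s$), and finish with Cauchy--Schwarz on the measurement terms to produce the factor $\sqrt{d}$. Your closing observation is also sound: applying Minkowski's inequality in $\ell^2$ across the blocks to the two families instead would indeed yield the sharper constant $\gamma' = \gamma$, an improvement the paper does not pursue since $\gamma' = \sqrt{d}\gamma$ suffices for everything downstream.
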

\begin{proof}
Consider any $x=(x^{\top}_1, \ldots, x^{\top}_d )\in\mathbb{C}^{dN}$ with $x_i\in\mathbb{C}^{N}$.  Let $\Lambda\subseteq\{1,\ldots,dN\}$ with $|\Lambda|=s$, and write $\Lambda=\Lambda_1\cup\cdots\cup\Lambda_d$ where $\Lambda_i\subseteq\{(i-1)N+1,\ldots,iN\}$. Since $|\Lambda_i|\leq s$ the rNSP for $A$ gives 
\begin{align*}
\nmu{P_{\Lambda}x }_{\ell^2}& \leq \nmu{P_{\Lambda_1}x_1}_{\ell^2}+\ldots+\nmu{P_{\Lambda_d}x_d}_{\ell^2}\\
&\leq \frac{\rho}{\sqrt{s}}(\nmu{P_{\Lambda_1}^{\perp}x_1}_{\ell^1}+\ldots+\nmu{P_{\Lambda_d}^{\perp}x_d}_{\ell^1})+\gamma(\nmu{A x_1}_{\ell^2}+\ldots+\nmu{A x_d}_{\ell^2})\\
&\leq\frac{\rho}{\sqrt{s}}\nmu{P_{\Lambda}^{\perp}x}_{\ell^1}+ \sqrt{d}\gamma \nmu{B x}_{\ell^2},
\end{align*}
as required.
\end{proof}

\subsection{Bounded orthonormal systems}\label{ss:BOS}

Let $\cD \subset \bbR^d$ be a domain with a probability measure $\nu$ and $\psi_1,\ldots,\psi_N$ be an orthonormal system of complex-valued functions on $\cD$. The system is a \textit{bounded orthonormal system} with constant $\Theta$ if
\bes{
\sup_{t \in \cD} | \psi_j(t) | \leq \Theta,\quad j = 1,\ldots,N.
}
Given such a system, draw $t_1,\ldots,t_m$ random and independently from $\nu$ and define the measurement matrix
\bes{
A = \frac{1}{\sqrt{m}} \left ( \psi_j(t_i) \right )^{m,N}_{i,j=1} \in \bbC^{m \times N}.
}
Let $0 < \delta , \epsilon < 1$ and $N \geq s \geq 2$. The following result was shown in \cite[Thm.\ 2.2]{ChkifaDownwardsCS} (we have slightly simplified the log factor below using the fact that $N \geq s \geq 2$). Suppose that
\be{
\label{BOSRIPcond}
m \gtrsim \delta^{-2} \cdot \Theta^2 \cdot s \cdot L,
\quad 
L = \log \left (\frac{\Theta^2 s}{\delta^2} \right ) \cdot \left [  \frac{1}{\delta^4}\log \left ( \frac{\Theta^2 s}{\delta^2}  \right ) \cdot \log(N) + \frac{1}{\delta}\log \left (\frac{1}{\delta \epsilon} \log \left ( \frac{\Theta^2 s}{\delta^2} \right ) \right ) \right ].
}
Then, with probability at least $1-\epsilon$ the matrix $A$ has the RIP of order $s$ with $\delta_{s} \leq \delta$.

Randomly-subsampled unitary matrices are important examples of the bounded orthonormal system framework. Let $U \in \bbC^{N \times N}$ be unitary and $p = (p_i)^{N}_{i=1}$ be a probability distribution on $\{1,\ldots,N\}$. Draw $t_1,\ldots,t_m$ independently and randomly from $p$ and consider the measurement matrix
\bes{
A  = \frac{1}{\sqrt{m}} P_{T} D U \in \bbC^{m \times N},\qquad D = \diag(1/\sqrt{p_1},\ldots,1/\sqrt{p_N}) \in \bbC^{N \times N},
}
where $T = \{ t_1,\ldots,t_m \}$ and $P_T$ is the row selector matrix. Now let $\cD = \{1,\ldots,N\}$, $\nu$ be the probability measure corresponding to $p$ and define $\phi_j(i) = \frac{1}{\sqrt{p_i}} u_{ij}$, where $U = (u_{ij})$. It is straightforward to verify that this is a bounded orthonormal system. The constant $\Theta$ is
\be{
\label{Thetadef}
\Theta = \max_{i,j=1,\ldots,N} \frac{|u_{ij}|}{\sqrt{p_i}}.
}

\subsection{Matrix recovery}\label{ss:matrixrecov}

We now need a more general version of the rNSP, see for instance, \cite[Defn.\ 4.1]{dexter2018mixed}:
\begin{definition}
\label{l22rNSP}
A matrix $A\in\mathbb{C}^{m\times N}$ satisfies the $\ell^{2,2}$-\textit{robust Null Space Property (rNSP)} of order $s$ with constants $0<\rho<1$ and $\gamma>0$ if
\bes{
\nm{P_{\Delta} X}_{\ell^{2,2}}\leq \frac{\rho}{\sqrt{s}}\nm{P_{\Delta}^{\perp} X}_{\ell^{2,1}}+\gamma\nm{A X}_{\ell^{2,2}},\qquad \forall X \in \bbC^{N \times M},\ \Delta \in D_s.
}
\end{definition}
As shown in \cite[Prop.\ 4.3]{dexter2018mixed}, if $A\in\mathbb{C}^{m\times N}$ satisfies the RIP of order $s$ with constant $\delta_{2s}<\frac{4}{\sqrt{41}}$, then $A$ satisfies the $\ell^{2,2}$-rNSP of order $s$ with constants $\rho$ and $\gamma$ depending on $\delta_{2s}$. We also have the following generalization of Lemma \ref{l:rNSPdistance}:
\begin{lemma}
[rNSP implies $\ell^{2,1}$ and $\ell^{2,2}$ distance bounds] 
\label{rnsp21}
Suppose that $A$ has the $\ell^{2,2}$-rNSP of order $s$ with constants $0<\rho<1$ and $\gamma>0$. Let $X,Z\in\mathbb{C}^{N\times M}$. Then
\begin{equation}
\label{rnsp21bd1}
\nm{Z-X}_{\ell^{2,1}}\leq\frac{1+\rho}{1-\rho}(2\sigma_s(X)_{\ell^{2,1}}+\nm{Z}_{\ell^{2,1}}-\nm{X}_{\ell^{2,1}})+\frac{2\gamma}{1-\rho}\sqrt{s}\nm{A(Z-X)}_{\ell^{2,2}},
\end{equation}
and
\be{
\label{rnsp21bd2}
\nm{Z-X}_{\ell^{2,2}} \leq \frac{(3 \rho+1)(\rho+1)}{2(1-\rho)} \left( \frac{2 \sigma_{s}(X)_{\ell^{2,1}} + \nm{Z}_{\ell^{2,1}} - \nm{X}_{\ell^{2,1}} }{\sqrt{s}} \right) + \frac{(3 \rho+5) \gamma}{2(1-\rho)} \nm{A (Z-X)}_{\ell^{2,2}}.
}
\end{lemma}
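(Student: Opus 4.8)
The plan is to transcribe the scalar robust null space argument behind Lemma~\ref{l:rNSPdistance} (the case $M=1$), which is itself the proof of \cite[Thm.~4.20 and Thm.~4.25]{FoucartRauhutCSbook}, replacing the magnitude $|x_i|$ of the $i$th entry by the $\ell^2$-norm $r_i(X) := \nmu{X^{(i)}}_{\ell^2}$ of the $i$th row of $X$. The observation that makes this transcription legitimate is that, writing $r(X) = (r_i(X))_{i=1}^N \in \bbR^N_{\geq 0}$, one has $\nm{X}_{\ell^{2,1}} = \nm{r(X)}_{\ell^1}$, $\nm{X}_{\ell^{2,2}} = \nm{r(X)}_{\ell^2}$, $\nm{P_\Delta X}_{\ell^{2,1}} = \nm{P_\Delta r(X)}_{\ell^1}$, and $\sigma_s(X)_{\ell^{2,1}} = \sigma_s(r(X))_{\ell^1}$; every scalar inequality invoked in the $M=1$ proof involves only the nonnegative row-norms, for which the mixed norms behave exactly like the $\ell^1$ and $\ell^2$ norms. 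Throughout I set $V = Z - X$ and let $\Delta \in D_s$ index the $s$ rows of $X$ of largest $\ell^2$-norm, so that $\nmu{P^\perp_\Delta X}_{\ell^{2,1}} = \sigma_s(X)_{\ell^{2,1}}$.

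For the $\ell^{2,1}$ estimate \eqref{rnsp21bd1}, I would first apply the $\ell^{2,2}$-rNSP (Definition~\ref{l22rNSP}) to $V$ on $\Delta$ and use the row-wise Cauchy--Schwarz bound $\nm{P_\Delta V}_{\ell^{2,1}} \leq \sqrt{s}\,\nm{P_\Delta V}_{\ell^{2,2}}$ (valid since $P_\Delta V$ has at most $s$ nonzero rows) to obtain $\nm{P_\Delta V}_{\ell^{2,1}} \leq \rho\,\nmu{P^\perp_\Delta V}_{\ell^{2,1}} + \sqrt{s}\,\gamma\,\nm{AV}_{\ell^{2,2}}$. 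Next, the feasibility-free triangle manipulation $\nmu{P^\perp_\Delta V}_{\ell^{2,1}} \leq 2\sigma_s(X)_{\ell^{2,1}} + \nm{Z}_{\ell^{2,1}} - \nm{X}_{\ell^{2,1}} + \nm{P_\Delta V}_{\ell^{2,1}}$, which follows from the reverse and ordinary triangle inequalities for the genuine norm $\nm{\cdot}_{\ell^{2,1}}$ together with $\nmu{P^\perp_\Delta X}_{\ell^{2,1}} = \sigma_s(X)_{\ell^{2,1}}$, lets me eliminate $\nm{P_\Delta V}_{\ell^{2,1}}$, solve for $\nmu{P^\perp_\Delta V}_{\ell^{2,1}}$ (absorbing the factor $\rho < 1$), and recombine $\nm{V}_{\ell^{2,1}} = \nm{P_\Delta V}_{\ell^{2,1}} + \nmu{P^\perp_\Delta V}_{\ell^{2,1}}$. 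Collecting terms reproduces \eqref{rnsp21bd1}.

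For the Frobenius estimate \eqref{rnsp21bd2} I would split $\nm{V}_{\ell^{2,2}} \leq \nm{P_\Delta V}_{\ell^{2,2}} + \nmu{P^\perp_\Delta V}_{\ell^{2,2}}$, bound the head term by the rNSP, and control the tail by a mixed-norm Stechkin (shelling) inequality: partitioning the rows outside $\Delta$ into consecutive blocks $T_1, T_2, \ldots$ of $s$ rows sorted by decreasing $\ell^2$-norm, each $i \in T_k$ obeys $r_i \leq s^{-1}\nm{P_{T_{k-1}} V}_{\ell^{2,1}}$, whence $\nm{P_{T_k} V}_{\ell^{2,2}} \leq \sqrt{s}\max_{i \in T_k} r_i \leq s^{-1/2}\nm{P_{T_{k-1}} V}_{\ell^{2,1}}$, and summing bounds $\nmu{P^\perp_\Delta V}_{\ell^{2,2}}$ by $s^{-1/2}$ times an $\ell^{2,1}$ quantity. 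Substituting the already-established $\ell^{2,1}$ estimate and the rNSP bound on the head, then collecting constants exactly as in \cite[Thm.~4.25]{FoucartRauhutCSbook}, yields \eqref{rnsp21bd2}. The main obstacle is precisely this last step: the shelling argument and subsequent bookkeeping must be carried out carefully enough to recover the stated constants $\tfrac{(3\rho+1)(\rho+1)}{2(1-\rho)}$ and $\tfrac{(3\rho+5)\gamma}{2(1-\rho)}$, rather than the cruder constants a naive combination of the triangle inequality, the rNSP, and the tail bound produces. Everything else is a verbatim transfer of the scalar argument.
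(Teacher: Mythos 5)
Your treatment of the $\ell^{2,1}$ bound \eqref{rnsp21bd1} is correct and coincides with the paper's own proof: same choice of $\Delta$ (the best $s$-row support of $X$), same reverse/ordinary triangle manipulation, same use of $\nm{P_{\Delta}V}_{\ell^{2,1}} \leq \sqrt{s}\nm{P_{\Delta}V}_{\ell^{2,2}}$ together with the $\ell^{2,2}$-rNSP, and the same elimination to absorb the factor $\rho < 1$.

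The $\ell^{2,2}$ bound \eqref{rnsp21bd2} is where the proposal has a genuine gap, in two respects. First, a repairable slip: your shelling inequality $r_i(V) \leq s^{-1}\nm{P_{T_{k-1}}V}_{\ell^{2,1}}$ for $i \in T_k$ requires the rows of $V$ indexed by $T_{k-1}$ to dominate those indexed by $T_k$; for $k=1$ this forces $T_0 = \Delta$ to be the set of $s$ \emph{largest rows of $V$}, but you fixed $\Delta$ as the largest rows of $X$, so $\nm{P_{T_1}V}_{\ell^{2,2}}$ is not controlled. (The paper sidesteps this by re-choosing $\Delta$ as the largest rows of $V$ in this half of the argument, which is harmless since at the end one only uses $\nm{P^{\perp}_{\Delta}V}_{\ell^{2,1}} \leq \nm{V}_{\ell^{2,1}}$.) Second, and more fundamentally, the route you commit to cannot produce the stated constants. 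Your shelling sums to the Stechkin-type bound $\nm{P^{\perp}_{\Delta}V}_{\ell^{2,2}} \leq s^{-1/2}\nm{V}_{\ell^{2,1}}$ with constant $1$, and combining this with the rNSP on the head and with \eqref{rnsp21bd1} --- which is exactly the bookkeeping of \cite[Thm.\ 4.25]{FoucartRauhutCSbook} --- yields the constants $\frac{(1+\rho)^2}{1-\rho}$ and $\frac{(3+\rho)\gamma}{1-\rho}$. These exceed the stated $\frac{(3\rho+1)(\rho+1)}{2(1-\rho)}$ and $\frac{(3\rho+5)\gamma}{2(1-\rho)}$ by exactly $\frac{1+\rho}{2}$ and $\frac{\gamma}{2}$, for every $\rho<1$; so pointing to FR's bookkeeping is not a matter of carefulness --- that route provably overshoots. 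The idea you are missing is that the head and tail should not be decoupled: with $\Delta$ the largest $s$ rows of $V$ one has $\max_{i \notin \Delta} r_i(V) \leq s^{-1/2}\nm{P_{\Delta}V}_{\ell^{2,2}}$, hence $\nm{P^{\perp}_{\Delta}V}^2_{\ell^{2,2}} \leq s^{-1/2}\nm{P_{\Delta}V}_{\ell^{2,2}}\nm{P^{\perp}_{\Delta}V}_{\ell^{2,1}}$, and Young's inequality (AM--GM) gives $\nm{P^{\perp}_{\Delta}V}_{\ell^{2,2}} \leq \frac{1}{2}\nm{P_{\Delta}V}_{\ell^{2,2}} + \frac{1}{2\sqrt{s}}\nm{P^{\perp}_{\Delta}V}_{\ell^{2,1}}$. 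Feeding the rNSP into $\nm{V}_{\ell^{2,2}} \leq \frac{3}{2}\nm{P_{\Delta}V}_{\ell^{2,2}} + \frac{1}{2\sqrt{s}}\nm{P^{\perp}_{\Delta}V}_{\ell^{2,1}}$ produces the paper's intermediate inequality
\[
\nm{Z-X}_{\ell^{2,2}} \leq \frac{3\rho+1}{2}\frac{\nm{Z-X}_{\ell^{2,1}}}{\sqrt{s}} + \frac{3\gamma}{2}\nm{A(Z-X)}_{\ell^{2,2}},
\]
and substituting \eqref{rnsp21bd1} into this recovers \eqref{rnsp21bd2} with precisely the stated constants. (Alternatively, a sharp mixed-norm Stechkin bound $\sigma_s(V)_{\ell^{2,2}} \leq \frac{1}{2\sqrt{s}}\nm{V}_{\ell^{2,1}}$, itself an AM--GM consequence, would also suffice; the constant-$1$ version your shelling gives does not.)
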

\begin{proof}
Consider \eqref{rnsp21bd1}. Let $V=Z-X$ and $\Delta\in D_s$ be such that $\nm{X_{\Delta}^{\perp}}_{\ell^{2,1}}=\sigma_s(X)_{\ell^{2,1}}$. Then we have
\begin{align*}
\nm{X}_{\ell^{2,1}}+\nm{P_{\Delta}^{\perp} V}_{\ell^{2,1}}&=\nm{X}_{\ell^{2,1}}+\nm{P_{\Delta}^{\perp}(Z-X)}_{\ell^{2,1}}\\
&\leq \nm{P_{\Delta} X}_{\ell^{2,1}}+2\nm{P_{\Delta}^{\perp} X}_{\ell^{2,1}}+\nm{P_{\Delta}^{\perp} Z}_{\ell^{2,1}}\\
&=2 \nm{P_{\Delta}^{\perp} X}_{\ell^{2,1}}+\nm{P_{\Delta} X}_{\ell^{2,1}}+\nm{Z}_{\ell^{2,1}}-\nm{P_{\Delta} Z}_{\ell^{2,1}}\\
&\leq 2\sigma_s(X)_{\ell^{2,1}}+\nm{P_{\Delta} V}_{\ell^{2,1}}+\nm{Z}_{\ell^{2,1}},
\end{align*}
which implies that
\begin{equation*}
\nm{P_{\Delta}^{\perp}V}_{\ell^{2,1}}\leq 2\sigma_s(X)_{\ell^{2,1}}+\nm{Z}_{\ell^{2,1}}-\nm{X}_{\ell^{2,1}}+\nm{P_{\Delta}V}_{\ell^{2,1}}.
\end{equation*}
Now consider $\nm{P_{\Delta} V}_{\ell^{2,1}}$. We have
\begin{equation*}
\nm{P_{\Delta}V}_{\ell^{2,1}}\leq \sqrt{s}\nm{P_{\Delta}V}_{\ell^{2,2}}\leq \rho \nm{P_{\Delta}^{\perp}V}_{\ell^{2,1}}+\sqrt{s}\gamma\nm{A V}_{\ell^{2,2}}.
\end{equation*}
Hence
\begin{equation*}
\nm{P_{\Delta}V}_{\ell^{2,1}}\leq \rho(2\sigma_s(X)_{\ell^{2,1}}+\nm{Z}_{\ell^{2,1}}-\nm{X}_{\ell^{2,1}}+\nm{P_{\Delta}V}_{\ell^{2,1}})+\sqrt{s}\gamma\nm{A V}_{\ell^{2,2}},
\end{equation*}
which gives
\begin{equation*}
\nm{P_{\Delta}V}_{\ell^{2,1}}\leq \frac{\rho}{1-\rho}(2\sigma_s(X)_{\ell^{2,1}}+\nm{Z}_{\ell^{2,1}}-\nm{X}_{\ell^{2,1}})+\sqrt{s}\frac{\gamma}{1-\rho}\nm{A V}_{\ell^{2,2}}.
\end{equation*}
Now we have
\begin{align*}
\nm{Z-X}_{\ell^{2,1}}& \leq \nm{P_{\Delta}V}_{\ell^{2,1}}+\nm{P_{\Delta}^{\perp}V}_{\ell^{2,1}}\\
&\leq 2\sigma_s(X)_{\ell^{2,1}}+\nm{Z}_{\ell^{2,1}}-\nm{X}_{\ell^{2,1}}+2\nm{P_{\Delta}V}_{\ell^{2,1}}\\
&\leq \frac{1+\rho}{1-\rho}(2\sigma_s(X)_{\ell^{2,1}}+\nm{Z}_{\ell^{2,1}}-\nm{X}_{\ell^{2,1}})+\frac{2\gamma}{1-\rho}\sqrt{s}\nm{A(Z-X)}_{\ell^{2,2}}.
\end{align*}
This gives \eqref{rnsp21bd1}.

For \eqref{rnsp21bd2}, notice that it suffice show that
\begin{equation}
\label{rnsp21bd3}
\nm{Z-X}_{\ell^{2,2}}\leq\frac{3\rho+1}{2}\frac{\nm{Z-X}_{\ell^{2,1}}}{\sqrt{s}}+\frac{3\gamma}{2}\nm{A(Z-X)}_{\ell^{2,2}}.
\end{equation}
Once this is shown, then \eqref{rnsp21bd2} follows immediately from \eqref{rnsp21bd1}. To show \eqref{rnsp21bd3}, let $V=Z-X$ and write $v_i \in \bbC^M$ for its $i^{\rth}$ row. Let $\Delta \subseteq \{1,\ldots,N\}$ be the index set of the largest $s$ entries of $( \nm{v_i}_{\ell^2} )^{N}_{i=1}$. Then
\begin{equation*}
\nm{P_{\Delta} V}_{\ell^{2,2}}=\sqrt{\sum_{i\in\Delta}\nm{v_i}_{\ell^2}^2}\geq \sqrt{s}\min_{i\in\Delta}\nm{v_i}_{\ell^2}\geq\sqrt{s}\max_{i\notin\Delta}\nm{v_i}_{\ell^2},
\end{equation*}
which implies that
\begin{align*}
\nmu{P_{\Delta}^{\perp} V}^2_{\ell^{2,2}}=\sum_{i\notin\Delta}\nm{v_i}^2_{\ell^2}&\leq \sum_{i\notin\Delta}\nm{v_i}_{\ell^2}\max_{i\notin\Delta}\nm{v_i}_{\ell^2}
\\
& \leq \sum_{i\notin\Delta}\nm{v_i}_{\ell^2}\frac{\nm{P_{\Delta} V}_{\ell^{2,2}}}{\sqrt{s}}=\frac{\nm{P_{\Delta} V}_{\ell^{2,2}}}{\sqrt{s}}\nmu{P_{\Delta}^{\perp} V}_{\ell^{2,1}}.
\end{align*}
Now, applying Young's inequality,  we deduce that
\begin{equation*}
\nm{P_{\Delta}^{\perp} V}_{\ell^{2,2}}\leq \frac{\nm{P_{\Delta} V}_{\ell^{2,2}}}{2}+\frac{\nm{P_{\Delta}^{\perp} V}_{\ell^{2,1}}}{2\sqrt{s}}.
\end{equation*}
Hence
\eas{
\nm{V}_{\ell^{2,2}} \leq\nm{P_{\Delta} V}_{\ell^{2,2}}+\nm{P_{\Delta}^{\perp} V}_{\ell^{2,2}}
&\leq\frac{3}{2}\nm{P_{\Delta} V}_{\ell^{2,2}}+\frac{\nm{P_{\Delta}^{\perp} V}_{\ell^{2,1}}}{2\sqrt{s}}
\\
& \leq\frac{3\rho+1}{2\sqrt{s}} \nm{P_{\Delta}^{\perp} V}_{\ell^{2,1}}+\frac{3\gamma}{2}\nm{AV}_{\ell^{2,2}}.
}
Since $\nm{P_{\Delta}^{\perp} V}_{\ell^{2,1}}\leq \nm{V}_{\ell^{2,1}}$ we obtain the desired result.
\end{proof}

\subsection{Miscellaneous results}

The following is essentially \cite[Prop.\ 3]{NeedellWardTV1}, although with a couple of minor modifications. Since the proof is identical, it is omitted.

\lem{
\label{l:NWprop}
Let $\gamma \in \bbN$ and suppose that $A \in \bbC^{m \times N}$ has the RIP of order $5 k \gamma^2$ with constant $\delta \leq 1/2$.  Let $c \in \bbC^N$ and suppose that there is a set $\Delta \subseteq \{1,\ldots,N\}$ with $|\Delta| \leq k$ such that
\bes{
\nmu{P^{\perp}_{\Delta} c}_{\ell^1} \leq \gamma \nm{P_{\Delta} c}_{\ell^1} + \sigma,
}
for some $\sigma \geq 0$.    Then
\bes{
\nm{c}_{\ell^2} \lesssim \frac{\sigma}{\gamma \sqrt{k}} + \nm{A c}_{\ell^2}.
}
}

\section{Haar wavelets}\label{ss:Haarwave}

\subsection{Definitions}
The Haar scaling function and mother wavelet are defined by
\bes{
\psi^{(0)}(t) = \left \{ \begin{array}{cc} 1 & 0 \leq t < 1  \\ 0 & \mbox{otherwise} \end{array} \right . ,\qquad \psi^{(1)}(t) = \left \{ \begin{array}{cc} 1 & 0 \leq t < 1/2 \\ -1 & 1/2 \leq t < 1 \\ 0 & \mbox{otherwise} \end{array} \right .
}
For $e \in \{0,1\}$, $j,n \in \bbZ$, define $\psi^{(e)}_{j,n}(t) = 2^{j/2} \psi^{(e)}(2^j t - n)$. Then the set
\bes{
\{ \psi^{(0)}_{0,0} \} \cup \{ \psi^{(1)}_{j,n} : n = 0,\ldots,2^{j-1},\ j = 0,1,\ldots \},
}
is an orthonormal basis of $L^2([0,1])$.

Next, consider $d \geq 2$ and for $e = (e_1,\ldots,e_d) \in \{0,1\}^d$, $j \in \bbZ$ and $n = (n_1,\ldots,n_d) \in \bbZ^d$ define the function
\bes{
\psi^{(e)}_{j,n} = \psi^{(e_1)}_{j,n_1} \otimes \cdots \otimes \psi^{(e_d)}_{j,n_d},
}
where $\otimes$ denotes the tensor product. Then
\bes{
\{ \psi^{(0)}_{0,0} \} \cup \{ \psi^{(e)}_{j,n} : e  \in \{0,1\}^d \backslash \{0\},\ n = (n_1,\ldots,n_d),\ n_1,\ldots,n_d = 0,\ldots,2^j-1,\ j = 0,1,\ldots \},
}
is an orthonormal basis of $L^2([0,1]^d)$. 

Given $f \in L^2([0,1]^d)$, we may write
\bes{
f = c^{(0)}_{0,0} \psi^{(0)}_{0,0} + \sum_{e \in \{0,1\}^d \backslash \{0\}} \sum^{\infty}_{j=0} \sum_{\substack{n = (n_1,\ldots,n_d) \\ 0 \leq n_1,\ldots,n_d  < 2^j}} c^{(e)}_{j,n} \psi^{(e)}_{j,n},
}
where $c^{(e)}_{j,n} = \ip{f}{\psi^{(e)}_{j,n}}$. For convenience, we define $c_{j,n} \in \bbC^{2^d-1}$ for the vector containing the values $c^{(e)}_{j,n}$, $e \in \{0,1\}^d \backslash \{0\}$.

Let $d \geq 1$, $N = 2^r$ and consider $\bbC^{N^d}$. Let
\bes{
\phi^{(e)}_{j,n} = \mathrm{vec}(\Phi^{(e)}_{j,n}) \in \bbR^{N^d}
}
where $\Phi^{(e)}_{j,n} \in \bbR^{N \times \cdots \times N}$ with
\bes{
(\Phi^{(e)}_{j,n})_{i} =N^{d/2} \psi^{(e)}_{j,n}(i_1/N,\ldots,i_d/N),\qquad i= (i_1,\ldots,i_d) \in \{0,\ldots,N-1\}^d,
}
is the (normalized) discretization of $\psi^{(e)}_{j,n}$ on an equispaced grid of $N^d$ points on $[0,1]^d$. Then the set
\bes{
\{ \phi^{(0)}_{0,0} \} \cup \{ \phi^{(e)}_{j,n} : e  \in \{0,1\}^d \backslash \{0\},\ n = (n_1,\ldots,n_d),\ n_1,\ldots,n_d = 0,\ldots,2^j-1,\ j = 0,\ldots,r-1 \},
}
is an orthonormal basis for $\bbC^{N^d}$, the discrete Haar basis. After selecting an ordering for this basis, write $W \in \bbR^{N^d \times N^d}$ for the orthogonal matrix whose columns consist of these vectors, i.e.\ the discrete Haar wavelet transform.

\subsection{Relation to the TV semi-norm}

In the following two lemmas, $BV([0,1]^d)$ is the space of functions of bounded variation on $[0,1]^d$, and $|\cdot|_{BV}$ is the usual $BV$ semi-norm, see, for example, \cite{NeedellWardTV2}. The following can be found in \cite[Lem.\ 7]{NeedellWardTV2}:

\begin{lemma}
\label{bvtv}
Let $x=\mathrm{vec}(X)\in\mathbb{C}^{N^d}$, where $X\in\mathbb{C}^{N\times \ldots\times N}$ and $f\in BV([0,1]^d)$ be its isometric embedding as a piecewise constant function, i.e. 
\begin{equation*}
f(i/N)=N^{d/2}X_i,
\end{equation*}
where $i=(i_1,\ldots,i_d)\in \{0,\ldots,N-1\}^d$.  If $|f|_{BV}$ is the BV semi-norm of $f$, then
\begin{equation*}
|f|_{BV}\leq N^{-d/2+1}\nm{x}_{\mathrm{TV}_a}.
\end{equation*}
\end{lemma}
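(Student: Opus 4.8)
The plan is to compute $|f|_{BV}$ directly from the piecewise-constant structure of $f$ and to match it term-by-term against the periodic discrete total variation $\nm{x}_{\mathrm{TV}_a}=\nm{\nabla x}_{\ell^1}$, the only discrepancy being the wrap-around (boundary) jumps built into the periodic operator $\nabla$.

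First I would make the embedding explicit: $f$ takes the constant value $f|_{Q_i}=N^{d/2}X_i$ on each dyadic cube $Q_i=\prod_{k=1}^{d}[i_k/N,(i_k+1)/N)$, $i\in\{0,\ldots,N-1\}^d$. Since $f$ is piecewise constant with an axis-aligned jump set, its distributional gradient $Df$ is a vector measure supported on the interior faces separating adjacent cubes. On each such face the unit normal is a coordinate vector, so the total-variation density reduces to the scalar jump magnitude; in particular the isotropic and anisotropic $BV$ conventions agree in this geometry. Hence $|f|_{BV}$ equals the sum, over all interior faces, of (jump magnitude)~$\times$~($\mathcal{H}^{d-1}$-measure of the face).

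Next I would evaluate each contribution. An interior face in direction $j$ separates $Q_i$ from $Q_{i+e_j}$ with $i_j\in\{0,\ldots,N-2\}$ (here $i+e_j$ denotes incrementing the $j$-th multi-index); its jump is $N^{d/2}|X_{i+e_j}-X_i|$ and its $(d-1)$-dimensional measure is $(1/N)^{d-1}=N^{1-d}$. Summing over directions and faces gives
\[
|f|_{BV}=N^{1-d/2}\sum_{j=1}^{d}\sum_{i:\,i_j\leq N-2}|X_{i+e_j}-X_i|.
\]
I would then recall that $\nm{x}_{\mathrm{TV}_a}=\sum_{j=1}^{d}\sum_{i}|X_{i+e_j}-X_i|$, where periodicity of $\nabla_j$ contributes, for each $j$, the additional wrap-around terms with $i_j=N-1$. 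Because every summand is nonnegative, discarding these boundary terms can only decrease the sum, which yields $|f|_{BV}\leq N^{1-d/2}\nm{x}_{\mathrm{TV}_a}$, as claimed.

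The step requiring the most care is the identification of $|f|_{BV}$ with the weighted face-jump sum: one must invoke the standard characterization of the $BV$ semi-norm of a piecewise-constant function---by testing against compactly supported $C^1$ vector fields, or equivalently via a mollification argument---and verify that the coordinate-aligned geometry collapses the isotropic/anisotropic distinction. The fact that the conclusion is an inequality rather than an equality is precisely an artifact of the periodic boundary conditions: the continuous embedding on $[0,1]^d$ has no jump across the outer boundary of the domain, so the wrap-around differences that $\nm{\cdot}_{\mathrm{TV}_a}$ counts are simply dropped.
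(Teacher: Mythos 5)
Your proposal is correct. Note that the paper does not actually prove this lemma---it imports it verbatim from \cite[Lem.\ 7]{NeedellWardTV2}---so there is no internal proof to compare against; your direct computation (jump magnitude $N^{d/2}|X_{i+e_j}-X_i|$ times face measure $N^{1-d}$, summed over interior faces, then bounded by the periodic discrete TV after discarding the nonnegative wrap-around terms) is precisely the standard argument underlying the cited result, and you correctly identify both the reason the bound is an inequality rather than an equality (the periodic boundary terms in $\nabla$) and the reason the isotropic/anisotropic distinction is immaterial here (the jump set is axis-aligned, so the gradient measure has a single nonzero component at $\mathcal{H}^{d-1}$-a.e.\ point).
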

The following result illustrates the relation between Haar coefficients and the BV semi-norm (see, for instance, \cite[Prop.\ 8]{NeedellWardTV2}):

\begin{lemma}
\label{cbbv}
Let $d\geq 2$. There exists a constant $C>0$ such that the following holds for all mean-zero $f\in BV([0,1]^d)$. Let $c^{(e)}_{j,n}$ be the Haar wavelet coefficients of $f$ and $c_{j,n} \in \bbC^{2^d-1}$ be the vector of values $c^{(e)}_{j,n}$, $e \in \{0,1\}^d \backslash \{0\}$. Let $c_{(1)},c_{(2)},\ldots$ be a reordering of these vectors so that $\nmu{c_{(1)}}_{\ell^2} \geq \nmu{c_{(2)}}_{\ell^2} \geq \ldots $. Then
\bes{
|c_{(k)}| \lesssim \frac{|f|_{BV}}{k^{3/2}},\qquad d = 1,
}
and
\begin{equation*}
\nmu{c_{(k)}}_{\ell^2}\lesssim \frac{2^{j_k(d-2)/2}|f|_{BV}}{k},\qquad d \geq 2,
\end{equation*}
where $j_k$ is the scale corresponding to $c_{(k)}$.
\end{lemma}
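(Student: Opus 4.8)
The plan is to reduce the estimate to three ingredients: a pointwise bound on each Haar coefficient block, a summation over each individual dyadic scale, and finally a cross-scale rearrangement argument. In the form stated this is essentially \cite[Prop.~8]{NeedellWardTV2}, which in turn rests on the $BV$--wavelet theory of Cohen, DeVore, Petrushev and Xu, and I would follow that route rather than attempt a self-contained proof.

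First I would establish a pointwise bound on each block. Fix a scale $j$ and a position $n$, and let $Q_{j,n} \subset [0,1]^d$ be the dyadic cube of side $2^{-j}$ supporting $\psi^{(e)}_{j,n}$. Each $\psi^{(e)}_{j,n}$ with $e \neq 0$ has mean zero and satisfies $\nm{\psi^{(e)}_{j,n}}_{L^\infty} = 2^{jd/2}$, so subtracting the average $\bar f_{Q_{j,n}}$ and invoking the $BV$ Poincar\'e inequality on a cube of diameter $\sim 2^{-j}$ gives
\[
|c^{(e)}_{j,n}| = \Bigl| \int_{Q_{j,n}} (f - \bar f_{Q_{j,n}})\, \psi^{(e)}_{j,n} \Bigr| \le 2^{jd/2}\, \nm{f - \bar f_{Q_{j,n}}}_{L^1(Q_{j,n})} \lesssim_d 2^{j(d-2)/2} |f|_{BV(Q_{j,n})}.
\]
Taking the $\ell^2$-norm over the $2^d-1$ directions $e$ yields $\nm{c_{j,n}}_{\ell^2} \lesssim_d 2^{j(d-2)/2} |f|_{BV(Q_{j,n})}$ (for $d=1$ the exponent is $-1/2$). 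Second, I would sum over a single scale: since the cubes $\{Q_{j,n}\}_n$ are disjoint and $|Df|$ is a measure, $\sum_n |f|_{BV(Q_{j,n})} \le |f|_{BV}$, whence
\[
\sum_n \nm{c_{j,n}}_{\ell^2} \lesssim_d 2^{j(d-2)/2} |f|_{BV}.
\]
This is a per-scale $\ell^1$ bound, and it immediately gives the weak-type estimate that the number of scale-$j$ blocks of norm at least $\lambda$ is $\lesssim_d 2^{j(d-2)/2}|f|_{BV}/\lambda$, while being trivially at most $2^{jd}$.

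The genuinely hard step is converting these per-scale estimates into a bound on the globally rearranged sequence $c_{(1)}, c_{(2)}, \ldots$, and this is where the dimension enters. In $d=1$ the argument is routine: the coarse-scale count bound $2^j$ and the fine-scale weak bound $2^{-j/2}|f|_{BV}/\lambda$ are two geometric sums meeting at a single crossover scale, which produces $|c_{(k)}| \lesssim |f|_{BV}/k^{3/2}$ with no loss. The case $d=2$ is the delicate one: the fine-scale weak bound is then $|f|_{BV}/\lambda$, independent of $j$, so a naive sum over the $\sim \log N$ scales loses a logarithmic factor, and removing it to reach the sharp $\nm{c_{(k)}}_{\ell^2} \lesssim |f|_{BV}/k$ is exactly the weak-$\ell^1$ embedding of $BV([0,1]^2)$ into Haar coefficients (due to Cohen, DeVore, Petrushev and Xu, on which \cite[Prop.~8]{NeedellWardTV2} relies). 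I would invoke that result rather than reprove it.

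For $d \ge 3$ the fine-scale weak bound $2^{j(d-2)/2}|f|_{BV}/\lambda$ grows in $j$, so no scale-free estimate can hold; this forces the sharp statement to carry the weight $2^{j_k(d-2)/2}$ attached to the scale $j_k$ of the $k$-th block, which is precisely the form recorded in the lemma. I would obtain it by the same rearrangement machinery, tracking the scale weight through the argument. The main obstacle throughout is this cross-scale coupling: bounding the global rank $k$ by the combined per-scale counts without incurring spurious logarithmic factors. An elementary per-scale argument does \emph{not} suffice for $d \ge 2$, since it only controls the within-scale rank of $c_{(k)}$, which can be far smaller than its global rank; coupling the two is the substance of the $BV$--wavelet estimates, and it is for this reason that I would cite \cite{NeedellWardTV2} rather than reconstruct the proof in full.
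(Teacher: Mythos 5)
Your proposal is correct and matches the paper's treatment: the paper does not prove this lemma either, but simply cites \cite[Prop.\ 8]{NeedellWardTV2} (which in turn rests on the Cohen--DeVore--Petrushev--Xu $BV$--wavelet estimates), exactly as you do for the hard cross-scale rearrangement step. Your preliminary reductions (the Poincar\'e block bound, the per-scale $\ell^1$ sum, and the weak-type counting) are all sound and correctly identify where the genuine difficulty lies.
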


\subsection{The Fourier transform of a Haar wavelet}

Finally, we also give the following:

\begin{lemma}
\label{fhmeb}
Let $F$ be the one-dimensional DFT matrix, $\{\psi^{(e)}_{j,n}\}$ be the one-dimensional discrete Haar wavelet basis and $\varrho$ be defined as in \eqref{fourier1}. Then for $j=0,\ldots,r-1$, $n=0,\ldots,2^j-1$, $e\in\{0,1\}$ and any $\omega\in\{-N/2+1,\ldots,N/2\}$ we have
\begin{equation}
\label{painfulgeosum}
\frac{1}{\sqrt{N}}|(F\psi^{(e)}_{j,n})_{\varrho^{-1}(\omega)}|\lesssim\frac{1}{\sqrt{\overline{\omega}}}\min\left\{\left(\frac{2^j}{\overline{\omega}}\right)^{1/2},\left(\frac{\overline{\omega}}{2^j}\right)^{1/2+e}\right\}.
\end{equation}
In particular,
\be{
\label{annoyingcdef}
\frac{1}{\sqrt{N}}|(F\psi^{(e)}_{j,n})_{\varrho^{-1}(\omega)}| \lesssim \min\left\{\frac{2^{j/2}}{\overline{\omega}},\frac{\overline{\omega}^e}{2^{j(e+1/2)}}\right\}
 \lesssim \min\left\{ \frac{2^{j/2}}{\overline{\omega}},\frac{1}{2^{j/2}}\right\} = \frac{2^{j/2}}{\max \{ \overline{\omega} , 2^{j} \}}.
}
\end{lemma}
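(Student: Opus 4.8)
The plan is to compute the discrete Fourier transform of the (unit-normalized) discrete Haar wavelet explicitly as a geometric sum and then estimate it with elementary trigonometric inequalities. Recall that the discrete wavelet $\psi^{(e)}_{j,n} \in \bbC^{N}$ is supported on the dyadic block of $M := N/2^j$ consecutive indices corresponding to the support $[n/2^j,(n+1)/2^j)$ of the continuous wavelet, on which it takes the constant value $2^{j/2}/\sqrt{N}$ when $e=0$, and the values $\pm 2^{j/2}/\sqrt{N}$ on its two halves of length $M/2$ when $e=1$. Since $0\leq j\leq r-1$ we have $M=2^{r-j}\geq 2$, so $M/2$ is a positive integer. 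First I would write, for $\omega\neq 0$,
\[
(F\psi^{(e)}_{j,n})_{\varrho^{-1}(\omega)} = \frac{2^{j/2}}{\sqrt{N}}\, e^{-2\pi \mathrm{i}\omega a/N}\sum_{p}(\pm)\,e^{-2\pi \mathrm{i}\omega p/N},
\]
where $a=nM$ is the left endpoint of the block; the phase $e^{-2\pi \mathrm{i}\omega a/N}$ has modulus one and so lets me take $n=0$ without loss of generality for the magnitude bound.

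Summing the geometric series and using $\big|\sum_{p=0}^{L-1}e^{-2\pi \mathrm{i}\omega p/N}\big| = |\sin(\pi\omega L/N)|/|\sin(\pi\omega/N)|$, together with the identity $|1-e^{-\pi \mathrm{i}\omega M/N}| = 2|\sin(\pi\omega/2^{j+1})|$ needed for the $e=1$ case, I obtain the closed forms
\[
\frac{1}{\sqrt{N}}\big|(F\psi^{(0)}_{j,n})_{\varrho^{-1}(\omega)}\big| = \frac{2^{j/2}}{N}\frac{|\sin(\pi\omega/2^j)|}{|\sin(\pi\omega/N)|},\qquad
\frac{1}{\sqrt{N}}\big|(F\psi^{(1)}_{j,n})_{\varrho^{-1}(\omega)}\big| = \frac{2\cdot 2^{j/2}}{N}\frac{\sin^2(\pi\omega/2^{j+1})}{|\sin(\pi\omega/N)|}.
\]
The case $\omega=0$ (so that $\overline{\omega}=1$) I would treat separately: the $e=0$ block sum equals $M$, giving the value $2^{-j/2}$, while the $e=1$ block sum vanishes by the mean-zero property; both are consistent with the claimed bounds.

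Next I would apply $|\sin\theta|\leq\min\{1,|\theta|\}$ (hence $\sin^2\theta\leq\min\{1,\theta^2\}$) to the numerators, and Jordan's inequality $|\sin(\pi\omega/N)|\geq 2|\omega|/N$, valid since $\pi\omega/N\in(-\pi/2,\pi/2]$, to the denominator. Bounding the $e=0$ numerator by $1$ gives $\lesssim 2^{j/2}/\overline{\omega}$ and by $\pi|\omega|/2^j$ gives $\lesssim 2^{-j/2}$; bounding the $e=1$ numerator $\sin^2(\pi\omega/2^{j+1})$ by $1$ gives $\lesssim 2^{j/2}/\overline{\omega}$ and by $(\pi\omega/2^{j+1})^2$ gives $\lesssim \overline{\omega}/2^{3j/2}$. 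Taking minima reproduces exactly the two cases of \eqref{painfulgeosum}, since $\overline{\omega}^{-1/2}(2^j/\overline{\omega})^{1/2} = 2^{j/2}/\overline{\omega}$ and $\overline{\omega}^{-1/2}(\overline{\omega}/2^j)^{1/2+e} = \overline{\omega}^{\,e}/2^{j(e+1/2)}$.

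Finally, \eqref{annoyingcdef} follows from \eqref{painfulgeosum} by rewriting its right-hand side as $\min\{2^{j/2}/\overline{\omega},\ \overline{\omega}^{\,e}/2^{j(e+1/2)}\}$ and checking the elementary inequality $\min\{2^{j/2}/\overline{\omega},\ \overline{\omega}^{\,e}/2^{j(e+1/2)}\}\lesssim\min\{2^{j/2}/\overline{\omega},\ 2^{-j/2}\}$: for $e=0$ the two sides coincide, while for $e=1$ one splits into $\overline{\omega}\leq 2^j$ (so that $\overline{\omega}/2^{3j/2}\leq 2^{-j/2}$) and $\overline{\omega}>2^j$ (so that $2^{j/2}/\overline{\omega}< 2^{-j/2}$); the closing identity $\min\{2^{j/2}/\overline{\omega},2^{-j/2}\} = 2^{j/2}/\max\{\overline{\omega},2^j\}$ is then immediate. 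The main obstacle is purely the careful bookkeeping of the two partial geometric sums and the trigonometric identity producing the $\sin^2$ factor in the $e=1$ case (the ``painful geometric sum'' of the label); once the closed forms are in hand the estimates are routine. I would also double-check the normalization at the outset, since these magnitude bounds hold only for the $\ell^2$-normalized discrete wavelet, with constant modulus $2^{j/2}/\sqrt{N}$ on its support.
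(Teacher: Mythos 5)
Your proof is correct and follows essentially the same route as the paper's: writing the transform as explicit geometric sums (treating $\omega=0$ separately), bounding the resulting sine quotients via $|\sin \theta|\leq\min\{1,|\theta|\}$ in the numerator and a Jordan-type lower bound in the denominator, and splitting into the regimes $\overline{\omega}\leq 2^j$ and $\overline{\omega}>2^j$. The only cosmetic difference is that the paper keeps a single factored closed form $\bigl(1+(-1)^e\E^{-2\pi\I\omega/2^{j+1}}\bigr)\bigl(1-\E^{-2\pi\I\omega/2^{j+1}}\bigr)/\bigl(1-\E^{-2\pi\I\omega/2^r}\bigr)$ covering both $e=0,1$ at once, whereas you compute the two cases as separate closed forms; the estimates are otherwise identical.
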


We recall here the definition $\bar{\omega} = \max\{ 1, |\omega|\}$, and that the rows of $F$ are indexed over $\{1,\ldots,N\}$; hence the use of the bijection $\varrho$. 
The calculations that lead to this lemma can be found in, for instance, \cite{DiscreteFourierHaar,KrahmerWardCSImaging}. For completeness we give the proof:

\prf{
[Proof of Lemma \ref{fhmeb}]
We proceed by direct calculation.  We have
\eas{
(F \psi^{(e)}_{j,n})_{\varrho^{-1}(\omega)} =& 2^{\frac{j-r}{2}}\sum_{n 2^{r-j} < t \leq (n+1/2) 2^{r-j} } \E^{-2 \pi \I \omega (t-1)/N} 
\\
& +(-1)^e 2^{\frac{j-r}{2}} \sum_{ (n+1/2) 2^{r-j} < t \leq (n+1) 2^{r-j} } \E^{-2 \pi \I \omega(t-1)/N}
\\
= & 2^{\frac{j-r}{2}} \E^{-2 \pi \I \omega n 2^{r-j}/N} \sum^{2^{r-j-1}-1}_{s=0} \E^{-2 \pi \I \omega s/N}
\\
& +(-1)^e 2^{\frac{j-r}{2}} \E^{-2 \pi \I \omega(n+1/2) 2^{r-j}/N}  \sum^{2^{r-j-1}-1}_{s=0} \E^{-2 \pi \I \omega s/N}.
}
Hence 
\bes{
(F \psi^{(e)}_{j,n})_{\varrho^{-1}(0)} = \left \{ \begin{array}{ll} 2^{\frac{r-j}{2}} & e = 0  \\ 0 & \mbox{otherwise} \end{array} \right . ,
}
and for $\omega \in \{-N/2+1,\ldots,N/2\} \backslash \{ 0\}$,
\be{
\label{psieFTstep}
(F \psi^{(e)}_{j,n})_{\varrho^{-1}(\omega)} = 2^{j/2-r/2} \E^{-2 \pi \I \omega n / 2^j} \left ( 1 +(-1)^e\E^{-2\pi \I \omega/ 2^{j+1}} \right ) \left ( \frac{1-\E^{-2 \pi \I \omega/ 2^{j+1}}}{1-\E^{-2 \pi \I \omega/ 2^r}} \right ).
}
Observe that \eqref{painfulgeosum} trivially holds when $\omega = 0$.  Hence we now consider $\omega \neq 0$.  By \eqref{psieFTstep},
\bes{
\frac{1}{\sqrt{N}} \left | (F \psi^{(e)}_{j,n})_{\varrho^{-1}(\omega)} \right | \leq 2^{j/2-r} \frac{| \sin(\pi \omega/2^{j+1}) |^{1+e}}{|\sin(\pi \omega/2^r) |}.
}
Suppose first that $1 \leq |\omega| < 2^j$.  Then, since $| \sin(\pi z) | \leq \pi | z |$, $\forall z \in \bbR$, and $| \sin(\pi z) | \geq 2 |z|$ for $|z| \leq 1/2$, we have
\bes{
\frac{1}{\sqrt{N}} \left | (F \psi^{(e)}_{j,n})_{\varrho^{-1}(\omega)} \right | \lesssim 2^{j/2-r} \frac{(|\omega|/2^j)^{1+e}}{|\omega|/2^r} = 2^{-j/2} \left(|\omega|/2^j \right)^{e}.
}
Conversely, if $2^j \leq |\omega| \leq 2^{r-1}$ then we use the bound $|\sin(\pi z)|\leq 1$, $\forall z \in \bbR$, to obtain
\bes{
\frac{1}{\sqrt{N}} \left | (F \psi^{(e)}_{j,n})_{\varrho^{-1}(\omega)} \right | \lesssim 2^{j/2-r} \frac{1}{|\omega|/2^r} = 2^{-j/2} \frac{2^j}{|\omega|}.
}
This gives the result.
}

\section{Proof of selected results from \S \ref{s:Foursamp} and \S \ref{s:BeyondFourierTV}}\label{s:proofsrest}

\prf{[Proof of Lemma \ref{l:optimalrule}]
Notice that $q_{\omega} = \overline{\omega_1}$ whenever $\omega = (\omega_1 , 0 ,\ldots, 0 )$. Hence
\bes{
\sum_{\omega} (q_{\omega})^{-2} \geq \sum^{N/2}_{t=1} \frac1t \gtrsim \log(N).
}
Since $p = (p_{\omega})$ is a probability distribution, i.e.\ $\sum_{\omega} p_{\omega} = 1$, we deduce that $\Gamma(p) \gtrsim \log(N)$. 

Now consider the upper bound. Suppose first that $d$ is even. Then there are $d!$ different nonincreasing rearrangements $\pi$. Hence
\eas{
\sum_{\omega} (q_{\omega})^{-2} & \lesssim_d \sum^{N/2}_{t_1=1} \sum^{t_1}_{t_2=1} \cdots \sum^{t_{d-1}}_{t_d=1} \frac{1}{(t_1 \cdots t_{d/2})^2}
 \leq \sum^{N/2}_{t_1=1} \sum^{t_1}_{t_2=1} \cdots \sum^{t_{d/2-1}}_{t_{d/2} =1} \frac{(t_{d/2})^{d/2}}{(t_1 \cdots t_{d/2})^2} = F_{d/2}(N/2),
}
where
\bes{
F_m(N) = \sum^{N}_{t_1=1} \sum^{t_1}_{t_2=1} \cdots \sum^{t_{m-1}}_{t_m=1} \frac{(t_m)^m}{(t_1 \cdots t_m)^2}.
}
Similarly, if $d$ is odd we have
\eas{
\sum_{\omega} (q_{\omega})^{-2} & \lesssim_d \sum^{N/2}_{t_1=1} \sum^{t_1}_{t_2=1} \cdots \sum^{t_{d-1}}_{t_d=1} \frac{1}{(t_1 \cdots t_{(d-1)/2})^2 t_{(d+1)/2} }
\\
& \leq \sum^{N/2}_{t_1=1} \sum^{t_1}_{t_2=1} \cdots \sum^{t_{(d-1)/2}}_{t_{(d+1)/2}=1} \frac{(t_{(d+1)/2})^{(d-1)/2}}{(t_1 \cdots t_{(d-1)/2})^2 t_{(d+1)/2} }
\\
& = F_{(d+1)/2}(N/2).
}
We now show that $F_m(N) \lesssim_m \log(N)$ for any $m \in \bbN$. When $m = 1$ the result is is trivial. Now consider $m \geq 2$. We have
\bes{
F_m(N) \lesssim_m \sum^{N}_{t_1=1} \sum^{t_1}_{t_2=1} \cdots \sum^{t_{m-2}}_{t_{m-1} = 1} \frac{(t_{m-1})^{m-1}}{(t_1 \cdots t_{m-1})^2} = F_{m-1}(N).
}
Hence the result follows by induction. Therefore, for either even or odd $d$, we have shown that
\bes{
\sum_{\omega} (q_{\omega})^{-2} \lesssim_d \log(N).
}
Since $p = (p_{\omega})$ is a probability distribution the result now follows.
}

\prf{
[Proof of Lemma \ref{l:radialsubopt}]
Since all norms are equivalent on $\bbR^d$, we may without loss of generality consider the $\ell^{\infty}$-norm. We first estimate the constant $C_{N,d,\alpha}$. Hence
\be{
\label{borederaser}
(C_{N,d,\alpha})^{-1} \asymp_{d} \sum^{N/2}_{t_1 = 1} \sum^{t_1}_{t_2 = 1} \ldots \sum^{t_{d-1}}_{t_d=1} \frac{1}{(t_d)^{\alpha}} \asymp_{d,\alpha} \sum^{N/2}_{t_1 =1} (t_d)^{d-1-\alpha} \asymp_{d,\alpha} \left \{ \begin{array}{cc} N^{d-\alpha} & \alpha < d \\ \log(N) & \alpha = d \\ 1 & \alpha > d  \end{array} \right . .
}
Next, observe that $\Gamma(p)$ is defined by
\bes{
\Gamma(p) C_{N,d,\alpha} =  \max_{\omega} \frac{(1+\nm{\omega}_{\ell^{\infty}})^\alpha }{(q_{\omega})^2} \asymp_{\alpha} \max_{\omega} \frac{(\overline{\omega_{\pi(1)}})^{\alpha}}{(q_{\omega})^2 }.
}
We now split into two cases. Suppose first that $\alpha \geq 2$. Then, using the definition of $q_{\omega}$, we see that the maximum is attained at $\omega = (N/2,0,\ldots,0)$, giving
\bes{
\Gamma(p) C_{N,d,\alpha} \asymp_{\alpha} N^{\alpha-2}.
}
Conversely, when $\alpha < 2$ the maximum is attained when $\omega = (0,\ldots,0)$, giving
\bes{
\Gamma(p) C_{N,d,\alpha} \asymp_{\alpha} 1.
}
We now combine these two estimates with \eqref{borederaser} to get the result.
}

\prf{
[Proof of Corollary \ref{c:HCnearoptimal}]
Observe that
\bes{
(C_{N,d})^{-1} = \sum_{\omega} \frac{1}{\prod^{d}_{j=1} \overline{\omega_j}} \asymp_{d} \left ( \sum^{N/2}_{t=1} \frac{1}{t} \right )^d \asymp_{d} \log^{d}(N).
}
Moreover, using the fact that $\overline{\omega_{\pi(j)}} \geq \sqrt{\overline{\omega_{\pi(j)}} \overline{\omega_{\pi(d/2+j)}}}$ when $d$ is even, and similarly for $d$ odd, we deduce that
\bes{
(q_{\omega})^{2} \geq \overline{\omega_1} \cdots \overline{\omega_d} = \frac{C_{N,d}}{p_{\omega}},
}
and therefore $\Gamma(p) \leq (C_{N,d})^{-1}$. Hence $\Gamma(p) \asymp_d \log^d(N)$, which gives the result.
}

\prf{[Proof of Lemma \ref{l:GammaWalsh}]
This follows immediately from \eqref{borederaser}.
}

\bibliographystyle{siamplain}
\bibliography{TVguarantees}

\end{document}